\newtheorem{theorem}{Theorem}
\newtheorem{definition}{Definition}
\newtheorem{remark}{Remark}
\newtheorem{corollary}{Corollary}
\newtheorem{lemma}{Lemma}
\newtheorem{proposition}{Proposition}
\journal{Finite Fields and Their Applications }
\begin{document}
\begin{frontmatter}
\title{Three Classes of Twisted Gabidulin Codes with Different Twists
\footnote{$^\dag$This research is supported by the National Key Research and Development Program of China (Grant No. 2022YFA1005000), the National Natural Science Foundation of China (Grant Nos. 12141108, 62371259, 12411540221, 62201322), the Fundamental Research Funds for the Central Universities of China (Nankai University), the Nankai Zhide Foundation.}}
\author[1]{Ran Li\corref{cor}}
\ead{ran111212@163.com}
\author[1]{Fang-Wei Fu}
\ead{fwfu@nankai.edu.cn}
\author[2,3,4]{Weijun Fang}
\ead{fwj@sdu.edu.cn}
{\cortext[cor]{Corresponding author.}}
\address[1]{Chern Institute of Mathematics and LPMC, Nankai University, Tianjin, 300071, China}
\address[2]{State Key Laboratory of Cryptography and Digital Economy Security, Shandong University, Qingdao, 266237, China\\}
\address[3]{Key Laboratory of Cryptologic Technology and Information Security, Ministry of Education, Shandong University, Qingdao, 266237, China}
\address[4]{School of Cyber Science and Technology, Shandong University, Qingdao, 266237, China}

\begin{abstract}
Twisted Gabidulin codes are an extension of Gabidulin codes and have recently attracted great attention. In this paper, we study three classes of twisted Gabidulin codes with different twists. Moreover, we establish necessary and sufficient conditions for them to be maximum rank distance (MRD) codes, determine the conditions under which they are not MRD codes, and construct several classes of MRD codes via twisted Gabidulin codes. In addition, considering these codes in the Hamming metric, we provide necessary and sufficient conditions for them to be maximum distance separable (MDS), almost MDS, or near MDS. Finally, we investigate the covering radii and deep holes of twisted Gabidulin codes.
\end{abstract}
\begin{keyword}
Rank metric codes, Gabidulin codes, twisted Gabidulin codes, MRD codes, covering radius, deep hole
\end{keyword}
\end{frontmatter}
\section{Introduction}
Rank metric codes have received widespread attention in coding theory since they were introduced in the seminal work of Delsarte \cite{dels}. With their error-correction capability, these codes offer distinct advantages in network coding \cite{network}, space-time coding \cite{space}, and cryptographic applications \cite{crypto}. Many important properties of rank metric codes, including the Singleton-like bound, were independently studied by Delsarte \cite{dels}, Gabidulin \cite{1985G} and Roth \cite{roth}. A rank metric code achieving the Singleton-like bound is termed a maximum rank distance (MRD) code.

In 1985, Gabidulin \cite{1985G} proposed a class of rank metric codes, known as Gabidulin codes, which are MRD codes. Gabidulin codes are evaluation codes of linearized polynomials \cite{ore}, defined using the Frobenius automorphism of a finite field extension $\mathbb{F}_{q^{m}}/\mathbb{F}_{q}$. In \cite{kshe,rothrm}, the Frobenius automorphism in Gabidulin codes was generalized to arbitrary automorphism, and generalized Gabidulin codes were proposed. In recent years, twisted codes have been widely studied. In \cite{shee}, Sheekey modified the evaluation polynomials of a Gabidulin code by adding an extra monomial, resulting in twisted Gabidulin codes. Choosing the parameters of the twisted Gabidulin codes in an appropriate way can ensure that the codes are MRD.  Using the same idea for generalizing Gabidulin codes, Sheekey constructed generalized twisted Gabidulin codes in \cite{shee}, which were later extensively studied in \cite{GTG,GTG2,GTG3}. In \cite{otal}, a large family of MRD codes, called additive generalized twisted Gabidulin codes, was introduced. This family contains all the aforementioned linear MRD codes as subfamilies, as well as new additive MRD codes. Additionally, the idea of twisted Gabidulin codes was adapted to the Hamming metric and further generalized in \cite{trs1} and \cite{strs}, resulting in the so-called twisted Reed-Solomon codes. Later, along this line of research, a large number of meaningful results on twisted Reed-Solomon codes have been obtained in \cite{trs,1,2,3,4}.

In coding theory, covering radii and deep holes are important parameters for linear codes. The covering radius of a code is the least positive integer $\rho$ such that the union of the spheres of radius $\rho$ centered at each codeword is equal to the full ambient space. This fundamental coding theoretical parameter has been widely studied for codes with respect to the Hamming metric \cite{dp1,h1,h3,h4,h5,h7}, but only a few papers on the subject have appeared in the literature on rank metric codes \cite{procr,end,st,fu,packpro}. For the rank metric, if the rank distance from a vector to the code $\mathcal{C}$ achieves the covering radius of the code, the vector is called a deep hole of the code $\mathcal{C}$.

 In this paper, we study twisted Gabidulin codes with one twist, two twists and $\ell$ twists. Specifically, we derive necessary and sufficient conditions for them to be MRD codes, determine when they are not MRD codes, and construct several classes of MRD twisted Gabidulin codes. Additionally, we study these codes in the Hamming metric, and we establish necessary and sufficient conditions for them to be MDS, AMDS, or NMDS. Finally, we discuss their covering radii and obtain some deep holes for certain twisted Gabidulin codes.

The rest of the paper is organized as follows. In Section 2, we provide the necessary notations, definitions and preliminary results.
In Section 3, we establish the necessary and sufficient conditions for twisted Gabidulin codes with one twist to be MRD, MDS, AMDS, or NMDS. Moreover, we determine the conditions under which they are not MRD. In Section 4, we present the necessary and sufficient conditions for twisted Gabidulin codes with two twists to be MRD, MDS, AMDS, or NMDS. Furthermore, we determine when they are not MRD and construct a class of MRD codes. In Section 5, we derive the necessary and sufficient conditions for twisted Gabidulin codes with $\ell$ twists to be MRD, MDS, AMDS, or NMDS. We also establish the conditions under which they are not MRD and construct two classes of MRD codes. In Section 6, we study the covering radii and deep holes of twisted Gabidulin codes. In Section 7, we conclude the paper.

\section{Preliminaries}
In this section, we introduce some notations, definitions and results which will be used in subsequent sections.
\subsection{Notations}
Let $\mathbb{F}_{q}$ be a
finite field with $q$ elements and $\mathbb{F}_{q^{m}}$ be an extension field of $\mathbb{F}_{q}$ of degree $m$, where $q$ is a prime power and $m$ is a positive integer. Let $\mathbb{F}_{q^m}^{*}:=\mathbb{F}_{q^m}\backslash \{0\}$, $\mathbb{F}_{q^{m}}^{n}$ denote the $n$-dimensional vector space over $\mathbb{F}_{q^{m}}$, and $\mathbb{F}_{q^m}^{k \times n}$ denote the set of $k\times n$ matrices over  $\mathbb{F}_{q^m}$.

An  $[n, k]$  linear code  $\mathcal{C}$  with length  $n$  and dimension  $k$  over  $\mathbb{F}_{q^m}$  is a  $k$-dimensional subspace of  $\mathbb{F}_{q^m}^{n}$. If the row vectors of a matrix $G \in \mathbb{F}_{q^m}^{k \times n}$ form a basis of the code $\mathcal{C}$, then $G$ is called the generator matrix of $\mathcal{C}$. The dual
code $\mathcal{C}^{\perp}$ of $\mathcal{C}$ is the orthogonal complement space of $\mathcal{C}$ under the Euclidean inner product over $\mathbb{F}_{q^m}^{n}$.

 For a vector $\boldsymbol{x}\in \mathbb{F}_{q^m}^{n}$, the Hamming weight of $\boldsymbol{x}$, denoted by $\mathrm{wt}_{H}(\boldsymbol{x})$, is the number of non-zero components of $\boldsymbol{x}$. For two vectors $\boldsymbol{x},\boldsymbol{y}\in \mathbb{F}_{q^m}^{n}$, the Hamming distance between them is defined as $d_H(\boldsymbol{x},\boldsymbol{y}):=\mathrm{wt}_{H}(\boldsymbol{x}-\boldsymbol{y})$. Let $\mathcal{C}$ be an $[n,k]$ linear code over $\mathbb{F}_{q^{m}}$, the minimum Hamming distance of $\mathcal{C}$ is defined as $d_H(\mathcal{C}):=\mathrm{min} \left\{\mathrm{wt}_{H}(\boldsymbol{c}) : \boldsymbol{c}\in \mathcal{C}\setminus \{\boldsymbol{0}\} \right\}.$ For the linear code $\mathcal{C}$, its minimum Hamming distance satisfies the Singleton bound $d_H(\mathcal{C})\le n-k+1.$ If $d_H(\mathcal{C})=n-k+1$, then $\mathcal{C}$ is called a maximum distance separable (MDS) code. If $d_H(\mathcal{C})=n-k$, then $\mathcal{C}$ is called an almost MDS (AMDS) code. If $\mathcal{C}$ and $\mathcal{C}^{\perp}$ are both AMDS, then $\mathcal{C}$ is called a near MDS (NMDS) code.

\begin{proposition}[\cite{dp2}]\label{gmds}
Let $\mathcal{C}\subseteq \mathbb{F}_{q^{m}}^{n}$ be an $[n,k]$ linear code with a generator matrix $G$. Then $\mathcal{C}$ is an MDS code if and only if all maximal minors of $G$ are non-zero.
\end{proposition}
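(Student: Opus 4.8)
The plan is to connect the minimum Hamming distance of $\mathcal{C}$ with the linear (in)dependence of the columns of $G$, and then translate the statement ``every $k$ columns are independent'' into ``every maximal minor is nonzero.'' Since the Singleton bound already gives $d_H(\mathcal{C}) \le n-k+1$, being MDS is equivalent to $d_H(\mathcal{C}) > n-k$, i.e. to the absence of any nonzero codeword of Hamming weight at most $n-k$. Thus the whole proposition reduces to the single equivalence
$$\text{some maximal minor of } G \text{ vanishes} \iff \mathcal{C} \text{ contains a nonzero codeword of weight} \le n-k.$$

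First I would set up the correspondence between codewords and their supports. Write each codeword as $\boldsymbol{c} = \boldsymbol{u}G$ with $\boldsymbol{u} \in \mathbb{F}_{q^m}^{k}$; because the rows of $G$ are linearly independent, $\boldsymbol{u}\mapsto \boldsymbol{u}G$ is a bijection onto $\mathcal{C}$. For a coordinate set $S \subseteq \{1,\dots,n\}$, let $G_S$ denote the $k \times |S|$ submatrix of $G$ formed by the columns indexed by $S$. The key observation is that $\boldsymbol{c}=\boldsymbol{u}G$ vanishes on every coordinate of $S$ precisely when $\boldsymbol{u}G_S = \boldsymbol{0}$, i.e. when $\boldsymbol{u}$ lies in the left kernel of $G_S$. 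Hence a nonzero codeword vanishing on $S$ exists if and only if the rows of $G_S$ are linearly dependent, which is equivalent to $\mathrm{rank}(G_S) < k$. Specializing to weight, a nonzero codeword has weight $\le n-k$ if and only if it vanishes on at least $k$ coordinates, hence on some set $S$ with $|S|=k$; conversely vanishing on a size-$k$ set forces weight $\le n-k$.

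Finally I would chain the equivalences. Being MDS is equivalent to the nonexistence of a nonzero codeword of weight $\le n-k$, which by the previous step is equivalent to: no size-$k$ column set $S$ satisfies $\mathrm{rank}(G_S)<k$; that is, every $k$ columns of $G$ are linearly independent. Since such a $G_S$ is a square $k\times k$ matrix, $\mathrm{rank}(G_S)<k$ is exactly the vanishing of the corresponding maximal minor $\det(G_S)$, so this last condition is precisely that every maximal minor of $G$ is nonzero. As each step above is an equivalence, both directions of the proposition follow at once.

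The argument is essentially routine linear algebra, so I do not anticipate a genuine obstacle. The only point demanding care is the rank bookkeeping: in the step ``$\boldsymbol{u}G_S=\boldsymbol{0}$ has a nonzero solution iff $\mathrm{rank}(G_S)<k$'' one must work with the left kernel (row dependencies) of the $k\times|S|$ matrix $G_S$ rather than its column dependencies, and one must restrict to $|S|=k$ so that $G_S$ is square and rank deficiency coincides exactly with the vanishing of a maximal minor. Keeping these dimensions straight is the crux of an otherwise standard verification.
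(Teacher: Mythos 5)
Your proof is correct. Note that the paper itself offers no proof of this proposition at all: it is stated as a known result imported from the textbook reference \cite{dp2}, so there is no internal argument to compare against. Your chain of equivalences --- MDS $\iff$ no nonzero codeword of weight $\le n-k$ $\iff$ no $k$-subset $S$ of coordinates for which the square submatrix $G_S$ has nontrivial left kernel $\iff$ every maximal minor $\det(G_S)$ is nonzero --- is exactly the standard textbook derivation, and the steps that require care (using the full row rank of $G$ so that $\boldsymbol{u}\mapsto\boldsymbol{u}G$ is injective, and working with the left kernel of the square matrix $G_S$ so that rank deficiency coincides with a vanishing determinant) are all handled correctly.
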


Throughout this paper, we fix some notations for convenience.
\begin{itemize}
\item For $\alpha\in \mathbb{F}_{q^{m}}$, let $\sigma(\alpha)=\alpha^{q}$ be the Frobenius automorphism. For any integer $i$, we have $\sigma^{i}(\alpha)=\alpha^{q^i}$.
\item Let $[i]:=q^{i}$ denote the $i$-th Frobenius power. For  $\boldsymbol{\alpha}=[\alpha_{1},\ldots,\alpha_{n}]\in \mathbb{F}_{q^m}^{n}$, we denote $\boldsymbol{\alpha}^{[i]}:=[\alpha_{1}^{[i]},\ldots,\alpha_{n}^{[i]}]$.
\item Let $\textit{\textbf{[}}n\textit{\textbf{]}}$ denote the set $\{1,\ldots,n\}.$
\item We denote the set
$$ \mathcal{V}_q(k,n):=\{V\in \mathbb{F}_{q}^{k\times n}\mid\mathrm{rk}_{\mathbb{F}_{q}}(V)=k\}.$$
\item For $\boldsymbol{\alpha}=[\alpha_{1},\ldots,\alpha_{n}]\in \mathbb{F}_{q^{m}}^{n}$, we denote the matrix
    \begin{equation}\label{mkht}
    M_{k}^{(h,t)}(\boldsymbol{\alpha}):=\left[\begin{array}{c}
\boldsymbol{\alpha} \\
\vdots  \\
\boldsymbol{\alpha}^{[h-1]}  \\
\boldsymbol{\alpha}^{[t]}  \\
\boldsymbol{\alpha}^{[h+1]} \\
\vdots  \\
\boldsymbol{\alpha}^{[k-1]}
\end{array}\right]=\left[\begin{array}{cccc}
\alpha_{1} & \alpha_{2} & \ldots & \alpha_{n} \\
\vdots & \vdots & \ddots & \vdots \\
\alpha_{1}^{q^{h-1}} & \alpha_{2}^{q^{h-1}} & \ldots & \alpha_{n}^{q^{h-1}} \\
\alpha_{1}^{q^{t}} & \alpha_{2}^{q^{t}} & \ldots & \alpha_{n}^{q^{t}} \\
\alpha_{1}^{q^{h+1}} & \alpha_{2}^{q^{h+1}} & \ldots & \alpha_{n}^{q^{h+1}} \\
\vdots & \vdots & \ddots & \vdots \\
\alpha_{1}^{q^{k-1}} & \alpha_{2}^{q^{k-1}} & \ldots & \alpha_{n}^{q^{k-1}}
\end{array}\right].
\end{equation}
\end{itemize}

\subsection{Rank metric codes}
\begin{definition}
For a vector $\boldsymbol{x}\in \mathbb{F}_{q^{m}}^{n}$, the rank weight of $\boldsymbol{x}$, denoted by $\mathrm{rk}_{\mathbb{F}_{q}}(\boldsymbol{x})$, is defined as the dimension of the linear space spanned by the components of $\boldsymbol{x}$ over $\mathbb{F}_{q}$.
\end{definition}
\begin{remark}\label{rkwt}
For a vector $\boldsymbol{x}\in \mathbb{F}_{q^{m}}^{n}$, it is clear that
 $\mathrm{rk}_{\mathbb{F}_{q}}(\boldsymbol{x})\leq \mathrm{wt}_{H}(\boldsymbol{x}).$
\end{remark}
\begin{definition}
For two vectors $\boldsymbol{x}, \boldsymbol{y}\in \mathbb{F}_{q^{m}}^{n}$, the rank distance between $\boldsymbol{x}$ and $\boldsymbol{y}$ is defined as  $d_R(\boldsymbol{x},\boldsymbol{y}):=\mathrm{rk}_{\mathbb{F}_{q}}(\boldsymbol{x}-\boldsymbol{y}).$
\end{definition}
 Throughout this paper, we consider $n\leq m$.
\begin{definition}
For an $[n,k]$ linear rank metric code $\mathcal{C}$ over $\mathbb{F}_{q^m}$ which is a $k$-dimensional subspace of $\mathbb{F}_{q^m}^n$, its minimum rank distance is defined as $d_R(\mathcal{C}):=\mathrm{min} \left\{\mathrm{rk}_{\mathbb{F}_{q}}(\boldsymbol{c}) : \boldsymbol{c}\in \mathcal{C}\setminus \{\boldsymbol{0}\} \right\}.$
\end{definition}
 Note that the rank metric codes mentioned in this paper refer to linear rank metric codes.
\begin{proposition}[\cite{dels}] For positive integers $n\le m$, let $\mathcal{C}\subseteq \mathbb{F}_{q^{m}}^{n}$ be an $[n,k]$ rank metric code, then the minimum rank distance of $\mathcal{C}$ with respect to $\mathbb{F}_{q}$ satisfies  $d_R(\mathcal{C})\le n-k+1.$
\end{proposition}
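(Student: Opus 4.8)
The plan is to reduce this rank-metric bound to the ordinary Singleton bound for the Hamming metric already recorded in the preliminaries, using the pointwise inequality between rank weight and Hamming weight from Remark~\ref{rkwt}. The whole point is that a vector with $w$ nonzero entries can span a space of dimension at most $w$ over $\mathbb{F}_q$, so its rank weight never exceeds its Hamming weight; this lets the known Hamming-metric result do all the work.

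Concretely, I would fix a nonzero codeword $\boldsymbol{c}_0\in\mathcal{C}$ of minimum Hamming weight, so that $\mathrm{wt}_H(\boldsymbol{c}_0)=d_H(\mathcal{C})$. By Remark~\ref{rkwt} its rank weight satisfies $\mathrm{rk}_{\mathbb{F}_q}(\boldsymbol{c}_0)\le \mathrm{wt}_H(\boldsymbol{c}_0)$. Since $d_R(\mathcal{C})$ is the minimum of the rank weight over all nonzero codewords, and $\boldsymbol{c}_0$ is one such codeword, we have $d_R(\mathcal{C})\le \mathrm{rk}_{\mathbb{F}_q}(\boldsymbol{c}_0)$. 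Chaining these relations gives
\begin{equation*}
d_R(\mathcal{C})\le \mathrm{rk}_{\mathbb{F}_q}(\boldsymbol{c}_0)\le \mathrm{wt}_H(\boldsymbol{c}_0)=d_H(\mathcal{C}).
\end{equation*}
Finally, invoking the Hamming Singleton bound $d_H(\mathcal{C})\le n-k+1$, valid for every linear code over $\mathbb{F}_{q^m}$, yields $d_R(\mathcal{C})\le n-k+1$, as claimed.

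There is essentially no serious obstacle here; the only point deserving care is the logical direction. The codeword realizing the minimum rank distance need not coincide with the minimum-Hamming-weight codeword $\boldsymbol{c}_0$, so one must route the argument through $\boldsymbol{c}_0$ rather than naively compare the two minima. The step $d_R(\mathcal{C})\le \mathrm{rk}_{\mathbb{F}_q}(\boldsymbol{c}_0)$ is legitimate simply because $d_R(\mathcal{C})$ is a minimum taken over a set that contains $\boldsymbol{c}_0$. Equivalently, one could phrase the whole thing as a single inequality $d_R(\mathcal{C})\le d_H(\mathcal{C})$ followed by the Hamming bound, which is the same reasoning viewed at the level of the two distance functions.

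As an alternative that avoids citing the Hamming bound as a black box, one could argue directly: by Remark~\ref{rkwt} every nonzero codeword has Hamming weight at least $d_R(\mathcal{C})$, so puncturing a generator matrix of $\mathcal{C}$ on any $d_R(\mathcal{C})-1$ coordinates leaves the projection of $\mathcal{C}$ injective (a nonzero codeword in the kernel would have too small a Hamming weight), whence $k\le n-d_R(\mathcal{C})+1$. This is, however, just an unpacking of the same reduction, so I would present the short version above.
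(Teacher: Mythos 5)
Your proof is correct, but note that the paper does not actually prove this proposition: it is stated purely as a citation to Delsarte, whose original derivation goes through the machinery of bilinear forms and association schemes and is far more involved. Your reduction---fix a codeword $\boldsymbol{c}_0$ of minimum Hamming weight, use Remark~\ref{rkwt} to chain $d_R(\mathcal{C})\le \mathrm{rk}_{\mathbb{F}_q}(\boldsymbol{c}_0)\le \mathrm{wt}_{H}(\boldsymbol{c}_0)=d_H(\mathcal{C})$, then apply the Hamming Singleton bound recorded in the preliminaries---is the standard short argument and uses only facts the paper itself provides; indeed, the paper exploits exactly this rank-weight-versus-Hamming-weight comparison immediately after the proposition, to conclude that for $n\le m$ every MRD code is MDS (the converse direction of the same idea). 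You are also right about the one delicate point: the minimizers of the two weight functions need not coincide, so the inequality $d_R(\mathcal{C})\le d_H(\mathcal{C})$ must be routed through a single codeword, as you do. A small further observation: your argument never uses the hypothesis $n\le m$, so the bound $d_R(\mathcal{C})\le n-k+1$ in fact holds for all $n$; that hypothesis matters only for attainability of the bound and for the MRD-implies-MDS implication. Your alternative puncturing argument is likewise sound and gives a self-contained proof that does not invoke the Hamming bound as a black box, at the cost of a few extra lines.
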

The bound given in Proposition 1 is called the Singleton-like bound. A rank metric code is called a maximum rank distance (MRD) code if it attains the Singleton-like bound. For $n\leq m$, an MRD code is MDS in the Hamming metric by Remark \ref{rkwt}.

\begin{proposition}[\cite{panduan}]\label{gmrd}
Let $G\in \mathbb{F}_{q^m}^{k\times n}$ be a generator matrix of a linear rank metric code $\mathcal{C}\subseteq \mathbb{F}_{q^m}^n$. Then $\mathcal{C}$ is an MRD if and only if
$$\mathrm{rk}_{\mathbb{F}_{q^m}}(VG^{\top})=k,\   \forall V\in \mathcal{V}_q(k,n).$$
\end{proposition}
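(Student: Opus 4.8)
The plan is to translate the MRD property, which is a statement about the minimum rank weight of codewords, into the stated rank condition on the $k\times k$ matrices $VG^\top$. The bridge is a null-space description of rank weight relative to the ground field $\mathbb{F}_q$.

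First I would establish the key lemma: for any $\boldsymbol{c}\in\mathbb{F}_{q^m}^n$ and any integer $r$ with $0\le r\le n$,
$$\mathrm{rk}_{\mathbb{F}_q}(\boldsymbol{c})\le r \iff \boldsymbol{c}V^\top=\boldsymbol{0}\ \text{for some}\ V\in\mathcal{V}_q(n-r,n).$$
To prove this, fix an $\mathbb{F}_q$-basis $b_1,\dots,b_m$ of $\mathbb{F}_{q^m}$ and expand each entry of $\boldsymbol{c}$ in this basis, yielding a coordinate matrix $A\in\mathbb{F}_q^{m\times n}$ whose $\mathbb{F}_q$-rank equals $\mathrm{rk}_{\mathbb{F}_q}(\boldsymbol{c})$. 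For $\boldsymbol{v}\in\mathbb{F}_q^n$ one computes $\boldsymbol{c}\boldsymbol{v}^\top=\sum_{i}b_i\,(A\boldsymbol{v}^\top)_i$, so by the $\mathbb{F}_q$-linear independence of the $b_i$ the equation $\boldsymbol{c}\boldsymbol{v}^\top=0$ over $\mathbb{F}_{q^m}$ is equivalent to $A\boldsymbol{v}^\top=\boldsymbol{0}$ over $\mathbb{F}_q$. Hence $\{\boldsymbol{v}\in\mathbb{F}_q^n:\boldsymbol{c}\boldsymbol{v}^\top=0\}$ is exactly the right null space of $A$, of $\mathbb{F}_q$-dimension $n-\mathrm{rk}_{\mathbb{F}_q}(\boldsymbol{c})$; taking a basis of a suitable subspace of it as the rows of $V$ yields the equivalence in both directions.

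Next I would specialize to $r=n-k$, so that $\mathcal{V}_q(n-r,n)=\mathcal{V}_q(k,n)$. By the Singleton-like bound, $\mathcal{C}$ fails to be MRD precisely when it contains a nonzero codeword $\boldsymbol{c}$ with $\mathrm{rk}_{\mathbb{F}_q}(\boldsymbol{c})\le n-k$, which by the lemma is equivalent to the existence of some $V\in\mathcal{V}_q(k,n)$ and a nonzero $\boldsymbol{c}\in\mathcal{C}$ with $\boldsymbol{c}V^\top=\boldsymbol{0}$. Writing $\boldsymbol{c}=\boldsymbol{u}G$ with $\boldsymbol{u}\in\mathbb{F}_{q^m}^k$ (and $\boldsymbol{c}\ne\boldsymbol{0}\iff\boldsymbol{u}\ne\boldsymbol{0}$, since $G$ has full row rank), this reads $\boldsymbol{u}\,(GV^\top)=\boldsymbol{0}$. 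Such a nonzero $\boldsymbol{u}$ exists iff the $k\times k$ matrix $GV^\top$ is singular over $\mathbb{F}_{q^m}$, that is, $\mathrm{rk}_{\mathbb{F}_{q^m}}(VG^\top)=\mathrm{rk}_{\mathbb{F}_{q^m}}(GV^\top)<k$. Negating this, $\mathcal{C}$ is MRD iff $\mathrm{rk}_{\mathbb{F}_{q^m}}(VG^\top)=k$ for every $V\in\mathcal{V}_q(k,n)$, as claimed.

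I expect the main obstacle to be the key lemma of the second paragraph: one must argue cleanly that an $\mathbb{F}_{q^m}$-linear annihilation condition $\boldsymbol{c}V^\top=\boldsymbol{0}$, with $V$ constrained to have entries in $\mathbb{F}_q$, decouples coordinatewise into an $\mathbb{F}_q$-rank statement about the expansion matrix $A$. Everything afterwards is elementary linear algebra over $\mathbb{F}_{q^m}$, together with the transpose identity $\mathrm{rk}(VG^\top)=\mathrm{rk}(GV^\top)$ that lets one pass freely between the two forms of the matrix.
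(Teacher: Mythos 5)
Your proof is correct. Note that the paper itself gives no proof of this proposition --- it is quoted directly from \cite{panduan} --- so there is no internal argument to compare against; your route (reduce MRD, via the Singleton-like bound, to the nonexistence of a nonzero codeword $\boldsymbol{c}$ with $\mathrm{rk}_{\mathbb{F}_q}(\boldsymbol{c})\le n-k$, characterize that by annihilation $\boldsymbol{c}V^{\top}=\boldsymbol{0}$ with $V\in\mathcal{V}_q(k,n)$ via the $\mathbb{F}_q$-coordinate expansion matrix, then convert to singularity of $GV^{\top}$ and use $\mathrm{rk}(GV^{\top})=\mathrm{rk}(VG^{\top})$) is precisely the standard argument underlying the cited criterion, and every step, including the null-space dimension count in your key lemma and the injectivity of $\boldsymbol{u}\mapsto\boldsymbol{u}G$, is sound under the paper's standing assumption $n\le m$.
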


\begin{lemma}[\cite{panduan}]\label{nmrd}
Any generator matrix $G\in \mathbb{F}_{q^m}^{k\times n}$ of an MRD code $\mathcal{C}\subseteq \mathbb{F}_{q^m}^n$ of dimension $k$ has only non-zero maximal minors.
\end{lemma}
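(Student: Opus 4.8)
The plan is to deduce the claim directly from Proposition \ref{gmrd} by feeding it the right family of test matrices $V$. Fix any subset $S = \{j_1 < j_2 < \cdots < j_k\} \subseteq \textit{\textbf{[}}n\textit{\textbf{]}}$ of $k$ column indices; the corresponding maximal minor of $G$ is $\det(G_S)$, where $G_S \in \mathbb{F}_{q^m}^{k\times k}$ denotes the submatrix of $G$ formed by the columns indexed by $S$. I want to exhibit a single $V \in \mathcal{V}_q(k,n)$ for which the full-rank conclusion $\mathrm{rk}_{\mathbb{F}_{q^m}}(VG^{\top}) = k$ is literally equivalent to $\det(G_S) \ne 0$, and then let $S$ range over all $k$-subsets.

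The natural choice is the $0/1$ selection matrix $V = (v_{a,j})$ defined by $v_{a,j_a} = 1$ for $a = 1, \ldots, k$ and $v_{a,j} = 0$ otherwise. Its rows are $k$ distinct standard basis vectors of $\mathbb{F}_q^n$, so $\mathrm{rk}_{\mathbb{F}_q}(V) = k$ and hence $V \in \mathcal{V}_q(k,n)$. Computing the product, $(VG^{\top})_{a,b} = \sum_{j} v_{a,j}\, G_{b,j} = G_{b, j_a}$, so $VG^{\top}$ is precisely $G_S^{\top}$ (with columns of $G_S$ ordered as $j_1, \ldots, j_k$). In particular $\mathrm{rk}_{\mathbb{F}_{q^m}}(VG^{\top}) = \mathrm{rk}_{\mathbb{F}_{q^m}}(G_S)$.

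Since $\mathcal{C}$ is MRD, Proposition \ref{gmrd} applies to this $V$ and gives $\mathrm{rk}_{\mathbb{F}_{q^m}}(VG^{\top}) = k$, whence $G_S$ has rank $k$ and $\det(G_S) \ne 0$. As $S$ was an arbitrary $k$-subset of the columns, every maximal minor of $G$ is non-zero, which is exactly the assertion.

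I expect the only point needing care to be the bookkeeping of the transpose in $VG^{\top}$: one must check that contracting the selection matrix $V$ against $G^{\top}$ reproduces (a reordering of) the columns of $G$ indexed by $S$, so that the full-rank condition of Proposition \ref{gmrd} coincides with the non-vanishing of the intended minor. A shorter but less self-contained alternative would invoke the sentence preceding Proposition \ref{gmrd}: for $n \le m$ an MRD code is MDS in the Hamming metric, because $d_R(\mathcal{C}) \le d_H(\mathcal{C})$ by Remark \ref{rkwt} forces $d_H(\mathcal{C}) = n-k+1$; Proposition \ref{gmds} then immediately yields that all maximal minors of $G$ are non-zero.
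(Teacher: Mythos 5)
Your proof is correct. Note that the paper itself offers no proof of this lemma --- it is quoted directly from \cite{panduan} --- so there is no in-paper argument to compare against; your derivation makes the lemma self-contained given the other results the paper states. The key computation checks out: for the $0/1$ selection matrix $V$ with $v_{a,j_a}=1$ and zeros elsewhere, one has $(VG^{\top})_{a,b}=\sum_{j}v_{a,j}G_{b,j}=G_{b,j_a}$, so $VG^{\top}=G_S^{\top}$, and the rows of $V$ are distinct standard basis vectors of $\mathbb{F}_q^{n}$, hence $V\in\mathcal{V}_q(k,n)$; Proposition \ref{gmrd} then forces $\mathrm{rk}_{\mathbb{F}_{q^m}}(G_S)=k$, i.e.\ $\det(G_S)\neq 0$, for every $k$-subset $S$. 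Your alternative route (MRD $\Rightarrow$ MDS by Remark \ref{rkwt} together with the Singleton bound, then Proposition \ref{gmds}) is also valid under the paper's standing assumption $n\leq m$; the selection-matrix argument is the more robust of the two, since it stays entirely within the rank-metric criterion and does not pass through the Hamming-metric detour.
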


\subsection{Linearized polynomials}
A polynomial of the form
$$f(x)=\sum_{i=0}^{s}a_i x^{q^{i}}$$  over $\mathbb{F}_{q^{m}}$ is known as a linearized polynomial \cite{ore}.
For a non-zero $f(x)=\sum_{i=0}^{s}a_i x^{q^{i}}$ over $\mathbb{F}_{q^{m}}$, its $q$-degree is  defined
by $\operatorname{deg}_q (f(x)):=
\max \left\{i: a_{i} \neq 0\right\}.$
The linearized polynomial ring, denoted by $\mathbb{F}_{q^{m}}[x;\sigma]$, is the set of linearized polynomials. The addition is done componentwise and the multiplication is defined by the non-commutative formula
$$
\left(\sum_{i=0}^{s} a_i x^{q^i}\right) \circ \left(\sum_{j=0}^{t} b_j x^{q^j}\right)=\sum_{k=0}^{s+t}\left(\sum_{i=0}^s a_i \sigma^i\left(b_{k-i}\right)\right) x^{q^k},
$$
where $b_{k-i}=0$ when  $k-i>t$ or $k-i<0$.

The following presents three useful facts.
\begin{lemma}[\cite{moore}]
Let $f(x)\in \mathbb{F}_{q^m}[x;\sigma]$ and $\mathbb{F}_{q^s}$ be the smallest extension field of $\mathbb{F}_{q^m}$ that contains all roots of $f(x)$. Then the set of all roots of $f(x)$ forms an $\mathbb{F}_{q}$-linear vector subspace of $\mathbb{F}_{q^s}$ (called the kernel  of $f(x)$).
\end{lemma}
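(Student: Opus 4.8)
The plan is to show that the evaluation map $f\colon \mathbb{F}_{q^s}\to\mathbb{F}_{q^s}$, $\beta\mapsto f(\beta)$, is an $\mathbb{F}_{q}$-linear transformation of $\mathbb{F}_{q^s}$ regarded as a vector space over $\mathbb{F}_{q}$. Once this is established, the set of roots of $f(x)$ is exactly the kernel of this linear map, and a kernel is always a subspace, which gives the conclusion immediately. Note first that $\mathbb{F}_{q}\subseteq\mathbb{F}_{q^m}\subseteq\mathbb{F}_{q^s}$, so both the coefficients $a_i$ and the scalars from $\mathbb{F}_{q}$ live inside the ambient field $\mathbb{F}_{q^s}$ and the $\mathbb{F}_{q}$-vector space structure is well defined.

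First I would verify additivity. Writing $q=p^{e}$ for the underlying prime $p$, the Frobenius map $\beta\mapsto\beta^{q}$ is a ring homomorphism of $\mathbb{F}_{q^s}$, so $(\beta+\gamma)^{q^{i}}=\beta^{q^{i}}+\gamma^{q^{i}}$ for every $i\ge 0$ and all $\beta,\gamma\in\mathbb{F}_{q^s}$. Multiplying by $a_i$ and summing over $i$ yields $f(\beta+\gamma)=f(\beta)+f(\gamma)$. Next I would verify $\mathbb{F}_{q}$-homogeneity: for any $c\in\mathbb{F}_{q}$ we have $c^{q}=c$, hence $c^{q^{i}}=c$ for all $i\ge 0$, and therefore $f(c\beta)=\sum_{i=0}^{s}a_i c^{q^{i}}\beta^{q^{i}}=c\sum_{i=0}^{s}a_i\beta^{q^{i}}=c\,f(\beta)$ for all $\beta\in\mathbb{F}_{q^s}$. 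Combining the two properties shows that $f$ is $\mathbb{F}_{q}$-linear, and the set of roots, being the zero set of this linear map, is an $\mathbb{F}_{q}$-subspace of $\mathbb{F}_{q^s}$.

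As for the main difficulty, there is essentially no analytic obstacle in this argument; the whole content reduces to the two elementary identities above. The only point requiring care is structural rather than computational: one must ensure that all roots genuinely lie in a single field that simultaneously contains $\mathbb{F}_{q}$ and the coefficient field $\mathbb{F}_{q^m}$, so that scalar multiplication by elements of $\mathbb{F}_{q}$ and the $\mathbb{F}_{q}$-vector space structure are meaningful. The hypothesis that $\mathbb{F}_{q^s}$ is the smallest extension of $\mathbb{F}_{q^m}$ containing all roots of $f(x)$ supplies exactly this. I would also remark that the statement concerns the set of roots, so possible multiplicities of the roots of $f(x)$ are irrelevant to the conclusion and need not be addressed.
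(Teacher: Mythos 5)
Your proof is correct: the paper itself gives no proof of this lemma (it is quoted from \cite{moore}), and your argument --- showing the evaluation map $\beta\mapsto f(\beta)$ on $\mathbb{F}_{q^s}$ is $\mathbb{F}_q$-linear via $(\beta+\gamma)^{q^i}=\beta^{q^i}+\gamma^{q^i}$ and $c^{q^i}=c$ for $c\in\mathbb{F}_q$, so that the root set is the kernel of a linear map --- is exactly the standard argument in that reference. The only cosmetic issue is a notational clash you inherited from the paper: the symbol $s$ is used both as the $q$-degree bound in $f(x)=\sum_{i=0}^{s}a_i x^{q^i}$ and as the extension degree in $\mathbb{F}_{q^s}$, so you should rename one of them.
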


\begin{lemma}[\cite{moore}]\label{sp1}
Let $U$ be an $\mathbb{F}_{q}$-linear subspace of $\mathbb{F}_{q^m}$. Then $\prod_{\alpha \in U}(x-\alpha)$ is an element of $\mathbb{F}_{q^m}[x;\sigma]$.
\end{lemma}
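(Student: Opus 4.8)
The plan is to induct on $k:=\dim_{\mathbb{F}_q} U$, stripping off one dimension of $U$ at each step. Write $P(x):=\prod_{\alpha\in U}(x-\alpha)$, a monic polynomial of degree $q^{k}=|U|$. The single computational ingredient I will need throughout is the identity
$$\prod_{c\in\mathbb{F}_q}(z-c\delta)=z^{q}-\delta^{q-1}z \qquad(\delta\in\mathbb{F}_{q^m}^{*}),$$
which follows by substituting $z\mapsto z/\delta$ into $\prod_{c\in\mathbb{F}_q}(y-c)=y^{q}-y$, the latter holding because $\mathbb{F}_q$ is exactly the root set of $y^{q}-y$.

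For the base case $k=0$ we have $U=\{0\}$ and $P(x)=x$, which lies in $\mathbb{F}_{q^m}[x;\sigma]$. For the inductive step I would pick a codimension-one subspace $W\subset U$ and an element $\gamma\in U\setminus W$, giving the coset decomposition $U=\bigsqcup_{c\in\mathbb{F}_q}(W+c\gamma)$. Setting $Q(x):=\prod_{w\in W}(x-w)$, the induction hypothesis makes $Q$ a linearized polynomial, hence an $\mathbb{F}_q$-linear map, so that $Q(x-c\gamma)=Q(x)-c\,Q(\gamma)$ for each $c\in\mathbb{F}_q$. Putting $\delta:=Q(\gamma)$, which is nonzero because $\gamma\notin W$ is not a root of $Q$, the decomposition gives
$$P(x)=\prod_{c\in\mathbb{F}_q}Q(x-c\gamma)=\prod_{c\in\mathbb{F}_q}\bigl(Q(x)-c\delta\bigr)=Q(x)^{q}-\delta^{q-1}Q(x),$$
the last step being the displayed identity applied with $z=Q(x)$. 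Since $Q$ is a $q$-polynomial, so is $Q(x)^{q}$ (the $q$-th power sends $\sum a_i x^{q^i}$ to $\sum a_i^{q} x^{q^{i+1}}$), and $\delta^{q-1}Q(x)$ is a scalar multiple of a $q$-polynomial; as linearized polynomials are closed under subtraction, $P(x)\in\mathbb{F}_{q^m}[x;\sigma]$, which closes the induction.

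The step requiring the most care is the inductive one: I must use the full $\mathbb{F}_q$-linearity of $Q$ (so that $Q(c\gamma)=c\,Q(\gamma)$ for scalars $c$, not merely additivity) and confirm $\delta\neq 0$, since only then does the product over the $q$ cosets collapse through the identity instead of degenerating. Once this bookkeeping is in place, the remaining verifications — that $q$-th powers, $\mathbb{F}_{q^m}$-scalar multiples, and differences of linearized polynomials are again linearized — are immediate from the definition of $\mathbb{F}_{q^m}[x;\sigma]$.
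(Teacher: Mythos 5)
Your proof is correct. Note, however, that the paper does not actually prove this lemma: it is quoted as a known fact with a citation to Lidl--Niederreiter's \emph{Finite Fields} (reference \cite{moore} in the bibliography), so there is no in-paper argument to compare against. What you have written is essentially the standard textbook proof of that cited result: induction on $\dim_{\mathbb{F}_q}U$ via the coset decomposition $U=\bigsqcup_{c\in\mathbb{F}_q}(W+c\gamma)$, the formal identity $Q(x-c\gamma)=Q(x)-cQ(\gamma)$ (valid because Frobenius is additive and $c^{q^i}=c$ for $c\in\mathbb{F}_q$), and the collapse $\prod_{c\in\mathbb{F}_q}(z-c\delta)=z^q-\delta^{q-1}z$ with $\delta=Q(\gamma)\neq 0$; all the steps you flag as delicate are handled correctly. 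For completeness, an alternative route that stays entirely inside the paper's own toolkit would use the Moore determinant of Lemma \ref{dk}: for a basis $\beta_1,\ldots,\beta_k$ of $U$, the quotient $\left|M_{k+1}([\beta_1,\ldots,\beta_k,x])\right|/\left|M_{k}([\beta_1,\ldots,\beta_k])\right|$ is, by cofactor expansion along the column of powers of $x$, an $\mathbb{F}_{q^m}$-linear combination of $x,x^{q},\ldots,x^{q^{k}}$, i.e.\ a linearized polynomial; it is monic of degree $q^k$ and vanishes exactly on $U$, hence equals $\prod_{\alpha\in U}(x-\alpha)$. Your inductive argument is more elementary (it needs no determinant machinery), while the determinant proof is shorter given that Lemma \ref{dk} is already available.
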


\begin{lemma}[\cite{lp2}]\label{sp2}
Let $f(x)\in \mathbb{F}_{q^m}[x;\sigma]$ such that $f(\alpha_{i})=0$ for all $i$. Then there exists $h(x)\in\mathbb{F}_{q^m}[x;\sigma]$ such that
$$f(x)=h(x) \circ \prod_{\alpha \in \left\langle\alpha_{1},\ldots,\alpha_{n}\right\rangle_{\mathbb{F}_q}}(x-\alpha).$$
\end{lemma}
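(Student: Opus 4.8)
The plan is to exhibit $g(x):=\prod_{\alpha\in\langle\alpha_1,\ldots,\alpha_n\rangle_{\mathbb{F}_q}}(x-\alpha)$ as a right divisor of $f$ in the skew polynomial ring $\mathbb{F}_{q^m}[x;\sigma]$, and then use a root count to kill the remainder. Write $U:=\langle\alpha_1,\ldots,\alpha_n\rangle_{\mathbb{F}_q}$ and $\ell:=\dim_{\mathbb{F}_q}U$, so that $|U|=q^{\ell}$. By Lemma \ref{sp1}, $g(x)\in\mathbb{F}_{q^m}[x;\sigma]$; since its ordinary degree is $q^{\ell}$, its $q$-degree is exactly $\ell$. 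First I would upgrade the hypothesis from the generators to the whole subspace: because a linearized polynomial is $\mathbb{F}_q$-linear, $f\bigl(\sum_i c_i\alpha_i\bigr)=\sum_i c_i f(\alpha_i)=0$ for all $c_i\in\mathbb{F}_q$, so $f$ vanishes on all of $U$.

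Next I would perform right Euclidean division of $f$ by $g$ in $\mathbb{F}_{q^m}[x;\sigma]$, which is legitimate because this ring is a right Euclidean domain under composition. This produces $h(x),r(x)\in\mathbb{F}_{q^m}[x;\sigma]$ with
\[
f(x)=h(x)\circ g(x)+r(x),\qquad r=0\ \text{ or }\ \deg_q r<\ell .
\]
The division terminates because composition adds $q$-degrees and the leading coefficient of $g$ is nonzero, so one can repeatedly cancel the top term of $f$ by composing a suitable monomial on the left of $g$; in particular $r$ is again a linearized polynomial.

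Finally I would force $r=0$. For each $\beta\in U$ we have $g(\beta)=0$, hence $(h\circ g)(\beta)=h(0)=0$, and together with $f(\beta)=0$ this gives $r(\beta)=f(\beta)-(h\circ g)(\beta)=0$. Thus $r$ vanishes on all $q^{\ell}$ points of $U$. However, by the first cited lemma of Moore the roots of a nonzero linearized polynomial form an $\mathbb{F}_q$-subspace of dimension at most its $q$-degree, so a nonzero $r$ with $\deg_q r<\ell$ could have at most $q^{\ell-1}<q^{\ell}$ roots. This contradiction forces $r=0$, whence $f(x)=h(x)\circ g(x)$. The only genuinely delicate points are bookkeeping ones: ensuring the division is taken on the correct side so that the quotient sits on the left of $g$, and confirming that $g$ has $q$-degree exactly $\ell$ (that is, that the product over $U$ collapses to a linearized polynomial), which is precisely the content of Lemma \ref{sp1}. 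With these in hand the counting step is immediate, so I do not anticipate any serious obstacle.
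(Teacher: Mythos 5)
Your proof is correct. The paper does not prove this lemma at all---it cites it directly from the reference on list-decoding Gabidulin codes---and your argument (right Euclidean division in $\mathbb{F}_{q^m}[x;\sigma]$ with the quotient on the left, followed by the root count: the remainder would vanish on all $q^{\ell}$ points of $U$ while having ordinary degree at most $q^{\ell-1}$) is precisely the standard proof of this fact, with all the delicate points (that $g$ has $q$-degree exactly $\ell$ via Lemma~\ref{sp1}, that composition corresponds to ring multiplication, and that the division is taken on the correct side) handled properly.
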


\begin{definition}
Let $\boldsymbol{\alpha}=[\alpha_{1},\ldots,\alpha_{n}]\in\mathbb{F}_{q^{m}}^{n}$. A matrix $M_{k}(\boldsymbol{\alpha})$ is called a Moore matrix generated by  $\boldsymbol{\alpha}$ if it has the following form
$$
M_{k}(\boldsymbol{\alpha})=\left[\begin{array}{cccc}
\alpha_{1} & \alpha_{2} & \ldots & \alpha_{n} \\
\alpha_{1}^{[1]} & \alpha_{2}^{[1]} & \ldots & \alpha_{n}^{[1]} \\
\vdots & \vdots & \ddots & \vdots \\
\alpha_{1}^{[k-1]} & \alpha_{2}^{[k-1]} & \ldots & \alpha_{n}^{[k-1]}
\end{array}\right].
$$
\end{definition}

The following lemma states an important property of the Moore matrix.
\begin{lemma}[\cite{moore}]\label{dk}
Let $\boldsymbol{\alpha}=[\alpha_{1},\ldots,\alpha_{k}]\in\mathbb{F}_{q^{m}}^{k}$.  Then the determinant
 of the square Moore matrix $M_{k}(\boldsymbol{\alpha})$ is
$$\begin{aligned}
|M_{k}(\boldsymbol{\alpha})|=\left|\begin{array}{cccc}
\alpha_{1} & \alpha_{2} & \ldots & \alpha_{k} \\
\alpha_{1}^{[1]} & \alpha_{2}^{[1]} & \ldots & \alpha_{k}^{[1]} \\
\vdots & \vdots & \ddots & \vdots \\
\alpha_{1}^{[k-1]} & \alpha_{2}^{[k-1]} & \ldots & \alpha_{k}^{[k-1]}
\end{array}\right|=\alpha_{1} \prod_{j=1}^{k-1} \prod_{b_{1}, \ldots, b_{j} \in \mathbb{F}_{q}}\left(\alpha_{j+1}-\sum_{i=1}^{j} b_{i} \alpha_{i}\right).
\end{aligned}$$
In particular, the matrix $M_{k}(\boldsymbol{\alpha})$ is invertible if and only if $\alpha_{1},\alpha_{2},\ldots,\alpha_{k}$ are linearly independent over $\mathbb{F}_{q}$.
\end{lemma}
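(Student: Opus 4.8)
The plan is to prove the determinant formula by induction on $k$, using the observation that a suitable "last-column" determinant is a linearized polynomial whose complete root set can be identified. The base case $k=1$ is immediate: $M_1(\boldsymbol{\alpha})=[\alpha_1]$, so $|M_1(\boldsymbol{\alpha})|=\alpha_1$, which matches the (empty) product on the right-hand side.

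For the inductive step, write $\boldsymbol{\alpha}'=[\alpha_1,\ldots,\alpha_{k-1}]$ and introduce the function $P(x):=|M_k([\alpha_1,\ldots,\alpha_{k-1},x])|$, i.e. the $k\times k$ Moore determinant with the last entry $\alpha_k$ replaced by a variable $x$. Expanding this determinant along its last column shows $P(x)=\sum_{j=0}^{k-1}c_j x^{[j]}$ for suitable $c_j\in\mathbb{F}_{q^m}$, so $P$ is a linearized polynomial; moreover the cofactor of $x^{[k-1]}$ is exactly $|M_{k-1}(\boldsymbol{\alpha}')|$, so the leading coefficient is $c_{k-1}=|M_{k-1}(\boldsymbol{\alpha}')|$. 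Since $P(\alpha_i)=0$ for $1\le i\le k-1$ (the matrix then has two equal columns) and $P$ is $\mathbb{F}_q$-linear, $P$ vanishes on the whole subspace $V:=\langle\alpha_1,\ldots,\alpha_{k-1}\rangle_{\mathbb{F}_q}$.

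Assume first that $\alpha_1,\ldots,\alpha_{k-1}$ are linearly independent over $\mathbb{F}_q$, so $|V|=q^{k-1}$ and, by the induction hypothesis, $c_{k-1}=|M_{k-1}(\boldsymbol{\alpha}')|\neq0$; thus $P$ has ordinary degree $q^{k-1}$. Possessing the $q^{k-1}$ distinct roots in $V$, it factors completely as $P(x)=|M_{k-1}(\boldsymbol{\alpha}')|\prod_{\beta\in V}(x-\beta)$ (recall that $\prod_{\beta\in V}(x-\beta)$ is itself linearized by Lemma \ref{sp1}). Evaluating at $x=\alpha_k$ yields
$$|M_k(\boldsymbol{\alpha})|=P(\alpha_k)=|M_{k-1}(\boldsymbol{\alpha}')|\prod_{\beta\in V}(\alpha_k-\beta)=|M_{k-1}(\boldsymbol{\alpha}')|\prod_{b_1,\ldots,b_{k-1}\in\mathbb{F}_q}\Bigl(\alpha_k-\sum_{i=1}^{k-1}b_i\alpha_i\Bigr),$$
since the elements of $V$ are exactly the combinations $\sum_i b_i\alpha_i$. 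Substituting the induction hypothesis for $|M_{k-1}(\boldsymbol{\alpha}')|$ and recognizing the new factor as the $j=k-1$ term produces the claimed formula.

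It remains to dispose of the degenerate case and to deduce the invertibility criterion, and here a small but essential observation about Frobenius enters: if $\sum_i c_i\alpha_i=0$ with $c_i\in\mathbb{F}_q$ not all zero, then applying $[\,j\,]$ and using $c_i^{q^j}=c_i$ gives $\sum_i c_i\alpha_i^{[j]}=0$ for every $j$, so the same $\mathbb{F}_q$-relation holds among the columns of $M_k(\boldsymbol{\alpha})$, whence $|M_k(\boldsymbol{\alpha})|=0$. When $\alpha_1,\ldots,\alpha_k$ are $\mathbb{F}_q$-dependent, some $\alpha_{j+1}\in\langle\alpha_1,\ldots,\alpha_j\rangle_{\mathbb{F}_q}$, so the right-hand product also contains a zero factor; both sides vanish and the identity still holds. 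Conversely, when the $\alpha_i$ are independent, every factor $\alpha_{j+1}-\sum_i b_i\alpha_i$ is nonzero, so the product, and therefore the determinant, is nonzero, giving the final "in particular" statement. The only genuine obstacle is the factorization step: one must justify that $P$, being linearized of $q$-degree $k-1$ with $q^{k-1}$ distinct roots, splits as its leading coefficient times $\prod_{\beta\in V}(x-\beta)$; this rests on the $\mathbb{F}_q$-linearity of $P$ together with Lemma \ref{sp1}.
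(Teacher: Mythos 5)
Your proof is correct. Note that the paper itself does not prove this lemma at all --- it is quoted as a known result with a citation to Lidl and Niederreiter's \emph{Finite Fields}, so there is no ``paper proof'' to compare against; your induction (viewing the Moore determinant as a linearized polynomial $P(x)$ in the last variable, identifying its leading coefficient as $|M_{k-1}(\boldsymbol{\alpha}')|$, observing that it vanishes on the span $V$ of $\alpha_1,\ldots,\alpha_{k-1}$, and splitting it off the $q^{k-1}$ distinct roots) is essentially the classical argument given in that reference, and your handling of the degenerate case via the Frobenius-invariance of $\mathbb{F}_q$-linear column relations correctly closes the remaining gap.
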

\begin{definition}\label{gab}
Let $n,k$ be positive integers with $k< n\leq m$ and $\boldsymbol{\alpha}\in \mathbb{F}_{q^m}^n$  with $\mathrm{rk}_{\mathbb{F}_{q}}(\boldsymbol{\alpha})=n$. The $[n,k]$ Gabidulin code $\mathcal{G}_{n,k}(\boldsymbol{\alpha})$ generated by $\boldsymbol{\alpha}$ is defined as the linear space spanned by rows of $M_{k}(\boldsymbol{\alpha})$ over $\mathbb{F}_{q^{m}}$.
\end{definition}
For $\boldsymbol{\alpha}=[\alpha_{1},\ldots,\alpha_{n}]\in \mathbb{F}_{q^{m}}^{n}$, let $\mathrm{ev}_{\boldsymbol{\alpha}}(\cdot)$ denote an evaluation map from $\mathbb{F}_{q^m}[x;\sigma]$ to $\mathbb{F}_{q^m}^n$ with $\mathrm{ev}_{\boldsymbol{\alpha}}(f(x))=[f\left(\alpha_1\right), \ldots, f\left(\alpha_n\right)].$ Then for $\mathrm{rk}_{\mathbb{F}_{q}}(\boldsymbol{\alpha})=n$, the Gabidulin code $\mathcal{G}_{n,k}(\boldsymbol{\alpha})$ can be written as
\begin{eqnarray*}
\mathcal{G}_{n,k}(\boldsymbol{\alpha})&=&\mathrm{ev}_{\boldsymbol{\alpha}}(\mathbb{F}_{q^m}[x;\sigma]_{<k})\\
&=&\left\{\left[ f\left(\alpha_{1}\right),  f\left(\alpha_{2}\right), \ldots,  f\left(\alpha_{n}\right)\right]: f(x) \in \mathbb{F}_{q^m}[x;\sigma]_{<k}\right\},\end{eqnarray*}
where $\mathbb{F}_{q^m}[x;\sigma]_{<k}$ is the set of linearized polynomials of $q$-degree at most $k-1$.

\section{Twisted Gabidulin codes with one twist}
Twisted Gabidulin codes are an extension of Gabidulin codes. In this and the following two sections, we study three classes of twisted Gabidulin codes. 

In this section, we consider twisted Gabidulin codes with one twist as follows.
\begin{definition}\label{l1}
Let $n, k$ be positive integers with $k<n\leq m$ and $\eta\in \mathbb{F}_{q^m}^{\ast}$. Denote $\mathcal{P}_{1}$ as the set of twisted linearized polynomials over $\mathbb{F}_{q^{m}}$ given by
\begin{equation}\label{p1}
\mathcal{P}_{1}:=\left\{f(x)=\sum_{i=0}^{k-1} f_i x^{[i]}+\eta f_{h} x^{[k+t]}: f_i \in \mathbb{F}_{q^m},0\leq i\leq k-1\right\},
\end{equation}
where $0\leq h\leq k-1$ and $0\leq t\leq n-k-1$. Let $\boldsymbol{\alpha}=[\alpha_{1},\ldots,\alpha_{n}]\in \mathbb{F}_{q^{m}}^{n}$ with $\mathrm{rk}_{\mathbb{F}_{q}}(\boldsymbol{\alpha})=n$. The twisted Gabidulin code with one twist is defined by
\begin{equation}\label{c1}
\mathcal{C}_{1}:=\mathrm{ev}_{\boldsymbol{\alpha}}\left(\mathcal{P}_{1}\right)\subseteq \mathbb{F}_{q^m}^n.
\end{equation}
\end{definition}
Starting from now on, let $\mathcal{C}_{1}$ be the twisted Gabidulin code defined as in (\ref{c1}) unless otherwise stated. We first present two useful propositions about the code $\mathcal{C}_{1}$.

\begin{proposition}
$\mathcal{C}_{1}$ is an $[n, k]$ linear code over $\mathbb{F}_{q^{m}}$ with the generator matrix
\begin{equation}\label{eqgt}
G_{1}=\left[\begin{array}{cccc}
\alpha_{1} & \alpha_{2} & \cdots & \alpha_{n} \\
\vdots &\vdots &  & \vdots \\
\alpha_{1}^{[h-1]} & \alpha_{2}^{[h-1]} & \cdots & \alpha_{n}^{[h-1]} \\
\alpha_{1}^{[h]}+\eta \alpha_{1}^{[k+t]} &\alpha_{2}^{[h]}+\eta \alpha_{2}^{[k+t]} & \cdots & \alpha_{n}^{[h]}+\eta \alpha_{n}^{[k+t]} \\
\alpha_{1}^{[h+1]} & \alpha_{2}^{[h+1]} & \cdots & \alpha_{n}^{[h+1]} \\
\vdots & \vdots & & \vdots \\
\alpha_{1}^{[k-1]} & \alpha_{2}^{[k-1]} & \cdots & \alpha_{n}^{[k-1]}
\end{array}\right].\end{equation}
\end{proposition}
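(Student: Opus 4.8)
The plan is to realize $\mathcal{C}_1$ as the image of the $\mathbb{F}_{q^m}$-linear evaluation map $\mathrm{ev}_{\boldsymbol{\alpha}}$ acting on the polynomial set $\mathcal{P}_1$, and then to verify three things in order: that $\mathcal{C}_1$ is $\mathbb{F}_{q^m}$-linear, that the rows of $G_1$ span it, and that its dimension is exactly $k$. The first two are essentially bookkeeping; the genuine content is the dimension count, which is where I expect the main obstacle to lie.

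First I would observe that $\mathcal{P}_1$ is an $\mathbb{F}_{q^m}$-vector space of dimension $k$: the coefficients $f_0,\ldots,f_{k-1}$ range freely over $\mathbb{F}_{q^m}$, while the twisted term $\eta f_h x^{[k+t]}$ depends $\mathbb{F}_{q^m}$-linearly on $f_h$. Since $\mathrm{ev}_{\boldsymbol{\alpha}}$ is $\mathbb{F}_{q^m}$-linear, $\mathcal{C}_1=\mathrm{ev}_{\boldsymbol{\alpha}}(\mathcal{P}_1)$ is an $\mathbb{F}_{q^m}$-subspace of $\mathbb{F}_{q^m}^n$. To read off a generating set I would expand the evaluation in the free coordinates: for $f(x)=\sum_{i=0}^{k-1}f_i x^{[i]}+\eta f_h x^{[k+t]}$,
$$\mathrm{ev}_{\boldsymbol{\alpha}}(f)=\sum_{\substack{i=0\\ i\ne h}}^{k-1} f_i\,\boldsymbol{\alpha}^{[i]}+f_h\bigl(\boldsymbol{\alpha}^{[h]}+\eta\,\boldsymbol{\alpha}^{[k+t]}\bigr)=(f_0,\ldots,f_{k-1})\,G_1,$$
so the rows of the matrix $G_1$ in (\ref{eqgt})---all of which are the powers $\boldsymbol{\alpha}^{[i]}$ except the $h$-th, which is the twisted row $\boldsymbol{\alpha}^{[h]}+\eta\,\boldsymbol{\alpha}^{[k+t]}$---span $\mathcal{C}_1$.

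It then remains to prove $\dim_{\mathbb{F}_{q^m}}\mathcal{C}_1=k$, equivalently that $\mathrm{ev}_{\boldsymbol{\alpha}}$ is injective on $\mathcal{P}_1$, equivalently that the $k$ rows of $G_1$ are $\mathbb{F}_{q^m}$-linearly independent; this is the heart of the argument. I would argue by contradiction: suppose a nonzero $f\in\mathcal{P}_1$ satisfies $\mathrm{ev}_{\boldsymbol{\alpha}}(f)=\boldsymbol{0}$, i.e. $f$ vanishes at every $\alpha_j$. Because $\mathrm{rk}_{\mathbb{F}_q}(\boldsymbol{\alpha})=n$, the elements $\alpha_1,\ldots,\alpha_n$ are $\mathbb{F}_q$-linearly independent, so Lemma \ref{sp2} yields a linearized polynomial $h(x)$ with $f(x)=h(x)\circ\prod_{\alpha\in\langle\alpha_1,\ldots,\alpha_n\rangle_{\mathbb{F}_q}}(x-\alpha)$; since the right-hand factor is a linearized polynomial of $q$-degree $n$ (its kernel is the whole $n$-dimensional space), any nonzero $f$ would satisfy $\deg_q(f)=\deg_q(h)+n\ge n$. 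But every polynomial in $\mathcal{P}_1$ has $q$-degree at most $k+t$, and the standing constraint $t\le n-k-1$ forces $k+t\le n-1<n$, a contradiction. The crux---and the step I expect to be most delicate---is exactly this degree bound: keeping the twist exponent $k+t$ strictly below $n$ is what rules out a nonzero kernel polynomial, and hence what guarantees that the rows of $G_1$ are independent. It follows that $\dim_{\mathbb{F}_{q^m}}\mathcal{C}_1=k$, that $\mathcal{C}_1$ is an $[n,k]$ code, and that $G_1$ is a generator matrix.
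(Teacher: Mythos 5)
Your proof is correct. The paper states this proposition without proof, treating it as immediate from Definition \ref{l1}, so there is no argument of the authors' to compare against; your write-up supplies exactly the justification that is left implicit. The two routine parts (linearity of $\mathcal{C}_1$ and the fact that the rows of $G_1$ in (\ref{eqgt}) span it) are handled correctly, and the substantive part --- linear independence of the rows, equivalently injectivity of $\mathrm{ev}_{\boldsymbol{\alpha}}$ on $\mathcal{P}_1$ --- is precisely where the hypotheses $\mathrm{rk}_{\mathbb{F}_q}(\boldsymbol{\alpha})=n$ and $t\le n-k-1$ must enter, and your argument uses both. The key step via Lemma \ref{sp2} is sound: a nonzero $f\in\mathcal{P}_1$ vanishing at all $\alpha_j$ would vanish on $\langle\alpha_1,\ldots,\alpha_n\rangle_{\mathbb{F}_q}$ and hence factor as $h(x)\circ\prod_{\alpha}(x-\alpha)$ with the right factor of $q$-degree $n$; since $\mathbb{F}_{q^m}[x;\sigma]$ has no zero divisors, $q$-degrees add under composition, forcing $\deg_q(f)\ge n$, which contradicts $\deg_q(f)\le k+t\le n-1$. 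This is the standard argument for Gabidulin-type evaluation codes and it is exactly what the paper's parameter constraint $0\le t\le n-k-1$ is designed to make work.
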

\begin{proposition}
For the twisted Gabidulin code $\mathcal{C}_{1}$,
 we have $$\mathcal{C}_{1}=\left\langle\boldsymbol{\alpha}^{[h]}+\eta \boldsymbol{\alpha}^{[k+t]},\boldsymbol{\alpha}^{[s]}(s\in \{0,\ldots,k-1\}\backslash\{h\})  \right\rangle_{\mathbb{F}_{q^m}}.$$
\end{proposition}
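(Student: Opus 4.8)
The plan is to establish the equality of the two $\mathbb{F}_{q^m}$-subspaces by a direct double inclusion, working straight from the definition $\mathcal{C}_{1}=\mathrm{ev}_{\boldsymbol{\alpha}}(\mathcal{P}_{1})$ in (\ref{c1}). The only structural fact needed is that the evaluation map $\mathrm{ev}_{\boldsymbol{\alpha}}$ is $\mathbb{F}_{q^m}$-linear in the coefficients $f_0,\dots,f_{k-1}$, so that a spanning set for $\mathcal{P}_{1}$ maps to a spanning set for $\mathcal{C}_{1}$. Write $W$ for the span on the right-hand side of the asserted identity.

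First I would prove $\mathcal{C}_{1}\subseteq W$. Taking an arbitrary $f\in\mathcal{P}_{1}$ as in (\ref{p1}) and evaluating componentwise gives
$$\mathrm{ev}_{\boldsymbol{\alpha}}(f)=\sum_{i=0}^{k-1} f_i\,\boldsymbol{\alpha}^{[i]}+\eta f_{h}\,\boldsymbol{\alpha}^{[k+t]}.$$
The key step is to regroup so that the single coefficient $f_{h}$ --- which governs both the monomial $x^{[h]}$ and the twist $\eta x^{[k+t]}$ --- is collected together, yielding
$$\mathrm{ev}_{\boldsymbol{\alpha}}(f)=f_{h}\bigl(\boldsymbol{\alpha}^{[h]}+\eta\boldsymbol{\alpha}^{[k+t]}\bigr)+\sum_{\substack{0\le s\le k-1\\ s\ne h}} f_s\,\boldsymbol{\alpha}^{[s]}.$$
This exhibits every codeword as an $\mathbb{F}_{q^m}$-linear combination of the proposed generators, hence it lies in $W$.

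For the reverse inclusion $W\subseteq\mathcal{C}_{1}$, I would verify that each generator is itself a codeword. The vector $\boldsymbol{\alpha}^{[h]}+\eta\boldsymbol{\alpha}^{[k+t]}$ equals $\mathrm{ev}_{\boldsymbol{\alpha}}(x^{[h]}+\eta x^{[k+t]})$, which comes from the choice $f_{h}=1$ and all other $f_i=0$; and for $s\ne h$ the vector $\boldsymbol{\alpha}^{[s]}$ equals $\mathrm{ev}_{\boldsymbol{\alpha}}(x^{[s]})$, obtained by setting $f_s=1$ and all other coefficients --- including $f_{h}$, so that the twist term vanishes --- equal to $0$. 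Since $\mathcal{C}_{1}$ is an $\mathbb{F}_{q^m}$-subspace, it contains the whole span $W$, giving the opposite inclusion.

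The argument involves no genuine obstacle; the one point that warrants care is the coupling of the two monomials $x^{[h]}$ and $\eta x^{[k+t]}$ through the shared coefficient $f_{h}$, which is exactly what prevents $\boldsymbol{\alpha}^{[k+t]}$ from occurring as an independent generator and forces it to appear only inside the twisted combination. Alternatively, the statement is immediate from the preceding proposition: the rows of $G_{1}$ in (\ref{eqgt}) are precisely the listed vectors, and a linear code coincides with the $\mathbb{F}_{q^m}$-span of the rows of any of its generator matrices.
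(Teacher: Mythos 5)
Your proof is correct: the paper states this proposition without any proof, treating it as immediate from the definition of $\mathcal{C}_{1}=\mathrm{ev}_{\boldsymbol{\alpha}}(\mathcal{P}_{1})$ and the generator matrix $G_{1}$, and your double-inclusion argument (regrouping the shared coefficient $f_{h}$ with the twist, then checking each proposed generator is a codeword) is exactly the routine verification the authors leave implicit. Your closing observation---that the claim also follows at once because the listed vectors are the rows of $G_{1}$---matches the paper's intent precisely.
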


Now we determine the values of the parameter $\eta$ for which the code $\mathcal{C}_{1}$ is MRD.

\begin{theorem}\label{mm1} $\mathcal{C}_{1}$ is MRD if and only if $$\eta\neq -\frac{\left|VM_{k}^{(h,k+t)}(\boldsymbol{\alpha})^{\top}\right|}{\left|VM_{k}(\boldsymbol{\alpha})^{\top}\right|},\ \ \ \ \forall  V\in \mathcal{V}_q(k,n),$$
where $M_{k}^{(h,k+t)}(\boldsymbol{\alpha})$ is a matrix of the form (\ref{mkht}).
\end{theorem}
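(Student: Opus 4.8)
The plan is to test the MRD property of $\mathcal{C}_1$ directly on the generator matrix $G_1$ of \eqref{eqgt} through the determinantal criterion of Proposition \ref{gmrd}. Since every $V\in\mathcal{V}_q(k,n)$ is $k\times n$ and $G_1$ is $k\times n$, the product $VG_1^{\top}$ is a square $k\times k$ matrix, so Proposition \ref{gmrd} says that $\mathcal{C}_1$ is MRD if and only if $\bigl|VG_1^{\top}\bigr|\neq 0$ for every $V\in\mathcal{V}_q(k,n)$. The first step is therefore to compute this determinant as an explicit function of $\eta$.

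The decisive structural observation is that $G_1$ coincides with the Moore matrix $M_k(\boldsymbol{\alpha})$ in every row except the $h$-th, where $\boldsymbol{\alpha}^{[h]}$ has been replaced by $\boldsymbol{\alpha}^{[h]}+\eta\,\boldsymbol{\alpha}^{[k+t]}$. Hence the $j$-th column of $VG_1^{\top}$ equals $V(\boldsymbol{\alpha}^{[j]})^{\top}$ for $j\neq h$, while its $h$-th column is $V(\boldsymbol{\alpha}^{[h]})^{\top}+\eta\,V(\boldsymbol{\alpha}^{[k+t]})^{\top}$. Expanding the determinant by linearity in this single column, and identifying the two resulting matrices as $VM_k(\boldsymbol{\alpha})^{\top}$ and $VM_k^{(h,k+t)}(\boldsymbol{\alpha})^{\top}$ (the latter being \eqref{mkht} with $\boldsymbol{\alpha}^{[h]}$ replaced by $\boldsymbol{\alpha}^{[k+t]}$), I obtain the identity
\begin{equation*}
\bigl|VG_1^{\top}\bigr|=\bigl|VM_k(\boldsymbol{\alpha})^{\top}\bigr|+\eta\,\bigl|VM_k^{(h,k+t)}(\boldsymbol{\alpha})^{\top}\bigr|,
\end{equation*}
which is affine in $\eta$ for each fixed $V$.

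To turn the non-vanishing of this expression into the stated condition I would first establish that $\bigl|VM_k(\boldsymbol{\alpha})^{\top}\bigr|\neq 0$ for every $V$. Setting $\beta_i=\sum_{l}V_{il}\alpha_l$ and using $V_{il}\in\mathbb{F}_q$, the entries of $VM_k(\boldsymbol{\alpha})^{\top}$ are $\beta_i^{[j]}$, so this matrix is the transpose of the Moore matrix $M_k(\boldsymbol{\beta})$; since $\mathrm{rk}_{\mathbb{F}_q}(V)=k$ and $\mathrm{rk}_{\mathbb{F}_q}(\boldsymbol{\alpha})=n$ force $\beta_1,\dots,\beta_k$ to be $\mathbb{F}_q$-linearly independent, Lemma \ref{dk} gives a non-zero determinant. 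With this uniform non-vanishing secured, for each $V$ the expression above vanishes for exactly one value of $\eta$ when $\bigl|VM_k^{(h,k+t)}(\boldsymbol{\alpha})^{\top}\bigr|\neq 0$, and never vanishes when that twisted determinant is zero (so such $V$ place no restriction on $\eta\in\mathbb{F}_{q^m}^{*}$); solving the linear equation for $\eta$ and ranging over all $V\in\mathcal{V}_q(k,n)$ then enumerates precisely the excluded values recorded in the statement. The one genuinely non-formal ingredient — and hence the main point to be careful about — is this uniform non-vanishing of $\bigl|VM_k(\boldsymbol{\alpha})^{\top}\bigr|$, since it is what both licenses isolating $\eta$ and guarantees that the degenerate matrices $V$ (those killing the twisted determinant) do not forbid any admissible twist; everything else reduces to the single-column multilinear expansion, so I expect no further obstacle.
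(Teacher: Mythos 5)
Your proposal is correct and follows essentially the same route as the paper: apply Proposition \ref{gmrd}, expand $\bigl|VG_1^{\top}\bigr|$ linearly in the twisted row/column to get $\bigl|VM_k(\boldsymbol{\alpha})^{\top}\bigr|+\eta\bigl|VM_k^{(h,k+t)}(\boldsymbol{\alpha})^{\top}\bigr|$, and isolate $\eta$. The only difference is cosmetic: where you re-derive the non-vanishing of $\bigl|VM_k(\boldsymbol{\alpha})^{\top}\bigr|$ directly from Lemma \ref{dk} via the Moore matrix of $\boldsymbol{\beta}=V\boldsymbol{\alpha}^{\top}$, the paper simply cites the MRD property of the Gabidulin code $\mathcal{G}_{n,k}(\boldsymbol{\alpha})$ (its Remark \ref{gneq0}) for the same fact.
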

\begin{proof}
Let $G_1$ given in (\ref{eqgt}) be a generator matrix of the code  $\mathcal{C}_1$. By Proposition \ref{gmrd},  $\mathcal{C}_1$  is MRD if and only if  $|V G_1^{\top}| \neq 0$  for any  $V\in \mathcal{V}_{q}(k, n)$. Given  $V\in \mathcal{V}_{q}(k, n)$, using the properties of the determinant, we obtain
$$\begin{aligned}
\left|V G_1^{\top}\right|&=\left|V\left[\begin{array}{cccc}
\boldsymbol{\alpha} \\
\vdots  \\
\boldsymbol{\alpha}^{[h-1]} \\
\boldsymbol{\alpha}^{[h]} \\
\boldsymbol{\alpha}^{[h+1]}  \\
\vdots \\
\boldsymbol{\alpha}^{[k-1]}
\end{array}\right]^{\top}\right|+\left|V\left[\begin{array}{cccc}
\boldsymbol{\alpha} \\
\vdots  \\
\boldsymbol{\alpha}^{[h-1]} \\
\eta \boldsymbol{\alpha}^{[k+t]} \\
\boldsymbol{\alpha}^{[h+1]}  \\
\vdots \\
\boldsymbol{\alpha}^{[k-1]}
\end{array}\right]^{\top}\right|=\left|V M_{k}(\boldsymbol{\alpha})^{\top}\right|+\eta\left|VM_{k}^{(h,k+t)}(\boldsymbol{\alpha})^{\top}\right|.
\end{aligned}$$
Thus,  $\left|V G_1^{\top}\right| \neq 0$  if and only if
$\eta\left|VM_{k}^{(h,k+t)}(\boldsymbol{\alpha})^{\top}\right| \neq -\left|V M_{k}(\boldsymbol{\alpha})^{\top}\right|,$ i.e., $\eta\neq -\frac{\left|VM_{k}^{(h,k+t)}(\boldsymbol{\alpha})^{\top}\right|}{\left|VM_{k}(\boldsymbol{\alpha})^{\top}\right|}$ for any $V\in \mathcal{V}_q(k,n).$
\end{proof}
\begin{remark}\label{gneq0}
Since the Gabidulin code $\mathcal{G}_{n,k}(\boldsymbol{\alpha})$ is MRD, $\left|V M_{k}(\boldsymbol{\alpha})^{\top}\right|\neq 0$ for any $V\in \mathcal{V}_{q}(k, n)$.
\end{remark}

In \cite{furt}, Sheekey et al. constructed a class of MRD codes via twisted Gabidulin codes with $\ell$ twists. Below we give a sufficient condition for $\mathcal{C}_{1}$ to be MRD, which can be seen as a special case of \cite{furt}. Moreover, using Theorem \ref{mm1}, we provide a proof that is more direct than that of \cite{furt}.
\begin{proposition}
Let $s$ be a positive integer with $n\leq s$ such that $\mathbb{F}_{q} \subsetneq \mathbb{F}_{q^{s}} \subsetneq \mathbb{F}_{q^m}$ is a chain of subfields. Let $\alpha_{1}, \ldots, \alpha_{n}\in \mathbb{F}_{q^s}$ be linearly independent over $\mathbb{F}_{q}$ and  $\eta\in \mathbb{F}_{q^m} \backslash \mathbb{F}_{q^s}$. Then  $\mathcal{C}_{1}$  is MRD.
\end{proposition}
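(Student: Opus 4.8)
The plan is to invoke Theorem \ref{mm1} directly and exploit the fact that the intermediate field $\mathbb{F}_{q^s}$ is stable under the Frobenius automorphism $\sigma$. First I would observe that, since each $\alpha_j\in\mathbb{F}_{q^s}$ and $\sigma(\mathbb{F}_{q^s})=\mathbb{F}_{q^s}$, every Frobenius power $\alpha_j^{[i]}=\alpha_j^{q^i}$ again lies in $\mathbb{F}_{q^s}$. Consequently, all entries of both $M_k(\boldsymbol{\alpha})$ and $M_k^{(h,k+t)}(\boldsymbol{\alpha})$ belong to $\mathbb{F}_{q^s}$.

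Next, fix any $V\in\mathcal{V}_q(k,n)$. Its entries lie in $\mathbb{F}_q\subseteq\mathbb{F}_{q^s}$, so the products $VM_k(\boldsymbol{\alpha})^{\top}$ and $VM_k^{(h,k+t)}(\boldsymbol{\alpha})^{\top}$ are $k\times k$ matrices over $\mathbb{F}_{q^s}$, and hence their determinants lie in $\mathbb{F}_{q^s}$. Because $\alpha_1,\ldots,\alpha_n$ are $\mathbb{F}_q$-linearly independent, we have $\mathrm{rk}_{\mathbb{F}_q}(\boldsymbol{\alpha})=n$, so the Gabidulin code $\mathcal{G}_{n,k}(\boldsymbol{\alpha})$ is MRD and, by Remark \ref{gneq0}, the denominator $\left|VM_k(\boldsymbol{\alpha})^{\top}\right|$ is nonzero. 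Therefore the ratio
$$-\frac{\left|VM_k^{(h,k+t)}(\boldsymbol{\alpha})^{\top}\right|}{\left|VM_k(\boldsymbol{\alpha})^{\top}\right|}$$
is a well-defined element of $\mathbb{F}_{q^s}$ for every $V$.

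Finally, since $\eta\in\mathbb{F}_{q^m}\backslash\mathbb{F}_{q^s}$, the element $\eta$ does not lie in $\mathbb{F}_{q^s}$ and thus cannot equal any of these ratios. Hence the inequality in Theorem \ref{mm1} holds for all $V\in\mathcal{V}_q(k,n)$, and $\mathcal{C}_1$ is MRD. There is no genuine obstacle here: the whole argument hinges on the single closure observation that a Frobenius-stable intermediate field contains all the relevant determinants, while $\eta$ is deliberately chosen outside it. The only point requiring a moment's care is checking that the denominator never vanishes, which is exactly the content of Remark \ref{gneq0} once the $\mathbb{F}_q$-independence of the $\alpha_j$ is noted.
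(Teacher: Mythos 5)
Your proof is correct and follows essentially the same route as the paper's: both arguments reduce to observing that each ratio $-\left|VM_{k}^{(h,k+t)}(\boldsymbol{\alpha})^{\top}\right|/\left|VM_{k}(\boldsymbol{\alpha})^{\top}\right|$ lies in $\mathbb{F}_{q^s}$ while $\eta$ does not, and then invoking Theorem \ref{mm1}. Your write-up merely makes explicit two points the paper leaves implicit (Frobenius stability of $\mathbb{F}_{q^s}$ and the non-vanishing of the denominator via Remark \ref{gneq0}), which is fine.
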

\begin{proof}
Since $\alpha_{1}, \ldots, \alpha_{n}\in \mathbb{F}_{q^s}$, we obtain  $\frac{\left|VM_{k}^{(h,k+t)}(\boldsymbol{\alpha})^{\top}\right|}{\left|VM_{k}(\boldsymbol{\alpha})^{\top}\right|} \in \mathbb{F}_{q^s}$  for any $V\in \mathcal{V}_q(k,n)$. Thus, $\eta\neq -\frac{\left|VM_{k}^{(h,k+t)}(\boldsymbol{\alpha})^{\top}\right|}{\left|VM_{k}(\boldsymbol{\alpha})^{\top}\right|}$ for  $\eta\notin \mathbb{F}_{q^s}$.
By Theorem \ref{mm1}, $\mathcal{C}_{1}$ is MRD.
\end{proof}

To determine the condition under which the twisted Gabidulin code $\mathcal{C}_{1}$ is not MRD, three useful lemmas are given below.
\begin{lemma}\label{lem1}
Let $$A_{t}=\left[\begin{array}{cccc}c_{0} & 0 & \cdots & 0 \\ c_{1}^{[1]} & c_{0}^{[1]} & \cdots & 0 \\ \vdots & \vdots & \ddots & \vdots \\ c_{t}^{[t]} & c_{t-1}^{[t]} & \cdots & c_{0}^{[t]}\end{array}\right],$$ where $c_{0}=1$ and $c_{1}, c_{2}, \ldots, c_{t} \in \mathbb{F}_{q^{m}}$. Then $$A_{t}^{-1}=\left[\begin{array}{cccc}
e_{0,0} & 0 & \cdots & 0 \\
e_{1,0} & e_{1,1} & \cdots & 0 \\
\vdots & \vdots & \ddots & \vdots \\
e_{t,0} & e_{t,1} & \cdots & e_{t,t}
\end{array}\right],$$  where $e_{0,0}=\ldots=e_{t,t}=1$ and $e_{i,j}=-\sum_{s=0}^{i-j-1} e_{j+s,j} c_{i-j-s}^{[i]}$ for $0\leq j<i\leq t$.
\end{lemma}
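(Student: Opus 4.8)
The plan is to let $E$ denote the lower-triangular matrix displayed on the right-hand side and to verify directly that $A_t E = I_{t+1}$; since $A_t$ is square, a left inverse is automatically the two-sided inverse, so the claim follows. Before computing the product I would first observe that the construction is well-posed: the defining relation $e_{i,j}=-\sum_{s=0}^{i-j-1} e_{j+s,j}\,c_{i-j-s}^{[i]}$ expresses each entry through the entries $e_{j+s,j}$ with $j+s<i$ lying in the \emph{same} column $j$, so every $e_{i,j}$ is determined by downward induction starting from the diagonal value $e_{j,j}=1$. I would also note that because $c_0=1$ we have $c_0^{[i]}=1$ for all $i$, so $A_t$ is lower unitriangular; hence its inverse is necessarily lower unitriangular with $1$'s on the diagonal, which is consistent with the stated form of $E$.

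Writing $(A_t)_{i,j}=c_{i-j}^{[i]}$ for $i\ge j$ and $(A_t)_{i,j}=0$ for $i<j$ (with $0\le i,j\le t$), the core computation is the $(i,j)$ entry of the product
$$(A_t E)_{i,j}=\sum_{l=0}^{t}(A_t)_{i,l}\,e_{l,j}=\sum_{l=j}^{i} c_{i-l}^{[i]}\,e_{l,j},$$
where the range $j\le l\le i$ comes from intersecting the supports of $A_t$ (requiring $l\le i$) and of $E$ (requiring $l\ge j$). For $i<j$ the sum is empty and equals $0$; for $i=j$ only the term $l=i$ survives, giving $c_0^{[i]}e_{i,i}=1$. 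This settles the diagonal and the entire upper triangle.

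The key step is the strictly-lower-triangular case $i>j$. Here I would split off the $l=i$ term, which equals $c_0^{[i]}e_{i,j}=e_{i,j}$, and reindex the remaining sum by $l=j+s$ with $s$ running from $0$ to $i-j-1$, obtaining
$$(A_t E)_{i,j}=e_{i,j}+\sum_{s=0}^{i-j-1} e_{j+s,j}\,c_{i-j-s}^{[i]}.$$
By the defining recursion for $e_{i,j}$ the right-hand side vanishes, so $(A_t E)_{i,j}=0$, which completes the verification $A_t E=I_{t+1}$.

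No step presents a genuine obstacle; the proof is a direct verification once the recursion is read as the orthogonality relation that makes the off-diagonal entries of the product cancel. The only thing requiring care is the index bookkeeping: matching the summation variable $l$ of the matrix product to the recursion index via $l=j+s$, and keeping the Frobenius exponent $[i]$ attached to the factor $c_{i-j-s}^{[i]}$ coming from $A_t$ (it is governed by the row index $i$, not by the column position in $E$). Verifying this exponent alignment is the one place where a sign or an index slip would be easy, so that is where I would double-check.
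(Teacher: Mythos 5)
Your proposal is correct and follows essentially the same route as the paper: both define the candidate inverse matrix from the recursion and verify entrywise that $A_t$ times it equals $I_{t+1}$, with the defining relation $e_{i,j}=-\sum_{s=0}^{i-j-1} e_{j+s,j} c_{i-j-s}^{[i]}$ supplying exactly the cancellation of the strictly lower-triangular entries of the product. The only differences are cosmetic (zero-based versus one-based indexing, and your added remarks on well-posedness and unitriangularity).
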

\begin{proof}
Denote $$B_{t}:=\left[\begin{array}{cccc}
e_{0,0} & 0 & \cdots & 0 \\
e_{1,0} & e_{1,1} & \cdots & 0 \\
\vdots & \vdots & \ddots & \vdots \\
e_{t,0} & e_{t,1} & \cdots & e_{t,t}
\end{array}\right],$$  where $e_{0,0}=\ldots=e_{t,t}=1$ and $e_{i,j}=-\sum_{s=0}^{i-j-1} e_{j+s,j} c_{i-j-s}^{[i]}$ for $0\leq j<i\leq t$. Let $V_{t}=A_{t}B_{t}$. Then  the entries of $V_{t}=[v_{i,j}]\in \mathbb{F}_{q^m}^{(t+1)\times (t+1)}$ are of the form
$$\begin{array}{ll}
v_{i, j}=(c_{i-1}^{[i-1]}\ \ldots \  c_{0}^{[i-1]}\   0\  \ldots \ 0)
 \cdot\left(\begin{array}{c}
0 \\
\vdots \\
0 \\
e_{j-1,j-1} \\
\vdots \\
e_{t,j-1}
\end{array}\right)= \left\{\begin{array}{ll}
0,& \text{if } i<j, \\
c_{0}^{[i-1]} e_{j-1,j-1}=1,& \text{if } i=j, \\
\sum_{s=0}^{i-j} c_{i-j-s}^{[i-1]} e_{j-1+s,j-1},& \text{if } i>j.
\end{array} \right. \\
\end{array}$$
Since $e_{i,j}=-\sum_{s=0}^{i-j-1} e_{j+s,j} c_{i-j-s}^{[i]}$ for $i> j$, we have $\sum_{s=0}^{i-j} c_{i-j-s}^{[i-1]} e_{j-1+s,j-1}=0$.
Therefore, $V_{t}$ is an identity matrix. Hence, $A_{t}^{-1}=B_{t}$.
\end{proof}

\begin{lemma}\label{lem3}
Suppose that $\alpha_{1},\ldots,\alpha_{k}\in \mathbb{F}_{q^m}$ are linearly independent over $\mathbb{F}_{q}$.  Let $\prod_{\alpha \in \left\langle\alpha_{1},\ldots,\alpha_{k}  \right\rangle_{\mathbb{F}_{q}}}(x-\alpha)=\sum_{j=0}^{k} c_{j} x^{[k-j]}$ and $c_j=0$ for $j>k$. Let $e_{t,0},\ldots, e_{t,t}$ be as defined in Lemma \ref{lem1} in terms of $c_0,\ldots,c_{t}$. Then
$$
\left|M_{k}^{(h,k+t)}(\boldsymbol{\alpha})\right|=g_{h}^{(t)}\left|M_{k}(\boldsymbol{\alpha})\right|,
$$
where $g_{h}^{(t)}=-\sum_{i=0}^{\min \{t, h\}} e_{t,i} c_{k-h+i}^{[i]}$ and $\boldsymbol{\alpha}=[\alpha_1,\ldots,\alpha_k]$.
\end{lemma}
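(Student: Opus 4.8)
The plan is to reduce the determinant identity to extracting a single coefficient and then to compute that coefficient through the matrix $A_t$ of Lemma \ref{lem1}. First I would record that, since $\alpha_1,\ldots,\alpha_k$ are $\mathbb{F}_q$-linearly independent, Lemma \ref{dk} gives $\left|M_k(\boldsymbol{\alpha})\right|\neq 0$, so the rows $\boldsymbol{\alpha}^{[0]},\ldots,\boldsymbol{\alpha}^{[k-1]}$ form a basis of the row space and there are unique scalars $\lambda_0,\ldots,\lambda_{k-1}\in\mathbb{F}_{q^m}$ with $\boldsymbol{\alpha}^{[k+t]}=\sum_{i=0}^{k-1}\lambda_i\boldsymbol{\alpha}^{[i]}$ (the same scalars in every coordinate, being the coefficients of the right-division remainder of $x^{[k+t]}$ by the linearized polynomial $L(x)=\sum_{j=0}^{k} c_j x^{[k-j]}$, which vanishes on each $\alpha_\ell$). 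Substituting this expression into the twisted row of $M_k^{(h,k+t)}(\boldsymbol{\alpha})$ and expanding the determinant linearly in that row, every term with $i\neq h$ produces a matrix with two equal rows and hence contributes $0$, leaving $\left|M_k^{(h,k+t)}(\boldsymbol{\alpha})\right|=\lambda_h\left|M_k(\boldsymbol{\alpha})\right|$. It then remains to prove $\lambda_h=g_h^{(t)}$.

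To compute $\lambda_h$, I would use the root relations of $L$. Applying the Frobenius $\sigma^r$ to $L(\alpha_\ell)=0$ and using $c_0=1$ yields, for $r=0,1,\ldots,t$, the identities $\sum_{j=0}^{k} c_j^{[r]}\boldsymbol{\alpha}^{[k+r-j]}=0$. Separating in each identity the ``high'' powers (exponent $\ge k$), collected in $\boldsymbol{u}=(\boldsymbol{\alpha}^{[k]},\ldots,\boldsymbol{\alpha}^{[k+t]})^{\top}$, from the ``low'' powers $\boldsymbol{v}=(\boldsymbol{\alpha}^{[0]},\ldots,\boldsymbol{\alpha}^{[k-1]})^{\top}$ rewrites the system (coordinatewise) as $A_t\,\boldsymbol{u}=D\,\boldsymbol{v}$, where the high-power coupling matrix has $(r,\ell)$-entry $c_{r-\ell}^{[r]}$ for $\ell\le r$ and $0$ otherwise, i.e. exactly the matrix $A_t$ of Lemma \ref{lem1}, while $D$ has $(r,i)$-entry $-c_{k+r-i}^{[r]}$ under the convention $c_j=0$ for $j>k$. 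Inverting via Lemma \ref{lem1} gives $\boldsymbol{u}=B_t D\,\boldsymbol{v}$ with $B_t=[e_{i,j}]$; reading off the index-$t$ component $\boldsymbol{\alpha}^{[k+t]}$ shows $\lambda_h=(B_tD)_{t,h}=-\sum_{r=0}^{t}e_{t,r}\,c_{k+r-h}^{[r]}$. Since $c_{k+r-h}=0$ whenever $r>h$, the sum truncates at $r=\min\{t,h\}$, so $\lambda_h=-\sum_{i=0}^{\min\{t,h\}}e_{t,i}\,c_{k-h+i}^{[i]}=g_h^{(t)}$, completing the argument.

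The hard part will be the index bookkeeping in the second paragraph: correctly splitting each Frobenius-twisted root relation into its high- and low-power portions, verifying that the high-power coupling matrix coincides \emph{verbatim} with the triangular $A_t$ of Lemma \ref{lem1} (so that its inverse supplies precisely the $e_{i,j}$), and tracking the exponent shifts so that the $(t,h)$-entry of $B_tD$ lands on $g_h^{(t)}$ after the truncation $r\le\min\{t,h\}$. By contrast, the determinant reduction and the matrix inversion are routine once Lemmas \ref{dk} and \ref{lem1} are invoked.
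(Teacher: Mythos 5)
Your proposal is correct: the determinant reduction is valid (expanding the twisted row by multilinearity kills every term except $i=h$ by repeated rows), the coupling matrix you extract from the Frobenius-twisted root relations is indeed exactly the $A_t$ of Lemma \ref{lem1} (its $(r,\ell)$-entry is $c_{r-\ell}^{[r]}$ for $\ell\le r$), and the truncation of $-\sum_{r=0}^{t}e_{t,r}c_{k+r-h}^{[r]}$ at $r\le\min\{t,h\}$ is justified by the convention $c_j=0$ for $j>k$. The uniqueness of the $\lambda_i$, coming from the invertibility of the Moore matrix, closes the argument; your parenthetical remark about the right-division remainder is true but never used.

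Where you differ from the paper is in how the coefficient $\lambda_h=g_h^{(t)}$ is computed. The paper forms the linearized polynomial $f_t(x)=x^{[k+t]}-\sum_{i=0}^{k-1}g_i^{(t)}x^{[i]}$, observes that it vanishes on $\left\langle\alpha_1,\ldots,\alpha_k\right\rangle_{\mathbb{F}_q}$, and invokes Lemma \ref{sp2} to factor it as $h_t(x)\circ\prod_{\alpha}(x-\alpha)$; comparing coefficients of the composition then yields the low-degree identities for $g_r^{(t)}$ together with the triangular system $[0,\ldots,0,1]=[a_0^{(t)},\ldots,a_t^{(t)}]A_t$, solved by Lemma \ref{lem1}. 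You bypass Lemma \ref{sp2} entirely: applying $\sigma^r$ to $L(\alpha_\ell)=0$ for $r=0,\ldots,t$ and splitting high from low Frobenius powers produces the system $A_t\boldsymbol{u}=D\boldsymbol{v}$ directly on evaluated row vectors, and inverting $A_t$ gives the answer. Your route is more elementary and self-contained — pure linear algebra once Lemma \ref{lem1} is available, with no appeal to the division property of the skew polynomial ring — whereas the paper's factorization gives the auxiliary quantities $a_i^{(t)}=e_{t,i}$ a conceptual meaning as coefficients of the quotient polynomial $h_t$. The computational core is the same in both: only the last row of $A_t^{-1}$ matters, paired against the shifted coefficients $c_{k-h+i}^{[i]}$.
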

\begin{proof}
Let $[g_{0}^{(t)},g_{1}^{(t)},\ldots, g_{k-1}^{(t)}]$ be the unique solution of the
following system of equations with unknowns $x_0,x_1,\ldots,x_{k-1}$:
$$[x_{0},x_{1},\ldots, x_{k-1}]\left[\begin{array}{cccc}
\alpha_{1} & \alpha_{2} & \ldots & \alpha_{k} \\
\alpha_{1}^{[1]} & \alpha_{2}^{[1]} & \ldots & \alpha_{k}^{[1]} \\
\vdots & \vdots & \ddots & \vdots \\
\alpha_{1}^{[k-1]} & \alpha_{2}^{[k-1]} & \ldots & \alpha_{k}^{[k-1]}
\end{array}\right]=[\alpha_{1}^{[k+t]},\alpha_{2}^{[k+t]},\ldots,\alpha_{k}^{[k+t]}].$$
Then we obtain $\alpha_{j}^{[k+t]}=\sum_{i=0}^{k-1} g_{i}^{(t)} \alpha_{j}^{[i]}$ for all $1 \leq j \leq k$. Let $f_{t}(x)=x^{[k+t]}-\sum_{i=0}^{k-1} g_{i}^{(t)} x^{[i]}$. Obviously, $\alpha_{1}, \ldots, \alpha_{k}$ are roots of  $f_{t}(x)$. 
Furthermore, each element of $\left\langle\alpha_{1},\ldots,\alpha_{k}  \right\rangle_{\mathbb{F}_{q}}$ is a root of $f_{t}(x)$.  By Lemma \ref{sp2}, there is $h_t(x)=\sum_{i=0}^{t} a_{i}^{(t)} x^{[i]}$ such that $f_{t}(x)=h_t(x) \circ \prod_{\alpha \in \left\langle\alpha_{1},\ldots,\alpha_{k}  \right\rangle_{\mathbb{F}_{q}}}(x-\alpha)$. 
Thus,  we have
$$\begin{array}{ll}x^{[k+t]}-\sum_{i=0}^{k-1} g_{i}^{(t)} x^{[i]}&=\left(\sum_{i=0}^{t} a_{i}^{(t)} x^{[i]}\right)\circ\left(\sum_{j=0}^{k} c_{j} x^{[k-j]}\right)\\
&=\sum_{s=0}^{t+k}\left( \sum_{i=0}^{t}a_{i}^{(t)}c_{k-s+i}^{[i]}\right) x^{[s]},\end{array}$$ where $c_{k-s+i}=0$ when $s>k+i$.

Comparing the coefficients of the left side and the right side of the above equation, we obtain
$$\left\{\begin{array}{ll}
g_{r}^{(t)}=-\sum_{i=0}^{\min \{t, r\}} a_{i}^{(t)} c_{k-r+i}^{[i]},\ \ \  0\leq r\leq k-1, \\
\left[0,0, \ldots, 0,1\right]=\left[a_{0}^{(t)}, a_{1}^{(t)}, \ldots,a_{t}^{(t)}\right]\left[\begin{array}{cccc}c_{0} & 0 & \cdots & 0 \\ c_{1}^{[1]} & c_{0}^{[1]} & \cdots & 0 \\ \vdots & \vdots & \ddots & \vdots \\ c_{t}^{[t]} & c_{t-1}^{[t]} & \cdots & c_{0}^{[t]}\end{array}\right].
\end{array}\right.$$
By Lemma \ref{lem1}, we get
$$\begin{aligned}
\left[a_{0}^{(t)}, a_{1}^{(t)}, \ldots, a_{t}^{(t)}\right]& =[0,0, \ldots, 0,1]\left[\begin{array}{cccc}c_{0} & 0 & \cdots & 0 \\ c_{1}^{[1]} & c_{0}^{[1]} & \cdots & 0 \\ \vdots & \vdots & \ddots & \vdots \\ c_{t}^{[t]} & c_{t-1}^{[t]} & \cdots & c_{0}^{[t]}\end{array}\right]^{-1}\\
& =[0,0, \ldots, 0,1]\left[\begin{array}{cccc}
e_{0,0} & 0 & \cdots & 0 \\
e_{1,0} & e_{1,1} & \cdots & 0 \\
\vdots & \vdots & \ddots & \vdots \\
e_{t,0} & e_{t,1} & \cdots & e_{t,t}
\end{array}\right] \\
&=\left[e_{t,0}, e_{t,1}, \ldots, e_{t,t}\right].
\end{aligned}$$
Then
$$\begin{aligned}g_{r}^{(t)}&=-\sum_{i=0}^{\min \{t, r\}} a_{i}^{(t)} c_{k-r+i}^{[i]}=-\sum_{i=0}^{\min \{t, r\}} e_{t,i} c_{k-r+i}^{[i]},\end{aligned}$$
where $ 0 \leq r\leq k-1$.

For the $(h+1)$-th row of $M_{k}^{(h,k+t)}(\boldsymbol{\alpha})$, we replace $\alpha_{j}^{[k+t]}$ with $\sum_{i=0}^{k-1} g_{i}^{(t)} \alpha_{j}^{[i]}$  for all $1\leq j\leq k$. Since the terms containing $\alpha_{j}^{[i]}$  $(j\in \{1,\ldots,k\}, i\in\{0,\ldots,h-1\}\cup\{h+1,\ldots,k-1\})$ in the  $(h+1)$-th row can be eliminated by subtracting the other  $k-1$ rows, 
we have
$$\begin{aligned}
\left|M_{k}^{(h,k+t)}(\boldsymbol{\alpha})\right|=\left|\begin{array}{cccc}
\alpha_{1} & \cdots & \alpha_{k} \\
\vdots &  & \vdots \\
\alpha_{1}^{[h-1]}  & \cdots & \alpha_{k}^{[h-1]} \\
\alpha_{1}^{[k+t]}  & \cdots & \alpha_{k}^{[k+t]} \\
\alpha_{1}^{[h+1]}  & \cdots & \alpha_{k}^{[h+1]} \\
\vdots  & & \vdots \\
\alpha_{1}^{[k-1]}  & \cdots & \alpha_{k}^{[k-1]}
\end{array}\right|
=g_{h}^{(t)}\left|\begin{array}{cccc}
\alpha_{1}  & \cdots & \alpha_{k} \\
\vdots &  & \vdots \\
\alpha_{1}^{[h-1]}  & \cdots & \alpha_{k}^{[h-1]} \\
\alpha_{1}^{[h]} & \cdots & \alpha_{k}^{[h]} \\
\alpha_{1}^{[h+1]}  & \cdots & \alpha_{k}^{[h+1]} \\
\vdots  & & \vdots \\
\alpha_{1}^{[k-1]}  & \cdots & \alpha_{k}^{[k-1]}
\end{array}\right|
=g_{h}^{(t)}\left|M_{k}(\boldsymbol{\alpha})\right|.\end{aligned}$$
It completes the proof.\end{proof}

\begin{remark}
From Lemma \ref{dk}, we have
$$
\left|M_{k}^{(h,k+t)}(\boldsymbol{\alpha})\right|=g_{h}^{(t)}\alpha_{1} \prod_{j=1}^{k-1} \prod_{b_{1}, \ldots, b_{j} \in \mathbb{F}_{q}}\left(\alpha_{j+1}-\sum_{i=1}^{j} b_{i} \alpha_{i}\right).
$$
In particular, $\left|M_{k}^{(h,k+t)}(\boldsymbol{\alpha})\right|\neq 0$ if and only if $g_{h}^{(t)}\neq 0$.
\end{remark}

\begin{lemma}\label{lemma1}
Suppose that $\alpha_{1},\ldots,\alpha_{k}\in \mathbb{F}_{q^m}$ are linearly independent over $\mathbb{F}_{q}$. Let $\prod_{\alpha \in \left\langle\alpha_{1},\ldots,\alpha_{k}  \right\rangle_{\mathbb{F}_{q}}}(x-\alpha)=\sum_{j=0}^{k} c_{j} x^{[k-j]}$, $c_j=0$ for $j>k$, and $e_{t,0},\ldots,e_{t,t}$  be as defined in Lemma \ref{lem1} in terms of $c_0,\ldots,c_{t}$. Let $G_{1k}$ be the submatrix formed by the first $k$ columns of $G_{1}$ given in (\ref{eqgt}), then $|G_{1k}|$=0 if and only if $\eta^{-1}=-g_{h}^{(t)}$, where $g_{h}^{(t)}=-\sum_{i=0}^{\min \{t, h\}} e_{t,i} c_{k-h+i}^{[i]}$.
\end{lemma}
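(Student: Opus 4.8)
The plan is to expand $|G_{1k}|$ by linearity in its single twisted row and then reduce everything, via the two preceding lemmas, to an explicit scalar condition on $\eta$.

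First I would identify $G_{1k}$ precisely. Restricting $G_1$ in (\ref{eqgt}) to its first $k$ columns amounts to taking $\boldsymbol{\alpha}=[\alpha_1,\dots,\alpha_k]$, so $G_{1k}$ is exactly the square Moore matrix $M_k(\boldsymbol{\alpha})$ on $[\alpha_1,\dots,\alpha_k]$ except that its $(h+1)$-th row $\boldsymbol{\alpha}^{[h]}$ has been replaced by $\boldsymbol{\alpha}^{[h]}+\eta\,\boldsymbol{\alpha}^{[k+t]}$. Because the determinant is additive in that one row, I would split it exactly as in the computation of $|VG_1^{\top}|$ in the proof of Theorem \ref{mm1}, obtaining
$$|G_{1k}|=\left|M_k(\boldsymbol{\alpha})\right|+\eta\left|M_k^{(h,k+t)}(\boldsymbol{\alpha})\right|.$$

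Next I would apply Lemma \ref{lem3}, which under the present hypotheses (the $\alpha_i$ linearly independent over $\mathbb{F}_q$, and $c_0,\dots,c_t$, $e_{t,0},\dots,e_{t,t}$ defined as in the statement) gives $\left|M_k^{(h,k+t)}(\boldsymbol{\alpha})\right|=g_h^{(t)}\left|M_k(\boldsymbol{\alpha})\right|$ with $g_h^{(t)}=-\sum_{i=0}^{\min\{t,h\}}e_{t,i}c_{k-h+i}^{[i]}$. Substituting yields the factored form
$$|G_{1k}|=\left(1+\eta\,g_h^{(t)}\right)\left|M_k(\boldsymbol{\alpha})\right|.$$
Since $\alpha_1,\dots,\alpha_k$ are linearly independent over $\mathbb{F}_q$, Lemma \ref{dk} ensures $\left|M_k(\boldsymbol{\alpha})\right|\neq0$, so $|G_{1k}|=0$ if and only if $1+\eta\,g_h^{(t)}=0$. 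As $\eta\in\mathbb{F}_{q^m}^{\ast}$, I may divide by $\eta$ to rewrite this equivalence as $\eta^{-1}=-g_h^{(t)}$, which is precisely the claim.

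There is no genuine obstacle remaining at this stage: the substantive work has been front-loaded into Lemma \ref{lem3} (itself relying on Lemma \ref{lem1} to invert the triangular coefficient matrix). The only point that calls for care is the bookkeeping in the first step, namely checking that the first-$k$-columns restriction of $G_1$ really coincides with $M_k(\boldsymbol{\alpha})$ and $M_k^{(h,k+t)}(\boldsymbol{\alpha})$ on $[\alpha_1,\dots,\alpha_k]$ so that Lemma \ref{lem3} and Lemma \ref{dk} both apply verbatim; once this alignment is verified, the equivalence follows immediately from the displayed factorization.
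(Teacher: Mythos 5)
Your proposal is correct and follows essentially the same route as the paper's own proof: expand $|G_{1k}|$ by linearity in the twisted $(h+1)$-th row, invoke Lemma \ref{lem3} to factor it as $(1+\eta\,g_h^{(t)})\,|M_k(\boldsymbol{\alpha})|$, and use Lemma \ref{dk} to conclude the Moore determinant is non-zero. The only cosmetic difference is that you explicitly justify rewriting $1+\eta\,g_h^{(t)}=0$ as $\eta^{-1}=-g_h^{(t)}$ via $\eta\in\mathbb{F}_{q^m}^{\ast}$, a step the paper leaves implicit.
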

\begin{proof}
It is clear that $G_{1k}$ is given by
$$G_{1k}=\left[\begin{array}{cccc}
\alpha_{1} & \alpha_{2} & \cdots & \alpha_{k} \\
\alpha_{1}^{[1]} & \alpha_{2}^{[1]} & \cdots & \alpha_{k}^{[1]} \\
\vdots &\vdots &  & \vdots \\
\alpha_{1}^{[h-1]} & \alpha_{2}^{[h-1]} & \cdots & \alpha_{k}^{[h-1]} \\
\alpha_{1}^{[h]}+\eta \alpha_{1}^{[k+t]} &\alpha_{2}^{[h]}+\eta \alpha_{2}^{[k+t]} & \cdots & \alpha_{k}^{[h]}+\eta \alpha_{k}^{[k+t]} \\
\alpha_{1}^{[h+1]} & \alpha_{2}^{[h+1]} & \cdots & \alpha_{k}^{[h+1]} \\
\vdots & \vdots & & \vdots \\
\alpha_{1}^{[k-1]} & \alpha_{2}^{[k-1]} & \cdots & \alpha_{k}^{[k-1]}
\end{array}\right].$$
Then by Lemma \ref{lem3}, we find that the determinant of $G_{1k}$ is
$$\begin{aligned}
|G_{1k}|=&\left|\begin{array}{cccc}
\alpha_{1} & \alpha_{2} & \cdots & \alpha_{k} \\
\vdots &\vdots &  & \vdots \\
\alpha_{1}^{[h-1]} & \alpha_{2}^{[h-1]} & \cdots & \alpha_{k}^{[h-1]} \\
\alpha_{1}^{[h]} &\alpha_{2}^{[h]} & \cdots & \alpha_{k}^{[h]} \\
\alpha_{1}^{[h+1]} & \alpha_{2}^{[h+1]} & \cdots & \alpha_{k}^{[h+1]} \\
\vdots & \vdots & & \vdots \\
\alpha_{1}^{[k-1]} & \alpha_{2}^{[k-1]} & \cdots & \alpha_{k}^{[k-1]}
\end{array}\right|+\eta \left|\begin{array}{cccc}
\alpha_{1} & \alpha_{2} & \cdots & \alpha_{k} \\
\vdots &\vdots &  & \vdots \\
\alpha_{1}^{[h-1]} & \alpha_{2}^{[h-1]} & \cdots & \alpha_{k}^{[h-1]} \\
\alpha_{1}^{[k+t]} & \alpha_{2}^{[k+t]} & \cdots & \alpha_{k}^{[k+t]} \\
\alpha_{1}^{[h+1]} & \alpha_{2}^{[h+1]} & \cdots & \alpha_{k}^{[h+1]} \\
\vdots & \vdots & & \vdots \\
\alpha_{1}^{[k-1]} & \alpha_{2}^{[k-1]} & \cdots & \alpha_{k}^{[k-1]}
\end{array}\right|\\
=&\left|M_{k}(\boldsymbol{\alpha})\right|+\eta\left|M_{k}^{(h,k+t)}(\boldsymbol{\alpha})\right|\\
=&(1+\eta g_{h}^{(t)})\cdot \left|M_{k}(\boldsymbol{\alpha})\right|,\end{aligned}$$
where $\boldsymbol{\alpha}=[\alpha_1,\ldots,\alpha_k]$. By Lemma \ref{dk}, since $\alpha_{1},\ldots,\alpha_{k}\in \mathbb{F}_{q^m}$ are linearly independent over $\mathbb{F}_{q}$, we have $\left|M_{k}(\boldsymbol{\alpha})\right|\neq 0$.
Then $|G_{1k}|$=0 if and only if $1+\eta g_{h}^{(t)}=0$, i.e., $\eta^{-1}=-g_{h}^{(t)}$. It completes the proof.
\end{proof}
From the above results, we can obtain the condition that $\mathcal{C}_{1}$ is not an MRD code.
\begin{theorem}\label{tno2}
Let $g_{h}^{(t)}=-\sum_{i=0}^{\min \{t, h\}} e_{t,i} c_{k-h+i}^{[i]}$, where $e_{t,0},\ldots,e_{t,t}$  are as defined in Lemma \ref{lem1} in terms of $c_0,\ldots,c_{t}$, and $c_{0},\ldots,c_{k}$ satisfy $\prod_{\alpha \in \left\langle\alpha_{1},\ldots,\alpha_{k}  \right\rangle_{\mathbb{F}_{q}}}(x-\alpha)=\sum_{j=0}^{k} c_{j} x^{[k-j]}$, and $c_j=0$ for $j>k$. If $\eta^{-1}=-g_{h}^{(t)}$, then $\mathcal{C}_{1}$  is not MRD.
\end{theorem}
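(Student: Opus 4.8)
The plan is to exhibit a single vanishing maximal minor of the generator matrix $G_{1}$ and then contradict the characterization of MRD codes in Lemma \ref{nmrd}; since all the analytic work has already been packaged into Lemma \ref{lemma1}, the argument is short. First I would verify that the hypotheses of Lemma \ref{lemma1} are in force. By Definition \ref{l1} the evaluation vector satisfies $\mathrm{rk}_{\mathbb{F}_{q}}(\boldsymbol{\alpha})=n$, and since $k<n$ any $k$ of its coordinates are linearly independent over $\mathbb{F}_{q}$; in particular $\alpha_{1},\ldots,\alpha_{k}$ are $\mathbb{F}_{q}$-linearly independent. Hence the linearized polynomial $\prod_{\alpha\in\left\langle\alpha_{1},\ldots,\alpha_{k}\right\rangle_{\mathbb{F}_{q}}}(x-\alpha)=\sum_{j=0}^{k}c_{j}x^{[k-j]}$, the scalars $e_{t,0},\ldots,e_{t,t}$, and the quantity $g_{h}^{(t)}$ appearing in the statement are all well-defined and coincide with those of Lemma \ref{lemma1}, so that lemma applies verbatim.

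Next I would identify $G_{1k}$, the submatrix formed by the first $k$ columns of $G_{1}$ given in (\ref{eqgt}), as a maximal ($k\times k$) minor of $G_{1}$. Under the standing assumption $\eta^{-1}=-g_{h}^{(t)}$, equivalently $1+\eta g_{h}^{(t)}=0$, Lemma \ref{lemma1} yields $|G_{1k}|=0$. I would then conclude by the contrapositive of Lemma \ref{nmrd}: any generator matrix of an MRD code has only non-zero maximal minors, so the existence of the vanishing maximal minor $|G_{1k}|=0$ shows that $\mathcal{C}_{1}$ is not MRD.

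Equivalently, one may argue directly from the MRD criterion in Proposition \ref{gmrd} by choosing $V=[I_{k}\mid \boldsymbol{0}]\in\mathcal{V}_{q}(k,n)$, the $k\times n$ matrix over $\mathbb{F}_{q}$ whose first $k$ columns form the identity and whose remaining columns are zero. Reusing the decomposition from the proof of Theorem \ref{mm1} together with Lemma \ref{lem3}, this choice gives $\left|VG_{1}^{\top}\right|=\left|M_{k}([\alpha_{1},\ldots,\alpha_{k}])\right|+\eta\left|M_{k}^{(h,k+t)}([\alpha_{1},\ldots,\alpha_{k}])\right|=(1+\eta g_{h}^{(t)})\left|M_{k}([\alpha_{1},\ldots,\alpha_{k}])\right|$, which vanishes precisely when $\eta^{-1}=-g_{h}^{(t)}$, so $\mathcal{C}_{1}$ fails the MRD condition. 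I expect no genuine obstacle in this final step: the entire difficulty is concentrated in Lemma \ref{lemma1}, and the only points requiring care are confirming the $\mathbb{F}_{q}$-linear independence of $\alpha_{1},\ldots,\alpha_{k}$ and recognizing $G_{1k}$ as a maximal minor, after which the non-MRD conclusion is immediate.
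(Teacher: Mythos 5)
Your proof is correct and follows essentially the same route as the paper: you invoke Lemma \ref{lemma1} to show that under $\eta^{-1}=-g_{h}^{(t)}$ the maximal minor $|G_{1k}|$ formed by the first $k$ columns of $G_{1}$ vanishes, and then conclude via Lemma \ref{nmrd} that $\mathcal{C}_{1}$ cannot be MRD. Your added checks (the $\mathbb{F}_{q}$-linear independence of $\alpha_{1},\ldots,\alpha_{k}$) and the alternative phrasing through Proposition \ref{gmrd} with $V=[I_{k}\mid\boldsymbol{0}]$ are fine but do not change the substance of the argument.
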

\begin{proof}
By Lemma \ref{nmrd}, we know that if a generator matrix of the code $\mathcal{C}_{1}$ has at least one zero maximal minor, then $\mathcal{C}_{1}$ is not an MRD code.
Let $G_{1}$, given in (\ref{eqgt}), be a generator matrix of $\mathcal{C}_{1}$.
By Lemma \ref{lemma1}, if $\eta^{-1}=-g_{h}^{(t)}$, then the $k\times k$ maximal minor formed by the first $k$ columns of $G_{1}$ is equal to $0$, and so  $\mathcal{C}_{1}$  is not MRD.
\end{proof}

\begin{definition}
Suppose  that $\alpha_{1},\ldots,\alpha_{n}\in \mathbb{F}_{q^m}$ are linearly independent over $\mathbb{F}_{q}$. For any k-subset $\mathcal{I}\subseteq\textit{\textbf{[}}n\textit{\textbf{]}}$, let $U_{\mathcal{I}}=\left\langle\alpha_{i} \ (i\in \mathcal{I}) \right\rangle_{\mathbb{F}_{q}}$, $\prod_{\alpha \in U_{\mathcal{I}}}(x-\alpha)=\sum_{j=0}^{k} c_{j} x^{[k-j]}$, and $c_j=0$ for $j>k$. Let  $e_{t,0},\ldots,e_{t,t}$ be as defined in Lemma \ref{lem1} in terms of $c_0,\ldots,c_{t}$ and $g_{h}^{(t)}(\mathcal{I})=-\sum_{i=0}^{\min \{t, h\}} e_{t,i} c_{k-h+i}^{[i]}$. We denote
$$\Omega_1:=\left\{-g_{h}^{(t)}(\mathcal{I}):   \mathcal{I} \text{ is a } k\text{-subset of } \textit{\textbf{[}}n\textit{\textbf{]}} \right\}.$$
\end{definition}
Using the same argument in the proofs of Lemma \ref{lemma1} and Theorem \ref{tno2}, we obtain the following result.
\begin{theorem}
If $\eta^{-1}\in\Omega_1$, then $\mathcal{C}_{1}$  is not MRD.
\end{theorem}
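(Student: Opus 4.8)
The plan is to reduce the statement to the single-subset computation already carried out in Lemma \ref{lemma1} and Theorem \ref{tno2}, applied to an arbitrary column subset rather than just the first $k$ columns. Since $\eta^{-1}\in\Omega_1$, by the definition of $\Omega_1$ there exists a $k$-subset $\mathcal{I}\subseteq\textit{\textbf{[}}n\textit{\textbf{]}}$ such that $\eta^{-1}=-g_{h}^{(t)}(\mathcal{I})$. I would fix such an $\mathcal{I}$ and let $G_{1\mathcal{I}}\in\mathbb{F}_{q^m}^{k\times k}$ denote the submatrix of $G_1$ from (\ref{eqgt}) obtained by selecting the columns indexed by $\mathcal{I}$. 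Writing $\boldsymbol{\alpha}_{\mathcal{I}}=[\alpha_i\ (i\in\mathcal{I})]$, this $G_{1\mathcal{I}}$ is exactly the matrix appearing in Lemma \ref{lemma1} but with the evaluation vector $\boldsymbol{\alpha}$ replaced by $\boldsymbol{\alpha}_{\mathcal{I}}$.

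The key step is to observe that the entire computation in Lemma \ref{lemma1} is symmetric under relabeling of the evaluation points and therefore transfers verbatim to $\mathcal{I}$. Concretely, the polynomial $\prod_{\alpha\in U_{\mathcal{I}}}(x-\alpha)=\sum_{j=0}^{k}c_j x^{[k-j]}$, the coefficients $e_{t,0},\ldots,e_{t,t}$ from Lemma \ref{lem1}, and the scalar $g_{h}^{(t)}(\mathcal{I})=-\sum_{i=0}^{\min\{t,h\}}e_{t,i}c_{k-h+i}^{[i]}$ are all defined intrinsically from the subspace $U_{\mathcal{I}}=\langle\alpha_i\ (i\in\mathcal{I})\rangle_{\mathbb{F}_q}$, precisely as in the Definition preceding the statement. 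Running the argument of Lemma \ref{lemma1} with $\boldsymbol{\alpha}_{\mathcal{I}}$ in place of $\boldsymbol{\alpha}=[\alpha_1,\ldots,\alpha_k]$ then yields
$$
|G_{1\mathcal{I}}|=\bigl(1+\eta\, g_{h}^{(t)}(\mathcal{I})\bigr)\cdot\bigl|M_{k}(\boldsymbol{\alpha}_{\mathcal{I}})\bigr|.
$$
Because $\alpha_1,\ldots,\alpha_n$ are linearly independent over $\mathbb{F}_q$, so are the $k$ entries of $\boldsymbol{\alpha}_{\mathcal{I}}$; hence $|M_k(\boldsymbol{\alpha}_{\mathcal{I}})|\neq 0$ by Lemma \ref{dk}.

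Finally, the chosen $\mathcal{I}$ satisfies $\eta^{-1}=-g_{h}^{(t)}(\mathcal{I})$, equivalently $1+\eta\,g_{h}^{(t)}(\mathcal{I})=0$, so the displayed identity forces $|G_{1\mathcal{I}}|=0$. Thus $G_1$ possesses a vanishing maximal minor, and by Lemma \ref{nmrd} the code $\mathcal{C}_1$ cannot be MRD, which completes the argument. I do not expect a genuine obstacle here: the only point requiring care is confirming that the derivation in Lemma \ref{lemma1}, which was phrased for the leading $k$ columns, is in fact index-agnostic and depends only on the set $\mathcal{I}$ through $U_{\mathcal{I}}$ and the associated $g_{h}^{(t)}(\mathcal{I})$. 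Once this invariance under relabeling is noted, the result follows immediately from the same two lemmas.
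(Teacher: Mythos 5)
Your proposal is correct and follows essentially the same route as the paper: the paper's own (one-line) proof likewise selects the $k$-subset $\mathcal{I}$ witnessing $\eta^{-1}=-g_{h}^{(t)}(\mathcal{I})$, applies the computation of Lemma \ref{lemma1} to the columns of $G_1$ indexed by $\mathcal{I}$ to get a vanishing $k\times k$ minor, and concludes via Lemma \ref{nmrd}. Your write-up simply makes explicit the relabeling-invariance of Lemma \ref{lemma1} that the paper treats as immediate.
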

\begin{proof}
Let $G_{1}$, given in (\ref{eqgt}), be a generator matrix of $\mathcal{C}_{1}$. If $\eta^{-1}\in\Omega_1$, then there exists a $k$-subset $\mathcal{I}\subseteq\textit{\textbf{[}}n\textit{\textbf{]}}$ such that the determinant of the $k\times k$ submatrix formed by the columns of $G_{1}$ indexed by $\mathcal{I}$ is zero. Hence, $\mathcal{C}_{1}$  is not MRD.  
\end{proof}

Below we consider a special case. Suppose that $t=0$ in (\ref{p1}), we obtain
\begin{equation}\label{stp1}
\mathcal{P}_{1}=\left\{f(x)=\sum_{i=0}^{k-1} f_i x^{[i]}+\eta f_{h} x^{[k]}: f_i \in \mathbb{F}_{q^m},0\leq i\leq k-1\right\},
\end{equation}
where $0\leq h\leq k-1$. In this case, we have
$$g_{h}^{(t)}=g_{h}^{(0)}=-e_{0,0} c_{k-h}=-c_{k-h}.$$
Then we obtain the following corollary.
\begin{corollary}\label{t0}
Let $\mathcal{P}_{1}$ be given in (\ref{stp1}), and let $\boldsymbol{\alpha}=[\alpha_{1},\ldots,\alpha_{n}]\in \mathbb{F}_{q^{m}}^{n}$ with $\mathrm{rk}_{\mathbb{F}_{q}}(\boldsymbol{\alpha})=n$. For any $k$-subset $\mathcal{I}\subseteq\textit{\textbf{[}}n\textit{\textbf{]}}$, let $U_{\mathcal{I}}=\left\langle\alpha_{i} \ (i\in \mathcal{I}) \right\rangle_{\mathbb{F}_{q}}$. If $\eta^{-1}\in \Omega_{1}^{\prime}$, where
$$\Omega_{1}^{\prime}=\left\{c_{k-h}: \mathcal{I} \text{ is a } k\text{-subset of } \textit{\textbf{[}}n\textit{\textbf{]}}, \prod_{\alpha \in U_{\mathcal{I}}}(x-\alpha)=\sum_{j=0}^{k} c_{j} x^{[k-j]} \right\},$$
then $\mathcal{C}_{1}=\mathrm{ev}_{\boldsymbol{\alpha}}(\mathcal{P}_{1})$ is not MRD.
\end{corollary}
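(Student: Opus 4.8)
The plan is to obtain this corollary as the immediate specialization of the preceding theorem (the one asserting that $\eta^{-1}\in\Omega_1$ forces $\mathcal{C}_1$ not to be MRD) to the twist parameter $t=0$. The essential point is that, at $t=0$, the subspace-dependent quantity $g_{h}^{(t)}(\mathcal{I})$ reduces to a single coefficient of the relevant subspace polynomial, so that the set $\Omega_1$ appearing in that theorem coincides exactly with the set $\Omega_1'$ in the present statement.

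First I would fix an arbitrary $k$-subset $\mathcal{I}\subseteq\textit{\textbf{[}}n\textit{\textbf{]}}$ and expand $\prod_{\alpha\in U_{\mathcal{I}}}(x-\alpha)=\sum_{j=0}^{k} c_{j} x^{[k-j]}$. Substituting $t=0$ into the defining formula $g_{h}^{(t)}(\mathcal{I})=-\sum_{i=0}^{\min\{t,h\}} e_{t,i}\, c_{k-h+i}^{[i]}$, the upper summation index becomes $\min\{0,h\}=0$ (using $0\le h\le k-1$), so that only the term $i=0$ survives, giving $g_{h}^{(0)}(\mathcal{I})=-e_{0,0}\,c_{k-h}$. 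Invoking $e_{0,0}=1$ from Lemma \ref{lem1}, this simplifies to $g_{h}^{(0)}(\mathcal{I})=-c_{k-h}$, and hence $-g_{h}^{(0)}(\mathcal{I})=c_{k-h}$.

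Letting $\mathcal{I}$ range over all $k$-subsets of $\textit{\textbf{[}}n\textit{\textbf{]}}$, the preceding computation shows that $\{-g_{h}^{(0)}(\mathcal{I})\}$ is precisely $\Omega_1'$. Thus the hypothesis $\eta^{-1}\in\Omega_1'$ is identical to $\eta^{-1}\in\Omega_1$ in the case $t=0$, and the conclusion that $\mathcal{C}_1$ is not MRD follows directly from that theorem. I anticipate no genuine obstacle, since the argument is a routine specialization; the only step demanding a moment's care is the collapse of the sum at $t=0$, where one must observe that $\min\{0,h\}=0$ for every admissible $h$ and use $e_{0,0}=1$ to discard the remaining coefficient so that $\Omega_1$ genuinely reduces to $\Omega_1'$.
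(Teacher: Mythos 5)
Your proposal is correct and follows exactly the paper's route: the paper proves this corollary by the same inline computation $g_{h}^{(0)}=-e_{0,0}c_{k-h}=-c_{k-h}$ (using $\min\{0,h\}=0$ and $e_{0,0}=1$), which identifies $\Omega_1$ at $t=0$ with $\Omega_1'$ and then invokes the preceding theorem that $\eta^{-1}\in\Omega_1$ implies $\mathcal{C}_1$ is not MRD. Nothing is missing.
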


For $t=0$ and $0\leq h\leq k-1$, from \cite[Proposition 4]{fiz} and \cite[Lemma 3]{shee}, we obtain the following propositions.
\begin{proposition}\label{hd1}
Let $\mathcal{P}_{1}$ be given in (\ref{stp1}), and let $\boldsymbol{\alpha}\in \mathbb{F}_{q^{m}}^{n}$ with $\mathrm{rk}_{\mathbb{F}_{q}}(\boldsymbol{\alpha})=n$.  For all $f(x)\in\mathcal{P}_{1}$ of $q$-degree $k$, if $f_{0} z\neq(-1)^{k} \eta^{q}f_{h}^{q} z^{q}$ for any $z\in \mathbb{F}_{q^m}^{*}$, then $\mathcal{C}_1=\mathrm{ev}_{\boldsymbol{\alpha}}\left(\mathcal{P}_{1}\right)$ is MRD.
\end{proposition}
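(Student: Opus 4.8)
The plan is to work directly with the evaluation description of $\mathcal{C}_1$ and to characterize the MRD property through the kernels of the evaluation polynomials. Set $U=\langle\alpha_1,\ldots,\alpha_n\rangle_{\mathbb{F}_q}$, which is $n$-dimensional since $\mathrm{rk}_{\mathbb{F}_q}(\boldsymbol{\alpha})=n$. For a nonzero $f\in\mathcal{P}_1$, the codeword $\mathrm{ev}_{\boldsymbol{\alpha}}(f)$ has rank weight equal to $\dim_{\mathbb{F}_q}f(U)=n-\dim_{\mathbb{F}_q}(\ker f\cap U)$, because $f$ is $\mathbb{F}_q$-linear and its values on $\alpha_1,\ldots,\alpha_n$ span $f(U)$. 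Hence $\mathcal{C}_1$ is MRD if and only if $\dim_{\mathbb{F}_q}(\ker f\cap U)\le k-1$ for every nonzero $f\in\mathcal{P}_1$. First I would dispose of the case $f_h=0$: then $f$ has $q$-degree at most $k-1$, so its kernel has dimension at most $k-1$ and the bound is automatic. It therefore suffices to treat the polynomials of $q$-degree exactly $k$, that is, those with $f_h\ne 0$, and to prove the \emph{contrapositive}: if $\dim_{\mathbb{F}_q}(\ker f\cap U)=k$, then the hypothesis of the proposition must fail.

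So assume $f$ has $q$-degree $k$ and $\dim_{\mathbb{F}_q}(\ker f\cap U)=k$. Since the kernel of a $q$-degree-$k$ linearized polynomial has dimension at most $k$, the space $W:=\ker f\cap U$ must equal the full kernel of $f$ and consist of $q^{k}$ roots all lying in $U\subseteq\mathbb{F}_{q^m}$. By Lemmas \ref{sp1} and \ref{sp2}, $P_W(x):=\prod_{w\in W}(x-w)$ is a linearized polynomial that right-divides $f$; comparing $q$-degrees and leading coefficients gives $f(x)=\eta f_h\,P_W(x)$. Writing $P_W(x)=\sum_{j=0}^{k} c_j x^{[k-j]}$ with $c_0=1$ and comparing the coefficients of $x^{[0]}=x$ on both sides, I obtain the single scalar relation $f_0=\eta f_h\,c_k$, where $c_k$ is the coefficient of $x$ in $P_W$.

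The heart of the argument is then to evaluate $c_k$, and this is the step I expect to be the \emph{main obstacle}. Using the Moore-matrix representation $P_W(x)=(-1)^k\det M_{k+1}(x,\boldsymbol{\beta})/\det M_k(\boldsymbol{\beta})$ for a basis $\boldsymbol{\beta}$ of $W$, expanding the determinant along the first column, and invoking Lemma \ref{dk} to guarantee $D:=\det M_k(\boldsymbol{\beta})\ne 0$, I would show $c_k=(-1)^k D^{\,q-1}$. Consequently $N_{\mathbb{F}_{q^m}/\mathbb{F}_q}(c_k)=(-1)^{km}$, since $N(D)\in\mathbb{F}_q^{*}$ forces $N(D)^{q-1}=1$. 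Finally I substitute $f_0=\eta f_h c_k$ into the target equation $f_0 z=(-1)^k\eta^{q} f_h^{q} z^{q}$: it reduces to $z^{q-1}=c_k/\big((-1)^k\eta^{q-1}f_h^{q-1}\big)$, whose right-hand side is nonzero and has norm $(-1)^{km}N(c_k)=1$. Because $z\mapsto z^{q-1}$ maps $\mathbb{F}_{q^m}^{*}$ onto the norm-one subgroup, a nonzero solution $z$ exists, contradicting the hypothesis. This contradiction yields $\dim_{\mathbb{F}_q}(\ker f\cap U)\le k-1$ for all $f$ of $q$-degree $k$, and hence $\mathcal{C}_1$ is MRD. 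The only delicate points to get right are the sign normalization in the subspace-polynomial/Moore-determinant identity and the surjectivity of the $(q-1)$-power map onto the kernel of the norm.
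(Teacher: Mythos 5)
Your proof is correct, and it takes a noticeably more self-contained route than the paper. The paper's own proof has the same contrapositive skeleton as yours (kernel of dimension $k$ forces the existence of a solution $z$, so the stated hypothesis caps the kernel dimension at $k-1$ and MRD follows), but it obtains the key implication by simply citing Proposition 4 of Rosenthal--Randrianarisoa \cite{fiz} as a black box, specializing $f_k=\eta f_h$. You instead re-derive that cited result from first principles: the factorization $f=\eta f_h P_W$ via the paper's own Lemmas \ref{sp1} and \ref{sp2}, the Moore-determinant identity $c_k=(-1)^k D^{q-1}$ (your sign bookkeeping here is right: with $x$ in the first column the leading coefficient of $\det M_{k+1}(x,\boldsymbol{\beta})$ is $(-1)^k D$ and the coefficient of $x$ is $D^q$), the norm computation $N(c_k)=(-1)^{km}$, and the surjectivity of $z\mapsto z^{q-1}$ onto the norm-one subgroup of $\mathbb{F}_{q^m}^{*}$ (Hilbert 90, or cyclicity of $\mathbb{F}_{q^m}^{*}$). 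You also make explicit several points the paper leaves implicit: the identity $\mathrm{rk}_{\mathbb{F}_q}(\mathrm{ev}_{\boldsymbol{\alpha}}(f))=n-\dim_{\mathbb{F}_q}(\ker f\cap U)$, the disposal of the $f_h=0$ case, and the fact that a $k$-dimensional $\ker f\cap U$ must be the entire kernel. What the paper's approach buys is brevity; what yours buys is independence from the external reference and a transparent view of why the norm condition $(-1)^{km}$ appears, at the cost of carrying the Moore-determinant and Hilbert-90 machinery explicitly. Both arguments are sound.
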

\begin{proof}
It follows from \cite[Proposition 4]{fiz} that if the kernel of a linearized polynomial $f(x)=\sum_{i=0}^{k} f_i x^{[i]}$ of $q$-degree $k$ over $\mathbb{F}_{q^m}$ has dimension $k$, then there is $z\in \mathbb{F}_{q^m}^{*}$ such that $f_{0} z=(-1)^{k} f_{k}^{q} z^{q}$. By setting $f_k=\eta f_h$, we obtain a twisted linearized polynomial $f(x)=\sum_{i=0}^{k-1} f_i x^{[i]}+\eta f_h x^{[k]}$ which belongs to $\mathcal{P}_{1}$. In this case, the above equation becomes $f_{0} z=(-1)^{k} \eta^{q}f_{h}^{q} z^{q}$.

Thus, for $f(x)\in\mathcal{P}_{1}$ of $q$-degree $k$, if $f_{0} z\neq(-1)^{k} \eta^{q}f_{h}^{q} z^{q}$ for any $z\in \mathbb{F}_{q^m}^{*}$, then the dimension of the kernel of $f(x)$ is at most $k-1$, and so $\mathcal{C}_1=\mathrm{ev}_{\boldsymbol{\alpha}}\left(\mathcal{P}_{1}\right)$ is MRD.
\end{proof}
\begin{proposition}\label{hd2}
Let $\mathcal{P}_{1}$ be given in (\ref{stp1}), and let $\boldsymbol{\alpha}\in \mathbb{F}_{q^{m}}^{n}$ with $\mathrm{rk}_{\mathbb{F}_{q}}(\boldsymbol{\alpha})=n$.  For all $f(x)\in\mathcal{P}_{1}$ of $q$-degree $k$, if $N(f_0)\neq(-1)^{mk}N(\eta)N(f_h)$, where $N(\cdot)$ denote the field norm from $\mathbb{F}_{q^m}$ to $\mathbb{F}_q$ with $N(x)=x^{1+q+\ldots+q^{m-1}}$, then $\mathcal{C}_1=\mathrm{ev}_{\boldsymbol{\alpha}}\left(\mathcal{P}_{1}\right)$ is MRD.
\end{proposition}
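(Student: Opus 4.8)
The plan is to reuse the mechanism behind Proposition~\ref{hd1}, replacing its obstruction with an equivalent norm condition. By \cite[Proposition 4]{fiz}, if $f(x)=\sum_{i=0}^{k-1}f_{i}x^{[i]}+\eta f_{h}x^{[k]}\in\mathcal{P}_{1}$ has $q$-degree $k$ (so $f_{h}\neq0$) and its kernel has dimension $k$, then there exists $z\in\mathbb{F}_{q^m}^{*}$ with $f_{0}z=(-1)^{k}\eta^{q}f_{h}^{q}z^{q}$. Consequently, it suffices to prove that no such $z$ exists exactly when $N(f_{0})\neq(-1)^{mk}N(\eta)N(f_{h})$; once this equivalence is established, the hypothesis forces every $q$-degree-$k$ polynomial in $\mathcal{P}_{1}$ to have kernel dimension at most $k-1$, and then the same reasoning as in Proposition~\ref{hd1} (the rank weight of a codeword equals $n-\dim(\ker f\cap\langle\alpha_{1},\ldots,\alpha_{n}\rangle_{\mathbb{F}_{q}})\geq n-k+1$) shows that $\mathcal{C}_{1}$ is MRD.

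To analyze solvability of $f_{0}z=(-1)^{k}\eta^{q}f_{h}^{q}z^{q}$, I would first treat the degenerate case $f_{0}=0$: since $\eta,f_{h}\neq0$, the right-hand side is nonzero for every $z\neq0$, so no solution exists, while simultaneously $N(f_{0})=0\neq(-1)^{mk}N(\eta)N(f_{h})$, so the two statements agree. For $f_{0}\neq0$ I would divide through and rewrite the equation as $z^{q-1}=\beta$, where $\beta=\dfrac{f_{0}}{(-1)^{k}\eta^{q}f_{h}^{q}}\in\mathbb{F}_{q^m}^{*}$.

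The key step is to identify the image of the power map $z\mapsto z^{q-1}$ on $\mathbb{F}_{q^m}^{*}$. Since $N(z^{q-1})=N(z)^{q-1}=1$ (as $N(z)\in\mathbb{F}_{q}^{*}$ has order dividing $q-1$), this image is contained in $\ker N$, where $N\colon\mathbb{F}_{q^m}^{*}\to\mathbb{F}_{q}^{*}$ is the surjective norm; comparing cardinalities, both the image and $\ker N$ have size $(q^{m}-1)/(q-1)$, so they coincide. Hence $z^{q-1}=\beta$ is solvable if and only if $N(\beta)=1$. Using multiplicativity of $N$ together with $N(c)=c^{m}$ for $c\in\mathbb{F}_{q}$ (so that $N((-1)^{k})=(-1)^{mk}$) and $N(\eta^{q})=N(\eta)$, $N(f_{h}^{q})=N(f_{h})$, I would compute $N(\beta)=\dfrac{N(f_{0})}{(-1)^{mk}N(\eta)N(f_{h})}$, whence $N(\beta)=1$ is equivalent to $N(f_{0})=(-1)^{mk}N(\eta)N(f_{h})$. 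Its negation is precisely the hypothesis, which completes the equivalence and hence the proof.

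The only genuinely nonroutine point is the cardinality argument identifying $\mathrm{im}(z\mapsto z^{q-1})$ with $\ker N$; it rests on the cyclicity of $\mathbb{F}_{q^m}^{*}$ (giving $|\mathrm{im}(z\mapsto z^{q-1})|=(q^{m}-1)/\gcd(q-1,q^{m}-1)=(q^{m}-1)/(q-1)$) and on surjectivity of the norm onto $\mathbb{F}_{q}^{*}$. Everything else, including the sign identity $N((-1)^{k})=(-1)^{mk}$, is a direct computation once this identification is in place.
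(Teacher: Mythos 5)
Your proof is correct, but it takes a genuinely different route from the paper. The paper's proof is a one-step citation: it invokes Sheekey's norm criterion (\cite[Lemma 3]{shee}), which says that if a $q$-degree-$k$ linearized polynomial $f$ has kernel of dimension exactly $k$ (equivalently, the codeword has rank weight exactly $n-k$), then $N(f_0)=(-1)^{mk}N(f_k)$; the contrapositive together with the hypothesis immediately forces every nonzero codeword to have rank weight at least $n-k+1$. You instead start from \cite[Proposition 4]{fiz} (the same ingredient the paper uses for Proposition~\ref{hd1}) and \emph{derive} the norm criterion from it: you reduce the solvability of $f_0 z=(-1)^k\eta^q f_h^q z^q$ to $z^{q-1}=\beta$, identify the image of the $(q-1)$-power map on $\mathbb{F}_{q^m}^{*}$ with $\ker N$ by the cyclicity/cardinality argument (multiplicative Hilbert 90 for finite fields), and compute $N(\beta)$ using $N((-1)^k)=(-1)^{mk}$ and $N(\eta^q)=N(\eta)$; the degenerate case $f_0=0$ is handled separately and consistently. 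What the paper's approach buys is brevity; what yours buys is self-containedness (you essentially reprove the needed part of Sheekey's lemma) and a unification of Propositions~\ref{hd1} and~\ref{hd2}. Notably, your argument establishes the conditions of the two propositions are \emph{equivalent} for each polynomial (solvability of the $z$-equation holds if and only if $N(f_0)=(-1)^{mk}N(\eta)N(f_h)$, in both the $f_0=0$ and $f_0\neq 0$ cases), which is stronger than, and in fact contradicts, the paper's subsequent Remark asserting that the converse implication fails; your verification of the equivalence is sound, so this is a correction to the paper rather than a flaw in your proof.
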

\begin{proof}
It follows from \cite[Lemma 3]{shee} that if a linearized polynomial $f(x)=\sum_{i=0}^{k} f_i x^{[i]}$ of $q$-degree $k$ over $\mathbb{F}_{q^m}$ has rank $n-k$, i.e., the rank weight of the corresponding codeword is $n-k$, then $N(f_0)=(-1)^{mk}N(f_k)$. By setting $f_k=\eta f_h$, we get a twisted linearized polynomial $f(x)=\sum_{i=0}^{k-1} f_i x^{[i]}+\eta f_h x^{[k]}$ belonging to $\mathcal{P}_{1}$, and the above equation becomes $N(f_0)=(-1)^{mk}N(\eta)N(f_h)$.

We know that the kernel of $f(x)$ of $q$-degree $k$ has dimension at most $k$, implying that the rank weight of the corresponding codeword is at least $n-k$. Hence, for each $f(x)\in\mathcal{P}_{1}$ of $q$-degree $k$, if $N(f_0)\neq(-1)^{mk}N(\eta)N(f_h)$, then the rank weight of the corresponding codeword is at least $n-k+1$, and so $\mathcal{C}_1=\mathrm{ev}_{\boldsymbol{\alpha}}\left(\mathcal{P}_{1}\right)$ is MRD.
\end{proof}

\begin{remark}
 When $h=0$, the condition in Proposition \ref{hd2} becomes $N(\eta)\neq(-1)^{mk}$ as shown in \cite{shee,GTG,otal}.
\end{remark}
\begin{remark}
In fact, the condition of Proposition \ref{hd2} can imply that of Proposition \ref{hd1}, i.e., if $N(f_0)\neq(-1)^{mk}N(\eta)N(f_h)$, then $f_{0} z\neq(-1)^{k} \eta^{q}f_{h}^{q} z^{q}$ for any $z\in \mathbb{F}_{q^m}^{*}$. Note that the converse is not true.
\end{remark}

In the following, we study twisted Gabidulin codes in Hamming metric. Using the same argument in the proof of Lemma \ref{lem3}, we establish the necessary and sufficient condition for $\mathcal{C}_{1}$ to be MDS.

\begin{theorem}
 $\mathcal{C}_{1}$ is MDS if and only if $\eta^{-1}\in \mathbb{F}_{q^m}^{*}\backslash \Omega_1$.
\end{theorem}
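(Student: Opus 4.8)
The plan is to characterize the MDS property through the maximal minors of the generator matrix $G_1$ and then to reuse, essentially verbatim, the determinant computation already carried out in Lemma \ref{lemma1}. By Proposition \ref{gmds}, $\mathcal{C}_1$ is MDS if and only if every $k\times k$ maximal minor of $G_1$ is non-zero. These minors are indexed by the $k$-subsets $\mathcal{I}\subseteq\textit{\textbf{[}}n\textit{\textbf{]}}$: for each such $\mathcal{I}$, let $G_{1}[\mathcal{I}]$ denote the submatrix of $G_1$ formed by the columns indexed by $\mathcal{I}$, and let $\boldsymbol{\alpha}_{\mathcal{I}}$ denote the corresponding subvector of $\boldsymbol{\alpha}$. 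The whole argument reduces to evaluating $|G_{1}[\mathcal{I}]|$ for an arbitrary $\mathcal{I}$.

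First I would observe that Lemma \ref{lemma1} is stated for the first $k$ columns (i.e.\ $\mathcal{I}=\{1,\ldots,k\}$) only because it fixes $\boldsymbol{\alpha}=[\alpha_1,\ldots,\alpha_k]$; its proof, via Lemma \ref{lem3}, uses nothing about these indices beyond the fact that $\alpha_1,\ldots,\alpha_k$ are linearly independent over $\mathbb{F}_q$. Since $\mathrm{rk}_{\mathbb{F}_q}(\boldsymbol{\alpha})=n$, every $k$ of the $\alpha_i$ are linearly independent over $\mathbb{F}_q$, so the same expansion of the determinant (splitting the twisted $(h+1)$-th row by linearity and clearing the remaining rows) applied to $\boldsymbol{\alpha}_{\mathcal{I}}$ yields
$$|G_{1}[\mathcal{I}]|=\bigl(1+\eta\, g_{h}^{(t)}(\mathcal{I})\bigr)\cdot\left|M_{k}(\boldsymbol{\alpha}_{\mathcal{I}})\right|,$$
where $g_{h}^{(t)}(\mathcal{I})$ is precisely the quantity in the definition of $\Omega_1$, built from the coefficients $c_j$ of $\prod_{\alpha\in U_{\mathcal{I}}}(x-\alpha)$ and the entries $e_{t,i}$ of Lemma \ref{lem1}. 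By Lemma \ref{dk}, the Moore determinant $\left|M_{k}(\boldsymbol{\alpha}_{\mathcal{I}})\right|$ is non-zero exactly because $\boldsymbol{\alpha}_{\mathcal{I}}$ has $\mathbb{F}_q$-linearly independent entries.

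Consequently $|G_{1}[\mathcal{I}]|=0$ if and only if $1+\eta\, g_{h}^{(t)}(\mathcal{I})=0$, i.e.\ $\eta^{-1}=-g_{h}^{(t)}(\mathcal{I})$. Therefore every maximal minor of $G_1$ is non-zero if and only if $\eta^{-1}\neq -g_{h}^{(t)}(\mathcal{I})$ for all $k$-subsets $\mathcal{I}$, which by the definition of $\Omega_1$ means $\eta^{-1}\notin\Omega_1$. Since $\eta\in\mathbb{F}_{q^m}^{*}$ forces $\eta^{-1}\in\mathbb{F}_{q^m}^{*}$, this is equivalent to $\eta^{-1}\in\mathbb{F}_{q^m}^{*}\backslash\Omega_1$, as claimed.

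There is no deep obstacle here; the only point requiring genuine care is bookkeeping. The factorization $\prod_{\alpha\in U_{\mathcal{I}}}(x-\alpha)$, hence the coefficients $c_j$ and the resulting value $g_{h}^{(t)}(\mathcal{I})$, all depend on $\mathcal{I}$, so one must keep these quantities indexed by $\mathcal{I}$ throughout and must not conflate them with the fixed values from Lemma \ref{lemma1} — this is exactly what the notation in the definition of $\Omega_1$ is designed to record. The substantive mathematical content (the determinant identity and the non-vanishing of the Moore minor) has already been established, so the proof is genuinely a transfer of Lemma \ref{lemma1} over all column subsets followed by Proposition \ref{gmds}.
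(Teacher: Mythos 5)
Your proof is correct and follows the same route as the paper: Proposition \ref{gmds} reduces MDS to the non-vanishing of all maximal minors, and each minor indexed by a $k$-subset $\mathcal{I}$ is evaluated as $\bigl(1+\eta\, g_{h}^{(t)}(\mathcal{I})\bigr)\left|M_{k}(\boldsymbol{\alpha}_{\mathcal{I}})\right|$ by transferring the computation of Lemma \ref{lemma1} to arbitrary column subsets, which is exactly the implicit content of the paper's (much terser) proof. You have simply made explicit the bookkeeping step that the paper leaves to the reader.
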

\begin{proof}
Let $G_{1}$, as defined in (\ref{eqgt}), be a generator matrix of $\mathcal{C}_{1}$.
 By Proposition \ref{gmds},
$\mathcal{C}_{1}$  is  MDS  if and only if for each  $k$-subset $\mathcal{I}\subseteq\textit{\textbf{[}}n\textit{\textbf{]}}$, the $k\times k$ minor formed by the columns of $G_{1}$ indexed by $\mathcal{I}$ is non-zero, if and only if $\eta^{-1} \notin \Omega_1$. It completes the proof.
\end{proof}

We now recall the equivalent conditions of NMDS codes and AMDS codes.

\begin{lemma}[\cite{3}]\label{anmds}
An  $[n, k]$  linear code  $\mathcal{C}$  over  $\mathbb{F}_{q^m}$  is NMDS  if and only if the generator matrix  $G$  of  $\mathcal{C}$  satisfies the following conditions:
\begin{enumerate}[(i)]
\item Any $k-1$ columns of $G$ are linearly independent.
\item There exists  $k$  linearly dependent columns in  $G$.
\item Any  $k+1$  columns of $G$ are of rank $k$.
\end{enumerate}
The code $\mathcal{C}$ is AMDS if and only if the above two conditions (ii) and (iii) hold.
\end{lemma}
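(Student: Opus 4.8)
The plan is to reduce both $d_H(\mathcal{C})$ and $d_H(\mathcal{C}^{\perp})$ to purely combinatorial statements about the linear (in)dependence and ranks of column-subsets of the generator matrix $G$, and then simply read off the three conditions. Throughout, for a subset $S\subseteq\{1,\ldots,n\}$ of column indices let $G_S$ denote the $k\times|S|$ submatrix of $G$ formed by the columns indexed by $S$. I would use two elementary facts, each provable in a line or two rather than cited.

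First, I claim $d_H(\mathcal{C})=n-\max\{|S|:\mathrm{rk}(G_S)<k\}$. Indeed, a nonzero codeword $\boldsymbol{c}=\boldsymbol{x}G$ vanishes on a position set $S$ precisely when $\boldsymbol{x}G_S=\boldsymbol{0}$, and a nonzero row vector $\boldsymbol{x}$ with $\boldsymbol{x}G_S=\boldsymbol{0}$ exists if and only if the $k$ rows of $G_S$ are linearly dependent, i.e.\ $\mathrm{rk}(G_S)<k$; hence the largest zero-set of a nonzero codeword has size $\max\{|S|:\mathrm{rk}(G_S)<k\}$, and the minimum weight is $n$ minus this quantity. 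Second, since $G$ is a parity-check matrix of $\mathcal{C}^{\perp}$, one has $d_H(\mathcal{C}^{\perp})=\min\{|S|:\text{the columns of }G_S\text{ are linearly dependent}\}$, because a dual codeword supported on $S$ is exactly a nontrivial linear relation among the columns of $G$ indexed by $S$.

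With these in hand the translation is routine. For the AMDS part, $d_H(\mathcal{C})=n-k$ is equivalent to $\max\{|S|:\mathrm{rk}(G_S)<k\}=k$, which splits into the statement that some $k$ columns have rank $<k$ (giving $\max\ge k$), namely (ii), together with the statement that no column-subset of size exceeding $k$ has rank $<k$ (giving $\max\le k$); the latter, by monotonicity of rank under inclusion, is equivalent to (iii). Thus $\mathcal{C}$ is AMDS iff (ii) and (iii) hold. For the dual, $\mathcal{C}^{\perp}$ is an $[n,n-k]$ code, so it is AMDS iff $d_H(\mathcal{C}^{\perp})=n-(n-k)=k$; by the second fact this says the smallest linearly dependent set of columns of $G$ has size exactly $k$, i.e.\ any $k-1$ columns are independent, which is (i), together with the existence of $k$ dependent columns, which is (ii). Since $\mathcal{C}$ is NMDS iff both $\mathcal{C}$ and $\mathcal{C}^{\perp}$ are AMDS, combining the two characterizations yields precisely conditions (i), (ii), (iii), establishing the NMDS claim.

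I expect the only genuine subtlety to be the monotonicity reduction behind (iii): one must observe that if some subset $T$ with $|T|\ge k+1$ satisfied $\mathrm{rk}(G_T)<k$, then every $(k+1)$-subset of $T$ would also have rank $<k$, so imposing full rank on all $(k+1)$-subsets already controls every larger subset. The remaining work is pure bookkeeping, namely checking that each equivalence runs in both directions and that condition (ii) is correctly shared between the $\mathcal{C}$-side characterization (ii),(iii) and the $\mathcal{C}^{\perp}$-side characterization (i),(ii).
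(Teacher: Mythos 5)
Your proof is correct, but there is nothing in the paper to compare it against: the paper states this lemma as a quoted result from \cite{3} and gives no proof at all, so your argument is genuinely supplementary rather than a variant of an internal one. Your route is the standard self-contained one, and all the steps check out: the identity $d_H(\mathcal{C})=n-\max\{|S|:\mathrm{rk}(G_S)<k\}$ is valid (note it silently uses that $G$ has full row rank $k$, so $\boldsymbol{x}\mapsto \boldsymbol{x}G$ carries nonzero vectors to nonzero codewords --- worth making explicit); the parity-check characterization $d_H(\mathcal{C}^{\perp})=\min\{|S|:\text{columns of }G_S\text{ dependent}\}$ is the classical fact; and the monotonicity observation (any $T$ with $|T|\geq k+1$ and $\mathrm{rk}(G_T)<k$ forces every $(k+1)$-subset of $T$ to have rank $<k$) correctly reduces the condition ``$\max\leq k$'' to condition (iii). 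Splitting $d_H(\mathcal{C})=n-k$ into (ii) plus (iii), splitting $d_H(\mathcal{C}^{\perp})=k$ into (i) plus (ii), and intersecting under the paper's definition of NMDS (both $\mathcal{C}$ and $\mathcal{C}^{\perp}$ AMDS) yields exactly (i)--(iii). What your approach buys is independence from the literature: the paper's reliance on \cite{3} is replaced by two elementary linear-algebra facts, which also makes transparent why (ii) appears in both the primal and the dual characterizations.
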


According to Lemma \ref{anmds}, we present the sufficient and necessary conditions for $\mathcal{C}_1$ to be AMDS or NMDS.

\begin{theorem}\label{amds1}
Let $\eta^{-1}\in \Omega_1$. Then $\mathcal{C}_1$  is AMDS if and only if for each $(k+1)$-subset  $\mathcal{J} \subseteq\textit{\textbf{[}}n\textit{\textbf{]}}$, there exists a $k$-subset  $\mathcal{I} \subseteq \mathcal{J}$ such that $\eta^{-1}\neq -g_{h}^{(t)}(\mathcal{I})$.
\end{theorem}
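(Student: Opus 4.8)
The plan is to apply the AMDS characterization of Lemma \ref{anmds} and translate each of its two matrix conditions into a statement about the quantities $g_{h}^{(t)}(\mathcal{I})$, using the minor computation underlying Lemma \ref{lemma1}. Throughout I take $G_{1}$ from (\ref{eqgt}) as the generator matrix of $\mathcal{C}_{1}$. The central computational fact, obtained exactly as in Lemma \ref{lemma1} but applied to the columns indexed by an arbitrary $k$-subset $\mathcal{I}\subseteq\textit{\textbf{[}}n\textit{\textbf{]}}$ rather than the first $k$ columns, is that the $k\times k$ minor of $G_{1}$ supported on $\mathcal{I}$ equals $(1+\eta g_{h}^{(t)}(\mathcal{I}))\cdot|M_{k}(\boldsymbol{\alpha}_{\mathcal{I}})|$, where $\boldsymbol{\alpha}_{\mathcal{I}}$ denotes the subvector of $\boldsymbol{\alpha}$ indexed by $\mathcal{I}$. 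Since $\alpha_{1},\ldots,\alpha_{n}$ are linearly independent over $\mathbb{F}_{q}$, Lemma \ref{dk} gives $|M_{k}(\boldsymbol{\alpha}_{\mathcal{I}})|\neq 0$, so this minor vanishes if and only if $\eta^{-1}=-g_{h}^{(t)}(\mathcal{I})$.

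First I would dispose of condition (ii) of Lemma \ref{anmds}. By hypothesis $\eta^{-1}\in\Omega_{1}$, so by the definition of $\Omega_{1}$ there is a $k$-subset $\mathcal{I}_{0}$ with $\eta^{-1}=-g_{h}^{(t)}(\mathcal{I}_{0})$; the minor identity above then shows that the $k$ columns of $G_{1}$ indexed by $\mathcal{I}_{0}$ are linearly dependent, so condition (ii) holds automatically. Consequently, under the standing hypothesis $\eta^{-1}\in\Omega_{1}$, Lemma \ref{anmds} reduces the AMDS property of $\mathcal{C}_{1}$ to condition (iii) alone: any $k+1$ columns of $G_{1}$ have rank $k$.

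Next I would reinterpret condition (iii) in terms of minors. Fix a $(k+1)$-subset $\mathcal{J}\subseteq\textit{\textbf{[}}n\textit{\textbf{]}}$ and consider the $k\times(k+1)$ submatrix of $G_{1}$ formed by its columns. Because $G_{1}$ has only $k$ rows, this submatrix has rank at most $k$, and it has rank exactly $k$ if and only if at least one of its $k\times k$ minors is non-zero, that is, if and only if there is a $k$-subset $\mathcal{I}\subseteq\mathcal{J}$ whose minor is non-zero. By the minor identity, such an $\mathcal{I}$ exists precisely when $\eta^{-1}\neq -g_{h}^{(t)}(\mathcal{I})$ for some $k$-subset $\mathcal{I}\subseteq\mathcal{J}$. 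Quantifying over all $(k+1)$-subsets $\mathcal{J}$ shows that condition (iii) is equivalent to the condition stated in the theorem, and combining this with the automatic validity of (ii) completes the proof.

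Since every step is a direct translation via the minor identity already established in Lemma \ref{lemma1}, I do not anticipate a genuine obstacle. The only point requiring a little care is the observation that the rank of any set of columns is capped at $k$ by the number of rows of $G_{1}$; this is exactly what converts the abstract rank requirement in condition (iii) into the concrete existence-of-a-nonzero-minor condition, and hence into the displayed inequality $\eta^{-1}\neq -g_{h}^{(t)}(\mathcal{I})$.
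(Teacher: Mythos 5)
Your proposal is correct and follows essentially the same route as the paper: both reduce the claim to conditions (ii) and (iii) of Lemma \ref{anmds}, note that (ii) holds automatically from $\eta^{-1}\in\Omega_1$, and identify (iii) with the stated condition via the minor identity $(1+\eta g_{h}^{(t)}(\mathcal{I}))\,|M_{k}(\boldsymbol{\alpha}_{\mathcal{I}})|$ extended from Lemma \ref{lemma1} to arbitrary $k$-subsets. The only difference is that you spell out the rank-versus-nonzero-minor translation and the column-dependence argument explicitly, which the paper leaves implicit.
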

\begin{proof}
Since $\eta^{-1}\in \Omega_1$, the condition (ii) of Lemma \ref{anmds} holds. Thus, the code $\mathcal{C}_1$ is AMDS if and only if the condition (iii) of Lemma \ref{anmds} holds,  if and only if for each $(k+1)$-subset  $\mathcal{J} \subseteq\textit{\textbf{[}}n\textit{\textbf{]}}$, there exists a $k$-subset  $\mathcal{I} \subseteq \mathcal{J}$ such that $\eta^{-1}\neq -g_{h}^{(t)}(\mathcal{I})$.
\end{proof}
\begin{theorem}\label{nmds1}
Let $\eta^{-1}\in \Omega_1$ and $h\in \{0,k-1\}$. Then  $\mathcal{C}_1$ is NMDS if and only if for each $(k+1)$-subset  $\mathcal{J} \subseteq\textit{\textbf{[}}n\textit{\textbf{]}}$, there exists a $k$-subset  $\mathcal{I} \subseteq \mathcal{J}$ such that $\eta^{-1}\neq -g_{h}^{(t)}(\mathcal{I})$.
\end{theorem}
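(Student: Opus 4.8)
The plan is to run the argument through the characterization of NMDS codes in Lemma~\ref{anmds}, under which an $[n,k]$ code is NMDS precisely when its generator matrix satisfies (i) every $k-1$ columns are linearly independent, (ii) some $k$ columns are linearly dependent, and (iii) every $k+1$ columns have rank $k$. First I would dispose of (ii) and (iii), which are already under control. Since $\eta^{-1}\in\Omega_1$, by definition there is a $k$-subset $\mathcal{I}$ with $\eta^{-1}=-g_h^{(t)}(\mathcal{I})$, and the computation of Lemma~\ref{lemma1} shows that the $k\times k$ minor of $G_1$ on the columns indexed by $\mathcal{I}$ then vanishes, so (ii) holds. Moreover, condition (iii) is word-for-word the statement shown equivalent to $\mathcal{C}_1$ being AMDS in Theorem~\ref{amds1}, namely that for each $(k+1)$-subset $\mathcal{J}$ there is a $k$-subset $\mathcal{I}\subseteq\mathcal{J}$ with $\eta^{-1}\neq -g_h^{(t)}(\mathcal{I})$. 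Hence the whole theorem reduces to showing that, when $h\in\{0,k-1\}$, condition (i) holds \emph{automatically}; once this is established, NMDS $\Longleftrightarrow$ (iii) $\Longleftrightarrow$ the displayed condition.

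The heart of the proof is therefore condition (i). I would fix an arbitrary $(k-1)$-subset $\mathcal{K}=\{i_1,\dots,i_{k-1}\}\subseteq\textit{\textbf{[}}n\textit{\textbf{]}}$, let $G_1[\mathcal{K}]$ denote the $k\times(k-1)$ submatrix of $G_1$ on those columns, and show it has rank $k-1$. The idea is to delete the single twisted row and recognize the remaining $(k-1)\times(k-1)$ block as a genuine Moore matrix. For $h=k-1$ I delete the last row, leaving the rows $\boldsymbol{\alpha},\boldsymbol{\alpha}^{[1]},\dots,\boldsymbol{\alpha}^{[k-2]}$ restricted to $\mathcal{K}$, which is exactly $M_{k-1}(\boldsymbol{\alpha}_{\mathcal{K}})$ with $\boldsymbol{\alpha}_{\mathcal{K}}=[\alpha_{i_1},\dots,\alpha_{i_{k-1}}]$. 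For $h=0$ I delete the first row, leaving $\boldsymbol{\alpha}^{[1]},\dots,\boldsymbol{\alpha}^{[k-1]}$ on $\mathcal{K}$, which equals $M_{k-1}(\boldsymbol{\alpha}_{\mathcal{K}}^{[1]})$. In either case the generating entries are $\alpha_{i_j}$, respectively $\alpha_{i_j}^{q}$; since $\alpha_{i_1},\dots,\alpha_{i_{k-1}}$ are $\mathbb{F}_q$-linearly independent (being a sub-family of the independent $\alpha_1,\dots,\alpha_n$) and the Frobenius $x\mapsto x^{q}$ is an $\mathbb{F}_q$-linear bijection that preserves $\mathbb{F}_q$-independence, Lemma~\ref{dk} yields a nonzero determinant. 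Thus the deleted $(k-1)\times(k-1)$ block is invertible, $G_1[\mathcal{K}]$ has rank $k-1$, and (i) holds for every $\mathcal{K}$.

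Combining the pieces, for $h\in\{0,k-1\}$ conditions (i) and (ii) hold unconditionally, so by Lemma~\ref{anmds} the code $\mathcal{C}_1$ is NMDS if and only if (iii) holds, which by Theorem~\ref{amds1} is exactly the displayed requirement. I expect the main obstacle to be precisely the verification of (i), and it is here that the hypothesis $h\in\{0,k-1\}$ is indispensable: only when the twisted row sits at the very top or the very bottom does its removal leave a block of \emph{consecutive} Frobenius powers, i.e.\ an honest Moore matrix to which Lemma~\ref{dk} applies. For an interior index $0<h<k-1$, deleting the twisted row leaves a gap in the exponents $0,\dots,h-1,h+1,\dots,k-1$, and the resulting ``skew'' minor is no longer governed by Lemma~\ref{dk}, so the automatic independence of $k-1$ columns may fail; this explains why the NMDS characterization is stated only at the boundary values of $h$.
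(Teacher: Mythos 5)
Your proposal is correct and follows essentially the same route as the paper: both reduce the statement to Lemma~\ref{anmds}, observing that $\eta^{-1}\in\Omega_1$ gives condition (ii), that $h\in\{0,k-1\}$ gives condition (i), and that condition (iii) is equivalent to the displayed requirement via the minor computation of Lemma~\ref{lemma1}. The only difference is that the paper simply asserts condition (i) from $h\in\{0,k-1\}$, whereas you supply the justification explicitly (deleting the twisted row leaves a genuine Moore matrix of consecutive Frobenius powers, invertible by Lemma~\ref{dk}), which fills in a detail the paper leaves implicit.
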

\begin{proof}
Since $h\in \{0,k-1\}$, the condition (i) of Lemma \ref{anmds} holds.
Since $\eta^{-1}\in \Omega_1$, the condition (ii) of Lemma \ref{anmds} holds. Therefore, $\mathcal{C}_1$ is NMDS if and only if the condition (iii) of Lemma \ref{anmds} holds,  if and only if for each $(k+1)$-subset  $\mathcal{J} \subseteq\textit{\textbf{[}}n\textit{\textbf{]}}$, there exists a $k$-subset  $\mathcal{I} \subseteq \mathcal{J}$ such that $\eta^{-1}\neq -g_{h}^{(t)}(\mathcal{I})$.
\end{proof}

\section{Twisted Gabidulin codes with two twists}
By adding two monomials to the same position of each linearized polynomial $f(x)=\sum_{i=0}^{k-1} f_i x^{[i]}$ in the definition of Gabidulin codes, we obtain twisted Gabidulin codes with two twists which are defined as follows.
\begin{definition}\label{l2}
Let $n, k$ be positive integers with $k<n\leq m$ and $\boldsymbol{\eta}=[\eta_{1},\eta_{2}]\in (\mathbb{F}_{q^m}^{\ast})^{2}$. Denote $\mathcal{P}_{2}$ as the set of twisted linearized polynomials over $\mathbb{F}_{q^{m}}$ given by
\begin{equation}\label{p2}
\mathcal{P}_{2}:=\left\{f(x)=\sum_{i=0}^{k-1} f_i x^{[i]}+\eta_{1} f_{h}x^{[k+t_{1}]}+\eta_{2}f_{h} x^{[k+t_2]}: f_i \in \mathbb{F}_{q^m},0\leq i\leq k-1\right\},
\end{equation}
where $0\leq h\leq k-1$ and $0\leq t_{1}<t_{2}\leq n-k-1$.  Let $\boldsymbol{\alpha}=[\alpha_{1},\ldots,\alpha_{n}]\in \mathbb{F}_{q^{m}}^{n}$ with $\mathrm{rk}_{\mathbb{F}_{q}}(\boldsymbol{\alpha})=n$. The twisted Gabidulin code with two twists is defined by
\begin{equation}\label{c2}
\mathcal{C}_{2}:=\mathrm{ev}_{\boldsymbol{\alpha}}\left(\mathcal{P}_{2}\right)\subseteq \mathbb{F}_{q^m}^n.
\end{equation}
\end{definition}
Unless otherwise stated, let $\mathcal{C}_{2}$ be the twisted Gabidulin code defined as in (\ref{c2}). Now we give two useful propositions about the code $\mathcal{C}_{2}$.

\begin{proposition}
$\mathcal{C}_{2}$ is an $[n, k]$ linear code over $\mathbb{F}_{q^{m}}$ with the generator matrix
\begin{equation}\label{eqgt2}
G_{2}=\left[\begin{array}{ccc}
\alpha_{1} & \cdots & \alpha_{n} \\
\vdots &  & \vdots \\
\alpha_{1}^{[h-1]} & \cdots & \alpha_{n}^{[h-1]} \\
\alpha_{1}^{[h]}+\eta_{1} \alpha_{1}^{[k+t_{1}]}+\eta_{2} \alpha_{1}^{[k+t_{2}]} & \cdots & \alpha_{n}^{[h]}+\eta_{1} \alpha_{n}^{[k+t_{1}]}+\eta_{2} \alpha_{n}^{[k+t_{2}]} \\
\alpha_{1}^{[h+1]} & \cdots & \alpha_{n}^{[h+1]} \\
\vdots  & & \vdots \\
\alpha_{1}^{[k-1]} & \cdots & \alpha_{n}^{[k-1]}
\end{array}\right].\end{equation}
\end{proposition}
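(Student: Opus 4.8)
The plan is to argue exactly as one does for the one-twist code, by exhibiting an explicit $\mathbb{F}_{q^m}$-basis of the polynomial space $\mathcal{P}_2$, pushing it through the evaluation map, and then using a $q$-degree-versus-root count to see that no dimension is lost. First I would check that $\mathcal{P}_2$ is an $\mathbb{F}_{q^m}$-linear subspace of $\mathbb{F}_{q^m}[x;\sigma]$: for $f,g\in\mathcal{P}_2$ with coefficient tuples $(f_0,\dots,f_{k-1})$, $(g_0,\dots,g_{k-1})$ and for $a,b\in\mathbb{F}_{q^m}$, the combination $af+bg$ again has the prescribed shape, because its $x^{[h]}$-coefficient is $af_h+bg_h$ while the two twist coefficients are $\eta_1(af_h+bg_h)$ and $\eta_2(af_h+bg_h)$, so they remain locked to the $x^{[h]}$-coefficient. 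Reading off the coefficient of $x^{[i]}$ for $0\le i\le k-1$ (these exponents are distinct from $k+t_1$ and $k+t_2$, since $k+t_1,k+t_2\ge k>k-1$) shows that the parameterization $(f_0,\dots,f_{k-1})\mapsto f$ is a bijection, so $\dim_{\mathbb{F}_{q^m}}\mathcal{P}_2=k$, with basis $p_j(x)=x^{[j]}$ for $j\neq h$ and $p_h(x)=x^{[h]}+\eta_1 x^{[k+t_1]}+\eta_2 x^{[k+t_2]}$.

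Since $\mathrm{ev}_{\boldsymbol{\alpha}}$ is $\mathbb{F}_{q^m}$-linear, $\mathcal{C}_2=\mathrm{ev}_{\boldsymbol{\alpha}}(\mathcal{P}_2)$ is an $\mathbb{F}_{q^m}$-linear code spanned by the images $\mathrm{ev}_{\boldsymbol{\alpha}}(p_j)$, $0\le j\le k-1$. For $j\neq h$ this image is $\boldsymbol{\alpha}^{[j]}$, and for $j=h$ it is $\boldsymbol{\alpha}^{[h]}+\eta_1\boldsymbol{\alpha}^{[k+t_1]}+\eta_2\boldsymbol{\alpha}^{[k+t_2]}$; these are precisely the rows of $G_2$ in $(\ref{eqgt2})$. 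It therefore remains only to verify that these $k$ rows are $\mathbb{F}_{q^m}$-linearly independent, equivalently that $\mathrm{ev}_{\boldsymbol{\alpha}}$ is injective on $\mathcal{P}_2$; this will simultaneously give $\dim_{\mathbb{F}_{q^m}}\mathcal{C}_2=k$ and confirm that $G_2$ is a generator matrix.

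For injectivity I would compare $q$-degrees. Suppose $f\in\mathcal{P}_2$ satisfies $f(\alpha_i)=0$ for all $1\le i\le n$. By Lemma~\ref{sp2} there exists $h(x)\in\mathbb{F}_{q^m}[x;\sigma]$ with $f(x)=h(x)\circ\prod_{\alpha\in U}(x-\alpha)$, where $U=\langle\alpha_1,\dots,\alpha_n\rangle_{\mathbb{F}_q}$. As $\mathrm{rk}_{\mathbb{F}_q}(\boldsymbol{\alpha})=n$ we have $\dim_{\mathbb{F}_q}U=n$, so by Lemma~\ref{sp1} the product $\prod_{\alpha\in U}(x-\alpha)$ is a linearized polynomial of $q$-degree $n$. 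Since the $q$-degree is additive under the composition $\circ$ (immediate from the top term of the multiplication formula), a nonzero $f$ would satisfy $\deg_q f\ge n$; but the highest term occurring in any element of $\mathcal{P}_2$ is $x^{[k+t_2]}$, and the constraint $t_2\le n-k-1$ from Definition~\ref{l2} gives $\deg_q f\le k+t_2\le n-1$. These bounds are incompatible, so $f=0$ and hence every $f_i=0$, proving injectivity. The one genuinely load-bearing step is the inequality $k+t_2\le n-1$, which is exactly why the admissible range $0\le t_1<t_2\le n-k-1$ is imposed; everything else is bookkeeping, and this is the same mechanism that forces ordinary Gabidulin codes to have full dimension.
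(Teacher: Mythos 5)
Your proof is correct and complete. The paper states this proposition without any proof, treating it as routine, so there is nothing to compare against; your argument supplies exactly the justification the authors leave implicit. The decomposition into three steps — (a) $\mathcal{P}_2$ is $\mathbb{F}_{q^m}$-linear of dimension $k$ because the twist coefficients are locked to $f_h$ and the exponents $k+t_1,k+t_2$ do not collide with $0,\dots,k-1$; (b) the images of your basis $p_0,\dots,p_{k-1}$ under $\mathrm{ev}_{\boldsymbol{\alpha}}$ are precisely the rows of $G_2$ in (\ref{eqgt2}); (c) injectivity of $\mathrm{ev}_{\boldsymbol{\alpha}}$ on $\mathcal{P}_2$ via Lemma~\ref{sp1}, Lemma~\ref{sp2} and additivity of $q$-degree under $\circ$ — is the standard Gabidulin-type dimension argument, and you correctly isolate the one load-bearing inequality $k+t_2\le n-1$, which is exactly what the constraint $t_2\le n-k-1$ in Definition~\ref{l2} is for. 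One could alternatively get (c) by observing that all Frobenius powers appearing in $G_2$ lie in $\{0,\dots,n-1\}$, so the rows of $G_2$ are images of the linearly independent rows $\boldsymbol{\alpha}^{[0]},\dots,\boldsymbol{\alpha}^{[n-1]}$ of the invertible Moore matrix $M_n(\boldsymbol{\alpha})$ under a full-rank $k\times n$ coefficient matrix, but that is the same mechanism in matrix language and buys nothing extra.
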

\begin{proposition}
For the twisted Gabidulin code $\mathcal{C}_{2}$, we have $$\mathcal{C}_{2}=\left\langle\boldsymbol{\alpha}^{[h]}+\eta_{1} \boldsymbol{\alpha}^{[k+t_{1}]}+\eta_{2} \boldsymbol{\alpha}^{[k+t_{2}]}, \boldsymbol{\alpha}^{[s]}(s\in \{0,\ldots,k-1\}\backslash\{h\})  \right\rangle_{\mathbb{F}_{q^m}}.$$
\end{proposition}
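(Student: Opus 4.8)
The plan is to exploit the $\mathbb{F}_{q^m}$-linearity of the evaluation map $\mathrm{ev}_{\boldsymbol{\alpha}}$ together with an explicit $\mathbb{F}_{q^m}$-spanning set of the polynomial space $\mathcal{P}_2$. Since $\mathcal{C}_2 = \mathrm{ev}_{\boldsymbol{\alpha}}(\mathcal{P}_2)$ by Definition \ref{l2}, once a spanning set of $\mathcal{P}_2$ is identified, its image under $\mathrm{ev}_{\boldsymbol{\alpha}}$ will span $\mathcal{C}_2$, and the equality will follow.

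First I would note that every $f \in \mathcal{P}_2$ is determined by the free coefficients $f_0, \ldots, f_{k-1} \in \mathbb{F}_{q^m}$, so $\mathcal{P}_2$ is an $\mathbb{F}_{q^m}$-vector space. Rewriting the defining expression in (\ref{p2}) as
$$f(x)=\sum_{i\in\{0,\ldots,k-1\}\setminus\{h\}} f_i x^{[i]}+f_{h}\left(x^{[h]}+\eta_{1} x^{[k+t_{1}]}+\eta_{2} x^{[k+t_2]}\right),$$
I would read off the natural spanning set
$$\mathcal{B}=\left\{x^{[s]}:s\in\{0,\ldots,k-1\}\setminus\{h\}\right\}\cup\left\{x^{[h]}+\eta_{1} x^{[k+t_{1}]}+\eta_{2} x^{[k+t_2]}\right\},$$
obtained by setting a single coefficient equal to $1$ and the others to $0$.

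Next I would apply the linear map $\mathrm{ev}_{\boldsymbol{\alpha}}$ to each element of $\mathcal{B}$. This gives $\mathrm{ev}_{\boldsymbol{\alpha}}(x^{[s]})=\boldsymbol{\alpha}^{[s]}$ for $s\in\{0,\ldots,k-1\}\setminus\{h\}$ and, by linearity, $\mathrm{ev}_{\boldsymbol{\alpha}}(x^{[h]}+\eta_{1} x^{[k+t_{1}]}+\eta_{2} x^{[k+t_2]})=\boldsymbol{\alpha}^{[h]}+\eta_{1} \boldsymbol{\alpha}^{[k+t_{1}]}+\eta_{2} \boldsymbol{\alpha}^{[k+t_2]}$. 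These are exactly the vectors appearing in the statement, and they coincide with the rows of the generator matrix $G_2$ in (\ref{eqgt2}). Because $\mathrm{ev}_{\boldsymbol{\alpha}}$ is $\mathbb{F}_{q^m}$-linear and $\mathcal{B}$ spans $\mathcal{P}_2$, the images of $\mathcal{B}$ span $\mathrm{ev}_{\boldsymbol{\alpha}}(\mathcal{P}_2)=\mathcal{C}_2$, which is precisely the claimed equality.

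There is no serious obstacle here; the only point deserving a moment's care is verifying that $\mathcal{B}$ genuinely spans all of $\mathcal{P}_2$. The subtlety is that the two twisted monomials $x^{[k+t_1]}$ and $x^{[k+t_2]}$ share the single coefficient $f_h$, so they must be packaged together into one generator rather than two; the rewriting displayed above makes this transparent and confirms that no additional generators are required. The argument mirrors, verbatim, the one used for the corresponding span of $\mathcal{C}_1$.
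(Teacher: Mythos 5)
Your proof is correct and is exactly the argument the paper has in mind: the paper states this proposition without proof because it follows immediately from Definition \ref{l2} by rewriting $f(x)=\sum_{i\neq h} f_i x^{[i]}+f_h\bigl(x^{[h]}+\eta_1 x^{[k+t_1]}+\eta_2 x^{[k+t_2]}\bigr)$ and using the $\mathbb{F}_{q^m}$-linearity of $\mathrm{ev}_{\boldsymbol{\alpha}}$, which is precisely what you do. Your remark about the two twisted monomials sharing the single coefficient $f_h$, and hence being packaged into one generator, correctly identifies the only point of substance.
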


Next we establish the necessary and sufficient condition for the twisted Gabidulin code $\mathcal{C}_{2}$ to be MRD.

\begin{theorem}\label{mm2} $\mathcal{C}_{2}$ is MRD if and only if $\boldsymbol{\eta}=[\eta_{1}, \eta_{2}] \in \mathcal{N}$, where
$$
\mathcal{N}=\left\{\boldsymbol{\eta} \in (\mathbb{F}_{q^{m}}^{*})^{2}: \forall V\in \mathcal{V}_q(k,n), \sum_{i=1}^{2}\eta_i\left|VM_{k}^{(h,k+t_i)}(\boldsymbol{\alpha})^{\top}\right| \neq -\left|V M_{k}(\boldsymbol{\alpha})^{\top}\right|\right\}.
$$
\end{theorem}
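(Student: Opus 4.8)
The plan is to mirror the argument used for Theorem \ref{mm1}, applying Proposition \ref{gmrd} together with the multilinearity of the determinant. First I would fix an arbitrary $V\in \mathcal{V}_q(k,n)$ and recall from Proposition \ref{gmrd} that $\mathcal{C}_2$ is MRD if and only if $\mathrm{rk}_{\mathbb{F}_{q^m}}(VG_2^{\top})=k$, equivalently $|VG_2^{\top}|\neq 0$, for every such $V$. Since $G_2$ in (\ref{eqgt2}) differs from the Moore matrix $M_k(\boldsymbol{\alpha})$ only in its $(h+1)$-th row, which equals $\boldsymbol{\alpha}^{[h]}+\eta_1\boldsymbol{\alpha}^{[k+t_1]}+\eta_2\boldsymbol{\alpha}^{[k+t_2]}$, the transpose $G_2^{\top}$ differs from $M_k(\boldsymbol{\alpha})^{\top}$ only in its $(h+1)$-th column, and this single column is a sum of three vectors.

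Next I would expand $|VG_2^{\top}|$ by exploiting linearity of the determinant in that $(h+1)$-th column. Writing the column as $V(\boldsymbol{\alpha}^{[h]})^{\top}+\eta_1 V(\boldsymbol{\alpha}^{[k+t_1]})^{\top}+\eta_2 V(\boldsymbol{\alpha}^{[k+t_2]})^{\top}$ and splitting the determinant accordingly gives
$$|VG_2^{\top}|=\left|VM_{k}(\boldsymbol{\alpha})^{\top}\right|+\eta_1\left|VM_{k}^{(h,k+t_1)}(\boldsymbol{\alpha})^{\top}\right|+\eta_2\left|VM_{k}^{(h,k+t_2)}(\boldsymbol{\alpha})^{\top}\right|,$$
where each $M_{k}^{(h,k+t_i)}(\boldsymbol{\alpha})$ is obtained from $M_k(\boldsymbol{\alpha})$ by replacing its $(h+1)$-th row $\boldsymbol{\alpha}^{[h]}$ with $\boldsymbol{\alpha}^{[k+t_i]}$, that is, a matrix of the form (\ref{mkht}). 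This is the exact two-twist analogue of the single-twist expansion appearing in the proof of Theorem \ref{mm1}.

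Finally I would combine these observations: for a fixed $V$, the condition $|VG_2^{\top}|\neq 0$ holds if and only if
$$\sum_{i=1}^{2}\eta_i\left|VM_{k}^{(h,k+t_i)}(\boldsymbol{\alpha})^{\top}\right|\neq -\left|VM_{k}(\boldsymbol{\alpha})^{\top}\right|,$$
and since $\mathcal{C}_2$ is MRD precisely when this inequality holds simultaneously for every $V\in\mathcal{V}_q(k,n)$, this is exactly the membership condition $\boldsymbol{\eta}\in\mathcal{N}$. I do not anticipate a genuine obstacle here, since the set $\mathcal{N}$ is defined so as to record precisely this nonvanishing condition; the only care needed is the bookkeeping of the multilinear split, ensuring each of the two twist contributions is correctly identified with its Moore-type determinant of the form (\ref{mkht}), exactly as in the single-twist case.
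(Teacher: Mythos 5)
Your proposal is correct and follows essentially the same route as the paper's own proof: both apply Proposition \ref{gmrd} and expand $|VG_2^{\top}|$ by linearity of the determinant in the single twisted row/column, obtaining $|VG_2^{\top}|=\left|VM_{k}(\boldsymbol{\alpha})^{\top}\right|+\eta_1\left|VM_{k}^{(h,k+t_1)}(\boldsymbol{\alpha})^{\top}\right|+\eta_2\left|VM_{k}^{(h,k+t_2)}(\boldsymbol{\alpha})^{\top}\right|$ and reading off the membership condition for $\mathcal{N}$. The only cosmetic difference is that you phrase the split via the $(h+1)$-th column of $VG_2^{\top}$ while the paper splits the $(h+1)$-th row of $G_2$ before transposing, which is the same computation.
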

\begin{proof}
Let $G_2$ given in (\ref{eqgt2}) be a generator matrix of the code  $\mathcal{C}_2$. For  $V\in \mathcal{V}_{q}(k, n)$,  we obtain
$$\begin{aligned}
\left|V G_2^{\top}\right|&=\left|V\left[\begin{array}{cccc}
\boldsymbol{\alpha} \\
\vdots  \\
\boldsymbol{\alpha}^{[h-1]} \\
\boldsymbol{\alpha}^{[h]} \\
\boldsymbol{\alpha}^{[h+1]}  \\
\vdots \\
\boldsymbol{\alpha}^{[k-1]}
\end{array}\right]^{\top}\right|+\left|V\left[\begin{array}{cccc}
\boldsymbol{\alpha} \\
\vdots  \\
\boldsymbol{\alpha}^{[h-1]} \\
\eta_1 \boldsymbol{\alpha}^{[k+t_1]} \\
\boldsymbol{\alpha}^{[h+1]}  \\
\vdots \\
\boldsymbol{\alpha}^{[k-1]}
\end{array}\right]^{\top}\right|+\left|V\left[\begin{array}{cccc}
\boldsymbol{\alpha} \\
\vdots  \\
\boldsymbol{\alpha}^{[h-1]} \\
\eta_2 \boldsymbol{\alpha}^{[k+t_2]} \\
\boldsymbol{\alpha}^{[h+1]}  \\
\vdots \\
\boldsymbol{\alpha}^{[k-1]}
\end{array}\right]^{\top}\right|\\
&=\left|V M_{k}(\boldsymbol{\alpha})^{\top}\right|+\eta_1\left|VM_{k}^{(h,k+t_1)}(\boldsymbol{\alpha})^{\top}\right|+\eta_2\left|VM_{k}^{(h,k+t_2)}(\boldsymbol{\alpha})^{\top}\right|.
\end{aligned}$$
Thus, by Proposition \ref{gmrd}, $\mathcal{C}_{2}$ is MRD if and only if $\left|V G_2^{\top}\right|\neq 0$, if and only if
$$\eta_1\left|VM_{k}^{(h,k+t_1)}(\boldsymbol{\alpha})^{\top}\right|+\eta_2\left|VM_{k}^{(h,k+t_2)}(\boldsymbol{\alpha})^{\top}\right| \neq -\left|V M_{k}(\boldsymbol{\alpha})^{\top}\right|,$$
that is,
$[\eta_{1}, \eta_{2}] \in \mathcal{N}$.
\end{proof}

Below we construct a class of MRD twisted Gabidulin codes with two twists.
\begin{proposition}
Let $s$ be a positive integer with $n\leq s$ such that $\mathbb{F}_{q} \subsetneq \mathbb{F}_{q^{s}} \subsetneq \mathbb{F}_{q^m}$ is a chain of subfields. Let $\alpha_{1}, \ldots, \alpha_{n}\in \mathbb{F}_{q^s}$ be linearly independent over $\mathbb{F}_{q}$ and  $\boldsymbol{\eta}=[\eta_1,\eta_2]$ with $\eta_2=b\eta_1$, where $\eta_1\in \mathbb{F}_{q^m} \backslash \mathbb{F}_{q^s}$ and $b\in \mathbb{F}_{q^s}^*$. Then  $\mathcal{C}_{2}$  is MRD.
\end{proposition}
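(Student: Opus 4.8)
The plan is to invoke Theorem \ref{mm2} and reduce the MRD condition to a single non-vanishing statement controlled by the field in which the $\alpha_i$ live. By Theorem \ref{mm2}, the code $\mathcal{C}_2$ is MRD precisely when, for every $V\in\mathcal{V}_q(k,n)$,
$$\eta_1\left|VM_k^{(h,k+t_1)}(\boldsymbol{\alpha})^\top\right|+\eta_2\left|VM_k^{(h,k+t_2)}(\boldsymbol{\alpha})^\top\right|\neq-\left|VM_k(\boldsymbol{\alpha})^\top\right|.$$
Substituting $\eta_2=b\eta_1$ and factoring out $\eta_1$, this becomes
$$\eta_1\left(\left|VM_k^{(h,k+t_1)}(\boldsymbol{\alpha})^\top\right|+b\left|VM_k^{(h,k+t_2)}(\boldsymbol{\alpha})^\top\right|\right)\neq-\left|VM_k(\boldsymbol{\alpha})^\top\right|.$$

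First I would observe that, since $\alpha_1,\ldots,\alpha_n\in\mathbb{F}_{q^s}$ and $\mathbb{F}_{q^s}$ is closed under the Frobenius map $x\mapsto x^q$, every entry $\alpha_j^{[i]}$ of the matrices $M_k(\boldsymbol{\alpha})$ and $M_k^{(h,k+t_i)}(\boldsymbol{\alpha})$ lies in $\mathbb{F}_{q^s}$; together with $V\in\mathbb{F}_q^{k\times n}\subseteq\mathbb{F}_{q^s}^{k\times n}$ and $b\in\mathbb{F}_{q^s}$, this forces all three determinants, and hence the bracketed combination, to lie in $\mathbb{F}_{q^s}$. Writing $A:=\left|VM_k^{(h,k+t_1)}(\boldsymbol{\alpha})^\top\right|+b\left|VM_k^{(h,k+t_2)}(\boldsymbol{\alpha})^\top\right|\in\mathbb{F}_{q^s}$ and $B:=-\left|VM_k(\boldsymbol{\alpha})^\top\right|\in\mathbb{F}_{q^s}$, the goal reduces to showing $\eta_1 A\neq B$ for every admissible $V$.

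I would finish with a case distinction on whether $A$ vanishes. If $A=0$, then $\eta_1 A=0$, while by Remark \ref{gneq0} (the Gabidulin code $\mathcal{G}_{n,k}(\boldsymbol{\alpha})$ is MRD, so $\left|VM_k(\boldsymbol{\alpha})^\top\right|\neq 0$) we have $B\neq 0$; hence $\eta_1 A\neq B$. If $A\neq 0$ and one had $\eta_1 A=B$, then $\eta_1=B/A\in\mathbb{F}_{q^s}$, contradicting the hypothesis $\eta_1\in\mathbb{F}_{q^m}\setminus\mathbb{F}_{q^s}$; hence again $\eta_1 A\neq B$. Since the inequality holds for all $V\in\mathcal{V}_q(k,n)$, Theorem \ref{mm2} yields that $\mathcal{C}_2$ is MRD.

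I do not expect a genuine obstacle here: this is a direct generalization of the one-twist case, the essential point being the closure of $\mathbb{F}_{q^s}$ under Frobenius together with the factorization $\eta_2=b\eta_1$, which collapses the two twist terms into a single $\mathbb{F}_{q^s}$-multiple of $\eta_1$. The only step demanding care is the degenerate case $A=0$, where one cannot divide by $A$ and must instead appeal to the MRD-ness of the underlying Gabidulin code to guarantee $B\neq 0$.
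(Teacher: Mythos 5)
Your proof is correct and follows essentially the same route as the paper: both invoke Theorem \ref{mm2}, use the Frobenius-closure of $\mathbb{F}_{q^s}$ to place all the determinants in $\mathbb{F}_{q^s}$, factor $\eta_1$ out of the two twist terms via $\eta_2=b\eta_1$, and conclude that the left-hand side cannot equal $-\left|VM_k(\boldsymbol{\alpha})^\top\right|\in\mathbb{F}_{q^s}^*$. Your explicit case split on $A=0$ versus $A\neq 0$ is exactly the content of the paper's assertion that $\eta_1 A\notin\mathbb{F}_{q^s}^*$, just spelled out in more detail.
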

\begin{proof}
For any $V\in \mathcal{V}_{q}(k, n)$, we have $\left|V M_{k}(\boldsymbol{\alpha})^{\top}\right|\in \mathbb{F}_{q^{s}}^*$ since $\alpha_{1}, \ldots, \alpha_{n}\in \mathbb{F}_{q^{s}}$. For $\eta_1\in \mathbb{F}_{q^m} \backslash \mathbb{F}_{q^s}$ and $b\in \mathbb{F}_{q^s}^*$,
$$
\sum_{i=1}^2\eta_i\left|VM_{k}^{(h,k+t_i)}(\boldsymbol{\alpha})^{\top}\right|
=\eta_1\left(\left|VM_{k}^{(h,k+t_1)}(\boldsymbol{\alpha})^{\top}\right|+b\left|VM_{k}^{(h,k+t_2)}(\boldsymbol{\alpha})^{\top}\right|\right)\notin \mathbb{F}_{q^{s}}^*
$$
since $\left|VM_{k}^{(h,k+t_1)}(\boldsymbol{\alpha})^{\top}\right|+b\left|VM_{k}^{(h,k+t_2)}(\boldsymbol{\alpha})^{\top}\right|\in \mathbb{F}_{q^{s}}.$
Therefore, we have $$\sum_{i=1}^2\eta_i\left|VM_{k}^{(h,k+t_i)}(\boldsymbol{\alpha})^{\top}\right| \neq -\left|V M_{k}(\boldsymbol{\alpha})^{\top}\right|.$$
By Theorem \ref{mm2}, $\mathcal{C}_{2}$ is MRD. 
\end{proof}

For the following construction of MRD codes which is a special case of \cite{furt}, we can provide a direct proof by using Theorem \ref{mm2}.
\begin{proposition}\label{777}
Let $s_1, s_2$ be  positive integers with $n\leq s_1$ such that $\mathbb{F}_{q} \subsetneq \mathbb{F}_{q^{s_1}}\subsetneq \mathbb{F}_{q^{s_2}} \subsetneq \mathbb{F}_{q^{s_3}}=\mathbb{F}_{q^m}$ is a chain of subfields. Let $\alpha_{1}, \ldots, \alpha_{n}\in \mathbb{F}_{q^{s_1}}$ be linearly independent over $\mathbb{F}_{q}$ and  $\eta_{i}\in \mathbb{F}_{q^{s_{i+1}}} \backslash \mathbb{F}_{q^{s_{i}}}$ for $i=1,2$. Then  $\mathcal{C}_{2}$  is MRD.
\end{proposition}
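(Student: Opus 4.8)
The plan is to apply Theorem \ref{mm2}, which reduces MRD-ness of $\mathcal{C}_2$ to verifying that for every $V\in\mathcal{V}_q(k,n)$ the inequality
$$\eta_1\bigl|VM_{k}^{(h,k+t_1)}(\boldsymbol{\alpha})^{\top}\bigr|+\eta_2\bigl|VM_{k}^{(h,k+t_2)}(\boldsymbol{\alpha})^{\top}\bigr| \neq -\bigl|V M_{k}(\boldsymbol{\alpha})^{\top}\bigr|$$
holds. The key structural fact I would exploit is the tower $\mathbb{F}_{q}\subsetneq\mathbb{F}_{q^{s_1}}\subsetneq\mathbb{F}_{q^{s_2}}\subsetneq\mathbb{F}_{q^m}$: since $\alpha_1,\ldots,\alpha_n\in\mathbb{F}_{q^{s_1}}$, all three determinants $\bigl|VM_k(\boldsymbol{\alpha})^{\top}\bigr|$, $\bigl|VM_{k}^{(h,k+t_1)}(\boldsymbol{\alpha})^{\top}\bigr|$, and $\bigl|VM_{k}^{(h,k+t_2)}(\boldsymbol{\alpha})^{\top}\bigr|$ lie in $\mathbb{F}_{q^{s_1}}$ (the entries are $q$-power images of the $\alpha_j$ and the $V$-entries lie in $\mathbb{F}_q$). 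Moreover, by Remark \ref{gneq0} the Gabidulin determinant $\bigl|VM_k(\boldsymbol{\alpha})^{\top}\bigr|$ is a \emph{nonzero} element of $\mathbb{F}_{q^{s_1}}$.

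The argument I would then run is a two-level ``field-escape'' contradiction. Suppose, for contradiction, that equality holds for some $V$. First I would isolate $\eta_2$: writing $D_i:=\bigl|VM_{k}^{(h,k+t_i)}(\boldsymbol{\alpha})^{\top}\bigr|$ and $D_0:=\bigl|VM_k(\boldsymbol{\alpha})^{\top}\bigr|$, the supposed equality reads $\eta_1 D_1+\eta_2 D_2=-D_0$ with $D_0,D_1,D_2\in\mathbb{F}_{q^{s_1}}$ and $D_0\neq 0$. Since $\eta_1\in\mathbb{F}_{q^{s_2}}$, the term $\eta_1 D_1$ lies in $\mathbb{F}_{q^{s_2}}$, and $-D_0$ lies in $\mathbb{F}_{q^{s_1}}\subseteq\mathbb{F}_{q^{s_2}}$; hence $\eta_2 D_2=-D_0-\eta_1 D_1\in\mathbb{F}_{q^{s_2}}$. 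But $\eta_2\in\mathbb{F}_{q^m}\setminus\mathbb{F}_{q^{s_2}}$ and $D_2\in\mathbb{F}_{q^{s_1}}\subseteq\mathbb{F}_{q^{s_2}}$, so $\eta_2 D_2\in\mathbb{F}_{q^{s_2}}$ forces $D_2=0$ (otherwise $\eta_2=(\eta_2 D_2)D_2^{-1}\in\mathbb{F}_{q^{s_2}}$, a contradiction). With $D_2=0$ the equality collapses to $\eta_1 D_1=-D_0$, and the same escape argument one level down applies: $D_1\in\mathbb{F}_{q^{s_1}}$ and $\eta_1\in\mathbb{F}_{q^m}\setminus\mathbb{F}_{q^{s_1}}$, so $\eta_1 D_1\in\mathbb{F}_{q^{s_1}}$ forces $D_1=0$, whence $-D_0=0$, i.e.\ $D_0=0$, contradicting $D_0\neq 0$.

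Thus no such $V$ exists, the defining inequality of $\mathcal{N}$ holds for all $V\in\mathcal{V}_q(k,n)$, and by Theorem \ref{mm2} the code $\mathcal{C}_2$ is MRD. I do not anticipate a serious obstacle here; the only point requiring a little care is confirming that each determinant genuinely lands in the claimed subfield, which follows because cofactor expansion produces $\mathbb{F}_q$-combinations of products of entries $\alpha_j^{[e]}$, and $\alpha_j\in\mathbb{F}_{q^{s_1}}$ is fixed by the relevant Frobenius structure so these products stay in $\mathbb{F}_{q^{s_1}}$. The essential idea is simply that multiplying a nonzero element of a smaller subfield by a field element strictly outside that subfield cannot land back inside it, applied successively along the two steps of the tower.
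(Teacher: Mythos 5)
Your proof is correct and follows essentially the same route as the paper's: both invoke Theorem \ref{mm2}, note that all three determinants lie in $\mathbb{F}_{q^{s_1}}$ with $\left|V M_{k}(\boldsymbol{\alpha})^{\top}\right|\neq 0$, and then run the two-step field-escape argument up the tower (the paper phrases it as showing the sum $A\notin\mathbb{F}_{q^{s_1}}^*$ via a case split on whether $\left|VM_{k}^{(h,k+t_2)}(\boldsymbol{\alpha})^{\top}\right|=0$, while you phrase it as a contradiction forcing $D_2=0$, then $D_1=0$, then $D_0=0$ — the same mathematics in contrapositive form).
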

\begin{proof}
For $\alpha_{1}, \ldots, \alpha_{n}\in \mathbb{F}_{q^{s_1}}$ and $\eta_{1}\in \mathbb{F}_{q^{s_{2}}} \backslash \mathbb{F}_{q^{s_{1}}}$, we have $$\left|V M_{k}(\boldsymbol{\alpha})^{\top}\right|\in \mathbb{F}_{q^{s_{1}}}^*\ \ \text{and}\ \ \eta_1\left|VM_{k}^{(h,k+t_1)}(\boldsymbol{\alpha})^{\top}\right|\in \mathbb{F}_{q^{s_{2}}} \backslash \mathbb{F}_{q^{s_{1}}}^*.$$
Denote $A:=\eta_1\left|VM_{k}^{(h,k+t_1)}(\boldsymbol{\alpha})^{\top}\right|+\eta_2\left|VM_{k}^{(h,k+t_2)}(\boldsymbol{\alpha})^{\top}\right|$.
If  $\left|VM_{k}^{(h,k+t_2)}(\boldsymbol{\alpha})^{\top}\right|=0$,
then $A=\eta_1\left|VM_{k}^{(h,k+t_1)}(\boldsymbol{\alpha})^{\top}\right|\notin \mathbb{F}_{q^{s_{1}}}^*$. Otherwise, we obtain
$$\eta_2= \frac{A-\eta_1\left|VM_{k}^{(h,k+t_1)}(\boldsymbol{\alpha})^{\top}\right|}{\left|VM_{k}^{(h,k+t_2)}(\boldsymbol{\alpha})^{\top}\right|}\notin \mathbb{F}_{q^{s_{2}}}.$$
Therefore, $A\notin \mathbb{F}_{q^{s_{2}}}$. In particular, $A\notin \mathbb{F}_{q^{s_{1}}}^*$. Hence, $A\neq -\left|V M_{k}(\boldsymbol{\alpha})^{\top}\right|$. By Theorem \ref{mm2}, $\mathcal{C}_{2}$ is MRD.\end{proof}

The following lemma plays an important role in determining when the code $\mathcal{C}_{2}$ is not MRD.
\begin{lemma}\label{lemma2}
Suppose that $\alpha_{1},\ldots,\alpha_{k}\in \mathbb{F}_{q^m}$ are linearly independent over $\mathbb{F}_{q}$. Let $\prod_{\alpha \in \left\langle\alpha_{1},\ldots,\alpha_{k}  \right\rangle_{\mathbb{F}_{q}}}(x-\alpha)=\sum_{j=0}^{k} c_{j} x^{[k-j]}$ and $c_j=0$ for $j>k$. Let $G_{2k}$ be the submatrix  formed by the first $k$ columns of $G_{2}$ given in (\ref{eqgt2}), then $\left|G_{2k}\right|$=0 if and only if 
$1+\eta_{1} g_{h}^{(t_{1})}+\eta_{2} g_{h}^{(t_{2})}=0$, where $g_{h}^{(t_{j})}=-\sum_{i=0}^{\min \{t_{j}, h\}} e_{t_{j},i} c_{k-h+i}^{[i]}$ and $e_{t_{j},0},\ldots,e_{t_{j},t_{j}}$ are as defined in Lemma \ref{lem1} for $j=1,2$.
\end{lemma}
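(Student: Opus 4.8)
The plan is to mimic, almost verbatim, the argument used for the one-twist case in Lemma \ref{lemma1}, the only new ingredient being that the twisted row now carries two extra terms instead of one. First I would write down the $k \times k$ submatrix $G_{2k}$ consisting of the first $k$ columns of $G_2$ from (\ref{eqgt2}): every row agrees with the corresponding row of the square Moore matrix $M_k(\boldsymbol{\alpha})$ with $\boldsymbol{\alpha} = [\alpha_1, \ldots, \alpha_k]$, except the $(h+1)$-th row, whose $j$-th entry is $\alpha_j^{[h]} + \eta_1 \alpha_j^{[k+t_1]} + \eta_2 \alpha_j^{[k+t_2]}$.

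Since the determinant is multilinear in the rows and the remaining $k-1$ rows are common to all the matrices involved, I would split the $(h+1)$-th row into its three summands and expand
$$|G_{2k}| = \left|M_k(\boldsymbol{\alpha})\right| + \eta_1 \left|M_k^{(h, k+t_1)}(\boldsymbol{\alpha})\right| + \eta_2 \left|M_k^{(h, k+t_2)}(\boldsymbol{\alpha})\right|,$$
where each $M_k^{(h, k+t_j)}(\boldsymbol{\alpha})$ is of the form (\ref{mkht}), obtained by replacing the Moore row $\boldsymbol{\alpha}^{[h]}$ with $\boldsymbol{\alpha}^{[k+t_j]}$. Next I would apply Lemma \ref{lem3} separately with $t = t_1$ and with $t = t_2$; this is legitimate because that lemma holds for any fixed shift $t$, and the quantities $g_h^{(t_j)} = -\sum_{i=0}^{\min\{t_j, h\}} e_{t_j, i} c_{k-h+i}^{[i]}$ are defined through Lemma \ref{lem1} precisely for the relevant $t_j$, the coefficients $c_j$ coming from the single product $\prod_{\alpha \in \langle \alpha_1, \ldots, \alpha_k \rangle_{\mathbb{F}_q}}(x-\alpha)$ regardless of $t$. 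This yields $\left|M_k^{(h, k+t_j)}(\boldsymbol{\alpha})\right| = g_h^{(t_j)} \left|M_k(\boldsymbol{\alpha})\right|$ for $j = 1, 2$, so that
$$|G_{2k}| = \left(1 + \eta_1 g_h^{(t_1)} + \eta_2 g_h^{(t_2)}\right) \left|M_k(\boldsymbol{\alpha})\right|.$$

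Finally, since $\alpha_1, \ldots, \alpha_k$ are linearly independent over $\mathbb{F}_q$, Lemma \ref{dk} guarantees $\left|M_k(\boldsymbol{\alpha})\right| \neq 0$, whence $|G_{2k}| = 0$ if and only if $1 + \eta_1 g_h^{(t_1)} + \eta_2 g_h^{(t_2)} = 0$, which is the desired equivalence.

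I do not expect any genuine obstacle here: the statement is a direct two-term generalization of Lemma \ref{lemma1}, and the entire proof rests on the linearity of the determinant together with the already-established Lemmas \ref{lem1}, \ref{lem3} and \ref{dk}. The only point deserving a moment's care is to confirm that the multilinear expansion is valid, namely that the three determinants share all rows other than the $(h+1)$-th, and that Lemma \ref{lem3} may indeed be invoked independently for each of the two twist parameters $t_1$ and $t_2$; both checks are immediate from the structure of $G_{2k}$.
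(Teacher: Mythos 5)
Your proposal is correct and follows essentially the same route as the paper: the paper likewise expands $|G_{2k}|$ by multilinearity of the determinant in the $(h+1)$-th row into $\left|M_{k}(\boldsymbol{\alpha})\right|+\eta_{1}\left|M_{k}^{(h,k+t_{1})}(\boldsymbol{\alpha})\right|+\eta_{2}\left|M_{k}^{(h,k+t_{2})}(\boldsymbol{\alpha})\right|$, applies Lemma \ref{lem3} separately for $t_{1}$ and $t_{2}$ to obtain the factor $1+\eta_{1}g_{h}^{(t_{1})}+\eta_{2}g_{h}^{(t_{2})}$, and concludes via Lemma \ref{dk} that $\left|M_{k}(\boldsymbol{\alpha})\right|\neq 0$. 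No gaps; your added remark about why Lemma \ref{lem3} can be invoked independently for each twist is a sound (if implicit in the paper) justification.
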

\begin{proof}
By applying Lemma \ref{lem3}, we have
$$\begin{aligned}
 &\ |G_{2k}|\\
=&\left|\begin{array}{ccc}
\alpha_{1} & \cdots & \alpha_{k} \\
\vdots &  & \vdots \\
\alpha_{1}^{[h-1]} & \cdots & \alpha_{k}^{[h-1]} \\
\alpha_{1}^{[h]}+\eta_{1} \alpha_{1}^{[k+t_{1}]}+\eta_{2} \alpha_{1}^{[k+t_{2}]} & \cdots & \alpha_{k}^{[h]}+\eta_{1} \alpha_{k}^{[k+t_{1}]}+\eta_{2} \alpha_{k}^{[k+t_{2}]} \\
\alpha_{1}^{[h+1]}  & \cdots & \alpha_{k}^{[h+1]} \\
\vdots  & & \vdots \\
\alpha_{1}^{[k-1]}  & \cdots & \alpha_{k}^{[k-1]}
\end{array}\right|\\
=&\left|\begin{array}{ccc}
\alpha_{1}  & \cdots & \alpha_{k} \\
\vdots  &  & \vdots \\
\alpha_{1}^{[h-1]}  & \cdots & \alpha_{k}^{[h-1]} \\
\alpha_{1}^{[h]}  & \cdots & \alpha_{k}^{[h]} \\
\alpha_{1}^{[h+1]}  & \cdots & \alpha_{k}^{[h+1]} \\
\vdots  & &\vdots \\
\alpha_{1}^{[k-1]} & \cdots & \alpha_{k}^{[k-1]}
\end{array}\right|+\eta_{1} \left|\begin{array}{ccc}
\alpha_{1}  & \cdots & \alpha_{k} \\
\vdots  &  & \vdots \\
\alpha_{1}^{[h-1]}  & \cdots & \alpha_{k}^{[h-1]} \\
\alpha_{1}^{[k+t_{1}]}  & \cdots & \alpha_{k}^{[k+t_{1}]} \\
\alpha_{1}^{[h+1]}  & \cdots & \alpha_{k}^{[h+1]} \\
\vdots  & & \vdots \\
\alpha_{1}^{[k-1]} & \cdots & \alpha_{k}^{[k-1]}
\end{array}\right|+\eta_{2} \left|\begin{array}{ccc}
\alpha_{1}  & \cdots & \alpha_{k} \\
\vdots  &  & \vdots \\
\alpha_{1}^{[h-1]}  & \cdots & \alpha_{k}^{[h-1]} \\
\alpha_{1}^{[k+t_{2}]}  & \cdots & \alpha_{k}^{[k+t_{2}]} \\
\alpha_{1}^{[h+1]}  & \cdots & \alpha_{k}^{[h+1]} \\
\vdots  & & \vdots \\
\alpha_{1}^{[k-1]} & \cdots & \alpha_{k}^{[k-1]}
\end{array}\right|\\
=&(1+\eta_{1} g_{h}^{(t_{1})}+\eta_{2} g_{h}^{(t_{2})})\cdot \left|M_{k}(\boldsymbol{\alpha})\right|.\end{aligned}$$
Then by Lemma \ref{dk}, $|G_{2k}|$=0 if and only if $1+\eta_{1} g_{h}^{(t_{1})}+\eta_{2} g_{h}^{(t_{2})}=0$. It completes the proof.
\end{proof}
Now we derive the condition that the twisted Gabidulin code $\mathcal{C}_{2}$ is not an MRD code.
\begin{theorem}\label{tno4}
Let $g_{h}^{(t_{j})}=-\sum_{i=0}^{\min \{t_{j}, h\}} e_{t_{j},i} c_{k-h+i}^{[i]}$ for $j=1,2$, where $e_{t_{j},0},\ldots,e_{t_{j},t_{j}}$ are as defined in Lemma \ref{lem1}, $c_{0},\ldots,c_{k}$ satisfy $\prod_{\alpha \in \left\langle\alpha_{1},\ldots,\alpha_{k}  \right\rangle_{\mathbb{F}_{q}}}(x-\alpha)=\sum_{j=0}^{k} c_{j} x^{[k-j]}$, and $c_j=0$ for $j>k$. If $1+\eta_{1} g_{h}^{(t_{1})}+\eta_{2} g_{h}^{(t_{2})}=0$, then $\mathcal{C}_{2}$  is not MRD.
\end{theorem}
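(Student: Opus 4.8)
The plan is to reduce the MRD question to the vanishing of a single maximal minor of the generator matrix $G_2$, exactly as was done in the one-twist case. By Lemma \ref{nmrd}, if any maximal minor of a generator matrix of $\mathcal{C}_2$ is zero, then $\mathcal{C}_2$ cannot be MRD. So it suffices to exhibit one $k$-subset of columns whose $k\times k$ minor vanishes under the hypothesis $1+\eta_{1} g_{h}^{(t_{1})}+\eta_{2} g_{h}^{(t_{2})}=0$.

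First I would take $G_2$ as given in (\ref{eqgt2}) and let $G_{2k}$ denote the submatrix formed by its first $k$ columns, corresponding to the $k$-subset $\mathcal{I}=\{1,\ldots,k\}$. Since $\boldsymbol{\alpha}=[\alpha_1,\ldots,\alpha_n]$ satisfies $\mathrm{rk}_{\mathbb{F}_q}(\boldsymbol{\alpha})=n$, the elements $\alpha_1,\ldots,\alpha_k$ are linearly independent over $\mathbb{F}_q$, so the hypotheses of Lemma \ref{lemma2} are met with $\boldsymbol{\alpha}=[\alpha_1,\ldots,\alpha_k]$. Applying Lemma \ref{lemma2} directly gives the determinant expansion $|G_{2k}|=(1+\eta_{1} g_{h}^{(t_{1})}+\eta_{2} g_{h}^{(t_{2})})\cdot|M_k(\boldsymbol{\alpha})|$, where the coefficients $g_h^{(t_j)}$ are precisely those defined in the statement.

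Next I would invoke Lemma \ref{dk}: since $\alpha_1,\ldots,\alpha_k$ are $\mathbb{F}_q$-linearly independent, the Moore determinant $|M_k(\boldsymbol{\alpha})|$ is nonzero. Hence $|G_{2k}|=0$ if and only if the scalar factor $1+\eta_{1} g_{h}^{(t_{1})}+\eta_{2} g_{h}^{(t_{2})}$ vanishes. Under the assumed condition $1+\eta_{1} g_{h}^{(t_{1})}+\eta_{2} g_{h}^{(t_{2})}=0$, we conclude $|G_{2k}|=0$, so $G_2$ has a zero maximal minor, and therefore $\mathcal{C}_2$ is not MRD by Lemma \ref{nmrd}. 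This completes the argument.

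Honestly, there is no serious obstacle here: all the analytic work has already been absorbed into Lemma \ref{lemma2} (which in turn rests on Lemma \ref{lem3} and Lemma \ref{lem1}), so this theorem is a clean corollary. The only point requiring minor care is confirming that the $g_h^{(t_1)}$ and $g_h^{(t_2)}$ appearing in the theorem statement are indeed computed from the \emph{same} polynomial $\prod_{\alpha\in\langle\alpha_1,\ldots,\alpha_k\rangle_{\mathbb{F}_q}}(x-\alpha)=\sum_{j=0}^k c_j x^{[k-j]}$ and the same coefficients $c_0,\ldots,c_k$ as in Lemma \ref{lemma2}, which they are by construction. Thus the proof amounts to quoting Lemma \ref{lemma2} for the subset $\{1,\ldots,k\}$ and then applying Lemma \ref{nmrd}.
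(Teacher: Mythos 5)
Your proposal is correct and follows essentially the same route as the paper's own proof: apply Lemma \ref{lemma2} to conclude that the minor formed by the first $k$ columns of $G_{2}$ vanishes when $1+\eta_{1} g_{h}^{(t_{1})}+\eta_{2} g_{h}^{(t_{2})}=0$, then invoke Lemma \ref{nmrd}. The extra details you supply (linear independence of $\alpha_{1},\ldots,\alpha_{k}$ and the nonvanishing of $|M_{k}(\boldsymbol{\alpha})|$ via Lemma \ref{dk}) are already absorbed into Lemma \ref{lemma2}, so your argument matches the paper's almost verbatim.
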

\begin{proof}
Let $G_{2}$, given in (\ref{eqgt2}), be a generator matrix of $\mathcal{C}_{2}$.
By Lemma \ref{lemma2}, if $1+\eta_{1} g_{h}^{(t_{1})}+\eta_{2} g_{h}^{(t_{2})}=0$, then the $k\times k$ minor formed by the first $k$ columns of $G_{2}$ is zero. By Lemma \ref{nmrd},  $\mathcal{C}_{2}$  is not MRD.
\end{proof}

\begin{definition}
Suppose  that $\alpha_{1},\ldots,\alpha_{n}\in \mathbb{F}_{q^m}$ are linearly independent over $\mathbb{F}_{q}$. For any $k$-subset $\mathcal{I}\subseteq \textit{\textbf{[}}n\textit{\textbf{]}}$, let $U_{\mathcal{I}}=\left\langle\alpha_{i} \ (i\in \mathcal{I}) \right\rangle_{\mathbb{F}_{q}}$, $\prod_{\alpha \in U_{\mathcal{I}}}(x-\alpha)=\sum_{j=0}^{k} c_{j} x^{[k-j]}$, and $c_j=0$ for $j>k$. Let $e_{t_{j},0},\ldots,e_{t_{j},t_{j}}$ be as defined in Lemma \ref{lem1} and $g_{h}^{(t_{j})}(\mathcal{I})=-\sum_{i=0}^{\min \{t_{j}, h\}} e_{t_{j},i} c_{k-h+i}^{[i]}$ for $j=1,2$.  We denote $\Omega_2:=$
$$
\left\{[\eta_{1}, \eta_{2}] \in (\mathbb{F}_{q^{m}}^{*})^{2}:\      1+\eta_{1} g_{h}^{(t_{1})}(\mathcal{I})+\eta_{2} g_{h}^{(t_{2})}(\mathcal{I})=0\ \text{for some}\ k\text{-subset}\ \mathcal{I}\text{ of } \textit{\textbf{[}}n\textit{\textbf{]}}\right\}.
$$
\end{definition}

Using the same argument in the proofs of Lemma \ref{lemma2} and Theorem \ref{tno4}, we obtain the following result.
\begin{theorem}
If $\boldsymbol{\eta}=[\eta_{1}, \eta_{2}] \in\Omega_2$, then $\mathcal{C}_{2}$  is not MRD.
\end{theorem}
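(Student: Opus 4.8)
The plan is to reduce everything to a single vanishing maximal minor of $G_2$ and then invoke Lemma \ref{nmrd}, exactly as in the proofs of Lemma \ref{lemma2} and Theorem \ref{tno4}, but now choosing the columns according to the subset $\mathcal{I}$ supplied by membership in $\Omega_2$. First I would unpack the hypothesis: if $\boldsymbol{\eta}=[\eta_1,\eta_2]\in\Omega_2$, then by the definition of $\Omega_2$ there is at least one $k$-subset $\mathcal{I}\subseteq\textit{\textbf{[}}n\textit{\textbf{]}}$ for which $1+\eta_1 g_h^{(t_1)}(\mathcal{I})+\eta_2 g_h^{(t_2)}(\mathcal{I})=0$. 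I would fix such an $\mathcal{I}$, write $\boldsymbol{\alpha}_{\mathcal{I}}=[\alpha_i:i\in\mathcal{I}]$, and let $G_{2,\mathcal{I}}$ be the $k\times k$ submatrix of $G_2$ in (\ref{eqgt2}) formed by the columns indexed by $\mathcal{I}$.

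Next I would verify that $|G_{2,\mathcal{I}}|=0$. Because $\alpha_1,\ldots,\alpha_n$ are linearly independent over $\mathbb{F}_q$, any subset $\{\alpha_i:i\in\mathcal{I}\}$ is likewise linearly independent over $\mathbb{F}_q$, and $G_{2,\mathcal{I}}$ has exactly the structure of the matrix $G_{2k}$ treated in Lemma \ref{lemma2}, with $\boldsymbol{\alpha}$ replaced by $\boldsymbol{\alpha}_{\mathcal{I}}$ and with $c_0,\ldots,c_k$, $e_{t_j,i}$, $g_h^{(t_j)}$ now taken to be the quantities attached to $U_{\mathcal{I}}=\langle\alpha_i\ (i\in\mathcal{I})\rangle_{\mathbb{F}_q}$ via $\prod_{\alpha\in U_{\mathcal{I}}}(x-\alpha)=\sum_{j=0}^{k}c_j x^{[k-j]}$. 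The determinant computation in the proof of Lemma \ref{lemma2} uses only the factorization of this product together with the nonvanishing of the Moore determinant guaranteed by Lemma \ref{dk}, so the identical argument carries over verbatim and gives
$$
|G_{2,\mathcal{I}}|=\left(1+\eta_1 g_h^{(t_1)}(\mathcal{I})+\eta_2 g_h^{(t_2)}(\mathcal{I})\right)\cdot\left|M_k(\boldsymbol{\alpha}_{\mathcal{I}})\right|.
$$
By the choice of $\mathcal{I}$ the bracketed factor is zero, whence $|G_{2,\mathcal{I}}|=0$; that is, $G_2$ possesses a vanishing maximal minor.

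Finally I would apply Lemma \ref{nmrd}: every generator matrix of an MRD code has only nonzero maximal minors. Since $G_2$ is a generator matrix of $\mathcal{C}_2$ and has a zero maximal minor, $\mathcal{C}_2$ is not MRD, completing the argument. I do not anticipate a substantive obstacle, as this is a bookkeeping generalization of Theorem \ref{tno4} from the fixed generating set $\langle\alpha_1,\ldots,\alpha_k\rangle_{\mathbb{F}_q}$ to an arbitrary $k$-subset; the only point requiring genuine care is to keep the coefficients $c_j$ and the derived $e_{t_j,i}$, $g_h^{(t_j)}(\mathcal{I})$ consistently indexed by $U_{\mathcal{I}}$ rather than by the full prefix $\alpha_1,\ldots,\alpha_k$, so that the subset-dependent notation $g_h^{(t_j)}(\mathcal{I})$ matches the hypothesis defining $\Omega_2$.
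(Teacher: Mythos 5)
Your proof is correct and follows essentially the same route as the paper: both identify the $k$-subset $\mathcal{I}$ furnished by membership in $\Omega_2$, apply the determinant identity of Lemma \ref{lemma2} (with $\boldsymbol{\alpha}$ restricted to the columns indexed by $\mathcal{I}$) to produce a vanishing maximal minor of $G_2$, and conclude via Lemma \ref{nmrd}. Your write-up is in fact more explicit than the paper's terse version about why the argument of Lemma \ref{lemma2} transfers verbatim to an arbitrary $k$-subset, which is the only point needing care.
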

\begin{proof}
Let $G_{2}$, given in (\ref{eqgt2}), be a generator matrix of $\mathcal{C}_{2}$.
If $\boldsymbol{\eta}\in\Omega_2$, then there exists a $k$-subset $\mathcal{I}\subseteq\textit{\textbf{[}}n\textit{\textbf{]}}$ such that $1+\eta_{1} g_{h}^{(t_{1})}(\mathcal{I})+\eta_{2} g_{h}^{(t_{2})}(\mathcal{I})=0$, and so the determinant of the $k\times k$ submatrix formed by the columns of $G_{2}$ indexed by $\mathcal{I}$ is zero. Therefore, $\mathcal{C}_{2}$ is not MRD.
\end{proof}

In the following, we discuss a class of twisted Gabidulin codes with $(t_{1},t_2)=(0,1)$.

Suppose that $t_{1}=0$ and $t_{2}=1$ in (\ref{p2}), we obtain
\begin{equation}\label{stp2}
\mathcal{P}_{2}=\left\{f(x)=\sum_{i=0}^{k-1} f_i x^{[i]}+\eta_{1}f_{h} x^{[k]}+\eta_{2}f_{h} x^{[k+1]}: f_i \in \mathbb{F}_{q^m},0\leq i\leq k-1\right\},
\end{equation}
where $0\leq h\leq k-1$.
In this case, 
we have
$$\begin{aligned}
1+\eta_{1} g_{h}^{(t_{1})}+\eta_{2} g_{h}^{(t_{2})}=&1+\eta_{1} g_{h}^{(0)}+\eta_{2} g_{h}^{(1)}=1-\eta_{1}c_{k-h}+\eta_{2}(c_{1}^{[1]}c_{k-h}-c_{k-h+1}^{[1]}).
\end{aligned}$$
Then we get the following corollary.
\begin{corollary}\label{t00}
Let $\mathcal{P}_{2}$ be as given in (\ref{stp2}), and let $\boldsymbol{\alpha}=[\alpha_{1},\ldots,\alpha_{n}]\in \mathbb{F}_{q^{m}}^{n}$ with $\mathrm{rk}_{\mathbb{F}_{q}}(\boldsymbol{\alpha})=n$.  For any $k$-subset $\mathcal{I}\subseteq\textit{\textbf{[}}n\textit{\textbf{]}}$, let $U_{\mathcal{I}}=\left\langle\alpha_{i} \ (i\in \mathcal{I}) \right\rangle_{\mathbb{F}_{q}}$,
\begin{equation}\label{uuu}
\prod_{\alpha \in U_{\mathcal{I}}}(x-\alpha)=\sum_{j=0}^{k} c_{j} x^{[k-j]},
\end{equation}
and $c_{k+1}=0$. If $\boldsymbol{\eta}=[\eta_{1}, \eta_{2}] \in \Omega_2^{\prime}$, where $\Omega_{2}^{\prime}=$
$$\left\{\boldsymbol{\eta}\in (\mathbb{F}_{q^{m}}^{*})^{2}: \exists \ k\text{-subset}\ \mathcal{I} \text{ such that } (\ref{uuu}) \text{ holds},  (\eta_{1}-\eta_{2}c_{1}^{[1]})c_{k-h}+\eta_{2}c_{k-h+1}^{[1]}= 1 \right\},$$
then $\mathcal{C}_{2}=\mathrm{ev}_{\boldsymbol{\alpha}}(\mathcal{P}_{2})$ is not MRD.
\end{corollary}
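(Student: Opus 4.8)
The plan is to recognize this corollary as the explicit $(t_1,t_2)=(0,1)$ instance of the preceding theorem characterizing when $\mathcal{C}_2$ fails to be MRD, so that the only real work is to verify that the set $\Omega_2'$ coincides with $\Omega_2$ in this special case. First I would fix a $k$-subset $\mathcal{I}$ witnessing membership in $\Omega_2'$ and write $\prod_{\alpha \in U_{\mathcal{I}}}(x-\alpha)=\sum_{j=0}^{k} c_j x^{[k-j]}$ as in (\ref{uuu}), with $c_0=1$. Using Lemma \ref{lem1} I would compute the two relevant entries explicitly: $e_{0,0}=1$ yields $g_h^{(0)}=-c_{k-h}$, while $e_{1,1}=1$ and $e_{1,0}=-c_1^{[1]}$ yield $g_h^{(1)}=c_1^{[1]}c_{k-h}-c_{k-h+1}^{[1]}$.

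Next I would substitute these into the quantity $1+\eta_1 g_h^{(t_1)}(\mathcal{I})+\eta_2 g_h^{(t_2)}(\mathcal{I})$ appearing in Lemma \ref{lemma2}, obtaining $1-\eta_1 c_{k-h}+\eta_2\bigl(c_1^{[1]}c_{k-h}-c_{k-h+1}^{[1]}\bigr)$. Setting this to zero and rearranging gives the equivalent equation $(\eta_1-\eta_2 c_1^{[1]})c_{k-h}+\eta_2 c_{k-h+1}^{[1]}=1$, which is precisely the defining condition of $\Omega_2'$. This identification shows that $\boldsymbol{\eta}\in\Omega_2'$ forces $1+\eta_1 g_h^{(t_1)}(\mathcal{I})+\eta_2 g_h^{(t_2)}(\mathcal{I})=0$ for that witnessing subset $\mathcal{I}$.

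With this identity established, I would apply Lemma \ref{lemma2} to the $k\times k$ submatrix of $G_2$ formed by the columns indexed by $\mathcal{I}$, concluding that its determinant vanishes, and then invoke Lemma \ref{nmrd}, which says that a generator matrix of an MRD code has only nonzero maximal minors, to deduce that $\mathcal{C}_2$ is not MRD. Equivalently, this is just the $(t_1,t_2)=(0,1)$ specialization of the theorem stated immediately before the corollary.

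The computation is entirely routine, so there is no genuine obstacle; the only point requiring a moment of care is the boundary case $h=0$, where the upper summation limit $\min\{1,h\}$ in the definition of $g_h^{(1)}$ drops to $0$, and where the term $c_{k-h+1}^{[1]}$ formally becomes $c_{k+1}^{[1]}$. The convention $c_{k+1}=0$ adopted in the corollary makes this term vanish in a way consistent with the uniform formula $g_h^{(1)}=c_1^{[1]}c_{k-h}-c_{k-h+1}^{[1]}$, so the single expression defining $\Omega_2'$ remains valid across the whole range $0\leq h\leq k-1$.
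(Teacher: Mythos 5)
Your proposal is correct and follows exactly the paper's route: the paper likewise specializes the $\Omega_2$ non-MRD theorem to $(t_1,t_2)=(0,1)$ by computing $g_h^{(0)}=-c_{k-h}$ and $g_h^{(1)}=c_1^{[1]}c_{k-h}-c_{k-h+1}^{[1]}$ from Lemma \ref{lem1} and rearranging $1+\eta_1 g_h^{(0)}+\eta_2 g_h^{(1)}=0$ into the defining condition of $\Omega_2'$. Your explicit check that the convention $c_{k+1}=0$ keeps the uniform formula valid at $h=0$ is a point the paper leaves implicit, but it changes nothing in substance.
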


Below we consider the twisted Gabidulin codes $\mathcal{C}_{2}$ in Hamming metric, and we establish the necessary and sufficient conditions for $\mathcal{C}_{2}$ to be MDS, AMDS or NMDS.
\begin{theorem}
 $\mathcal{C}_{2}$ is MDS if and only if $\boldsymbol{\eta}=[\eta_{1}, \eta_{2}] \notin\Omega_2$.
\end{theorem}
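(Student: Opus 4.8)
The plan is to translate the MDS property directly into a statement about the maximal minors of the generator matrix $G_2$ via Proposition \ref{gmds}, and then to evaluate each such minor using the determinant identity already established in Lemma \ref{lemma2}. Since $G_2 \in \mathbb{F}_{q^m}^{k \times n}$ with $k < n$, its maximal minors are precisely the $k \times k$ determinants obtained by selecting $k$ of its $n$ columns; by Proposition \ref{gmds}, $\mathcal{C}_2$ is MDS if and only if every one of these is non-zero. This mirrors the argument already used for $\mathcal{C}_1$ in Section 3.

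First I would fix a $k$-subset $\mathcal{I} \subseteq \textit{\textbf{[}}n\textit{\textbf{]}}$ and let $G_2^{\mathcal{I}}$ denote the $k \times k$ submatrix of $G_2$ formed by the columns indexed by $\mathcal{I}$. The crucial observation is that the computation carried out in the proof of Lemma \ref{lemma2} for the first $k$ columns goes through verbatim for $G_2^{\mathcal{I}}$: because $\alpha_1, \ldots, \alpha_n$ are linearly independent over $\mathbb{F}_q$, so is the subfamily $\{\alpha_i : i \in \mathcal{I}\}$, and hence the factorization $\prod_{\alpha \in U_{\mathcal{I}}}(x - \alpha) = \sum_{j=0}^k c_j x^{[k-j]}$ together with the quantities $g_h^{(t_j)}(\mathcal{I})$ are well defined exactly as in the definition of $\Omega_2$. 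Expanding the twisted row by multilinearity of the determinant and applying Lemma \ref{lem3} then yields $|G_2^{\mathcal{I}}| = \bigl(1 + \eta_1 g_h^{(t_1)}(\mathcal{I}) + \eta_2 g_h^{(t_2)}(\mathcal{I})\bigr) \cdot |M_k(\boldsymbol{\alpha}_{\mathcal{I}})|$, where $\boldsymbol{\alpha}_{\mathcal{I}}$ denotes the tuple of entries $\alpha_i$ with $i \in \mathcal{I}$. By Lemma \ref{dk} the Moore determinant $|M_k(\boldsymbol{\alpha}_{\mathcal{I}})|$ is non-zero, so $|G_2^{\mathcal{I}}| = 0$ if and only if $1 + \eta_1 g_h^{(t_1)}(\mathcal{I}) + \eta_2 g_h^{(t_2)}(\mathcal{I}) = 0$.

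To finish, I would assemble these equivalences over all $k$-subsets. The code $\mathcal{C}_2$ fails to be MDS exactly when some maximal minor vanishes, that is, when $1 + \eta_1 g_h^{(t_1)}(\mathcal{I}) + \eta_2 g_h^{(t_2)}(\mathcal{I}) = 0$ for at least one $k$-subset $\mathcal{I}$; by the definition of $\Omega_2$ this is precisely the condition $\boldsymbol{\eta} \in \Omega_2$. Negating yields that $\mathcal{C}_2$ is MDS if and only if $\boldsymbol{\eta} = [\eta_1,\eta_2] \notin \Omega_2$. I expect no serious obstacle here: the only point requiring care is verifying that Lemma \ref{lemma2} extends from the first $k$ columns to an arbitrary column subset, which reduces to the stability of $\mathbb{F}_q$-linear independence under passing to sub-tuples; every remaining step is the same determinant manipulation already carried out for $\mathcal{C}_2$ in Lemma \ref{lemma2} and for $\mathcal{C}_1$ in the corresponding MDS theorem of Section 3.
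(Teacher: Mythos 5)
Your proposal is correct and follows essentially the same route as the paper: both reduce the MDS property to the non-vanishing of all maximal minors via Proposition \ref{gmds}, then evaluate each minor by the determinant computation of Lemma \ref{lemma2} applied to an arbitrary $k$-subset of columns, yielding exactly the defining condition of $\Omega_2$. Your write-up is merely more explicit than the paper's about why the lemma's argument extends from the first $k$ columns to any column subset (preservation of $\mathbb{F}_q$-linear independence under sub-tuples), a point the paper leaves implicit.
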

\begin{proof}
Let $G_{2}$, as defined in (\ref{eqgt2}), be a generator matrix of $\mathcal{C}_{2}$.
 By Proposition \ref{gmds},
$\mathcal{C}_{2}$  is  MDS  if and only if for each  $k$-subset $\mathcal{I}\subseteq\textit{\textbf{[}}n\textit{\textbf{]}}$, the $k\times k$ minor formed by the columns of $G_{2}$ indexed by $\mathcal{I}$ is non-zero ($1+\eta_{1} g_{h}^{(t_{1})}(\mathcal{I})+\eta_{2} g_{h}^{(t_{2})}(\mathcal{I})\neq 0$), and if and only if $\boldsymbol{\eta}=[\eta_{1}, \eta_{2}] \notin\Omega_2$. It completes the proof.
\end{proof}

\begin{theorem}
Let $\boldsymbol{\eta}=[\eta_{1}, \eta_{2}]  \in \Omega_2$. Then $\mathcal{C}_2$  is AMDS if and only if for each $(k+1)$-subset  $\mathcal{J} \subseteq\textit{\textbf{[}}n\textit{\textbf{]}}$, there exists a $k$-subset  $\mathcal{I} \subseteq \mathcal{J}$ such that $1+\eta_{1} g_{h}^{(t_{1})}(\mathcal{I})+\eta_{2} g_{h}^{(t_{2})}(\mathcal{I})\neq 0$.
\end{theorem}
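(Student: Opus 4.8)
The plan is to mirror the argument used for Theorem \ref{amds1} in the one-twist case, invoking the combinatorial characterization of AMDS codes given in Lemma \ref{anmds}. Recall that an $[n,k]$ code is AMDS precisely when its generator matrix satisfies conditions (ii) and (iii) of that lemma, namely that there exist $k$ linearly dependent columns and that every $k+1$ columns have rank $k$. Taking $G_2$ from (\ref{eqgt2}) as the generator matrix of $\mathcal{C}_2$, I would phrase each of these two conditions in terms of the quantity $1+\eta_1 g_h^{(t_1)}(\mathcal{I})+\eta_2 g_h^{(t_2)}(\mathcal{I})$ that governs the vanishing of the $k\times k$ minors of $G_2$.

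First I would dispose of condition (ii). For any $k$-subset $\mathcal{I}\subseteq\textit{\textbf{[}}n\textit{\textbf{]}}$, the determinant of the $k\times k$ submatrix of $G_2$ on the columns indexed by $\mathcal{I}$ equals $\bigl(1+\eta_1 g_h^{(t_1)}(\mathcal{I})+\eta_2 g_h^{(t_2)}(\mathcal{I})\bigr)$ times a Moore determinant, by the computation carried out in Lemma \ref{lemma2} applied with the entries $\alpha_i$, $i\in\mathcal{I}$, in place of $\alpha_1,\ldots,\alpha_k$. Since these $\alpha_i$ are $\mathbb{F}_q$-linearly independent, the Moore determinant is nonzero by Lemma \ref{dk}. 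Hence this submatrix is singular, i.e.\ the corresponding $k$ columns are linearly dependent, exactly when $1+\eta_1 g_h^{(t_1)}(\mathcal{I})+\eta_2 g_h^{(t_2)}(\mathcal{I})=0$. By the definition of $\Omega_2$, the hypothesis $\boldsymbol{\eta}\in\Omega_2$ means that such an $\mathcal{I}$ exists, so condition (ii) holds automatically.

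With (ii) secured, $\mathcal{C}_2$ is AMDS if and only if condition (iii) holds, so it remains to translate (iii). Since $G_2$ has only $k$ rows, any collection of $k+1$ columns has rank at most $k$, and it has rank exactly $k$ if and only if some $k$ of those columns are linearly independent, i.e.\ some $k\times k$ submatrix among them is nonsingular. By the minor formula recalled above, a $k$-subset $\mathcal{I}\subseteq\mathcal{J}$ yields a nonsingular submatrix precisely when $1+\eta_1 g_h^{(t_1)}(\mathcal{I})+\eta_2 g_h^{(t_2)}(\mathcal{I})\neq 0$. Therefore condition (iii), that every $(k+1)$-subset $\mathcal{J}$ has rank $k$, is equivalent to the stated criterion that for each $(k+1)$-subset $\mathcal{J}\subseteq\textit{\textbf{[}}n\textit{\textbf{]}}$ there is a $k$-subset $\mathcal{I}\subseteq\mathcal{J}$ with $1+\eta_1 g_h^{(t_1)}(\mathcal{I})+\eta_2 g_h^{(t_2)}(\mathcal{I})\neq 0$, which finishes the equivalence.

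I do not expect a genuine obstacle here, as the result is essentially a bookkeeping application of Lemma \ref{anmds} combined with the minor computation of Lemma \ref{lemma2}. The only points requiring care are the uniform extension of Lemma \ref{lemma2}, which is stated for the first $k$ columns, to an arbitrary $k$-subset $\mathcal{I}$ (so that the relevant constants become the $\mathcal{I}$-dependent $g_h^{(t_j)}(\mathcal{I})$ appearing in the definition of $\Omega_2$), and the elementary but essential observation that for a matrix with exactly $k$ rows, ``rank $k$ on $k+1$ columns'' is the same as ``not all $k\times k$ minors among those columns vanish.''
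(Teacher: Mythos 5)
Your proposal is correct and follows essentially the same route as the paper: membership of $\boldsymbol{\eta}$ in $\Omega_2$ secures condition (ii) of Lemma \ref{anmds}, and condition (iii) is then translated into the non-vanishing of some $k\times k$ minor via the determinant formula of Lemma \ref{lemma2} applied to arbitrary column subsets. The paper's proof is simply a terser version of yours; the extra details you supply (the extension of Lemma \ref{lemma2} to arbitrary $k$-subsets and the rank observation for $k$-row matrices) are exactly the implicit steps the paper leaves to the reader.
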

\begin{proof}
Since $\boldsymbol{\eta}=[\eta_{1}, \eta_{2}]  \in \Omega_2$, the condition (ii) of Lemma \ref{anmds} holds. Thus, the code $\mathcal{C}_2$ is AMDS if and only if the condition (iii) of Lemma \ref{anmds} holds,  if and only if for each $(k+1)$-subset  $\mathcal{J} \subseteq\textit{\textbf{[}}n\textit{\textbf{]}}$, there exists a $k$-subset  $\mathcal{I} \subseteq \mathcal{J}$ such that $1+\eta_{1} g_{h}^{(t_{1})}(\mathcal{I})+\eta_{2} g_{h}^{(t_{2})}(\mathcal{I})\neq 0$.
\end{proof}
\begin{theorem}
Let $\boldsymbol{\eta}=[\eta_{1}, \eta_{2}]  \in \Omega_2$ and $h\in \{0,k-1\}$. Then $\mathcal{C}_2$ is NMDS if and only if for each $(k+1)$-subset  $\mathcal{J} \subseteq\textit{\textbf{[}}n\textit{\textbf{]}}$, there exists a $k$-subset  $\mathcal{I} \subseteq \mathcal{J}$ such that $1+\eta_{1} g_{h}^{(t_{1})}(\mathcal{I})+\eta_{2} g_{h}^{(t_{2})}(\mathcal{I})\neq 0$.
\end{theorem}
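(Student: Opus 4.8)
The plan is to apply the characterization of NMDS codes in Lemma \ref{anmds} to the generator matrix $G_2$ given in (\ref{eqgt2}), verifying its three conditions (i), (ii), (iii) in turn. Since the hypotheses $h\in\{0,k-1\}$ and $\boldsymbol{\eta}\in\Omega_2$ are designed to force conditions (i) and (ii) to hold unconditionally, the whole statement will reduce to showing that condition (iii) is equivalent to the displayed $(k+1)$-subset criterion. This mirrors the structure of the proof of Theorem \ref{nmds1}, now with the two-twist determinant expansion from Lemma \ref{lemma2} replacing the one-twist expansion of Lemma \ref{lemma1}.

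First I would dispose of conditions (ii) and (iii), which follow quickly from the determinant formula. By the definition of $\Omega_2$, the assumption $\boldsymbol{\eta}\in\Omega_2$ provides a $k$-subset $\mathcal{I}$ with $1+\eta_1 g_h^{(t_1)}(\mathcal{I})+\eta_2 g_h^{(t_2)}(\mathcal{I})=0$; applying the computation of Lemma \ref{lemma2} to the columns indexed by $\mathcal{I}$ shows the corresponding $k\times k$ minor of $G_2$ vanishes, so those $k$ columns are linearly dependent and (ii) holds. For condition (iii), note that any $k+1$ columns of the $k\times n$ matrix $G_2$ span a space of dimension at most $k$, and have rank exactly $k$ precisely when at least one of the $k\times k$ minors supported on them is non-zero. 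Invoking Lemma \ref{lemma2} again for each relevant $k$-subset, a minor indexed by $\mathcal{I}$ is non-zero iff $1+\eta_1 g_h^{(t_1)}(\mathcal{I})+\eta_2 g_h^{(t_2)}(\mathcal{I})\neq 0$, so (iii) holds iff for every $(k+1)$-subset $\mathcal{J}\subseteq\textit{\textbf{[}}n\textit{\textbf{]}}$ there is a $k$-subset $\mathcal{I}\subseteq\mathcal{J}$ with $1+\eta_1 g_h^{(t_1)}(\mathcal{I})+\eta_2 g_h^{(t_2)}(\mathcal{I})\neq 0$. This is exactly the stated criterion.

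The main obstacle, and the only place the restriction $h\in\{0,k-1\}$ is used, is condition (i): every $k-1$ columns of $G_2$ must be linearly independent. Given a $(k-1)$-subset $\mathcal{K}=\{i_1,\ldots,i_{k-1}\}$, the idea is to exhibit $k-1$ of the $k$ rows of $G_2$ whose restriction to $\mathcal{K}$ is an invertible $(k-1)\times(k-1)$ Moore matrix; since $G_2$ restricted to $\mathcal{K}$ then has rank $k-1$ while having only $k-1$ columns, those columns are independent. When $h=k-1$ the first $k-1$ rows of $G_2$ are the untwisted rows $\boldsymbol{\alpha},\ldots,\boldsymbol{\alpha}^{[k-2]}$, so their restriction to $\mathcal{K}$ is $M_{k-1}([\alpha_{i_1},\ldots,\alpha_{i_{k-1}}])$, which is invertible by Lemma \ref{dk} because $\alpha_{i_1},\ldots,\alpha_{i_{k-1}}$ are $\mathbb{F}_q$-linearly independent. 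When $h=0$ the last $k-1$ rows are the untwisted rows $\boldsymbol{\alpha}^{[1]},\ldots,\boldsymbol{\alpha}^{[k-1]}$, whose restriction to $\mathcal{K}$ is the Moore matrix $M_{k-1}([\alpha_{i_1}^{[1]},\ldots,\alpha_{i_{k-1}}^{[1]}])$; this is again invertible by Lemma \ref{dk}, since the Frobenius map is an $\mathbb{F}_q$-linear bijection and hence preserves the linear independence of $\alpha_{i_1},\ldots,\alpha_{i_{k-1}}$. I expect the delicate point to be precisely this step: for an intermediate $h$ the untwisted rows skip the exponent $h$ and no longer form a Moore matrix, so their minors can vanish and condition (i) may genuinely fail, which is why the theorem singles out $h\in\{0,k-1\}$. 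Having secured (i) and (ii) unconditionally, Lemma \ref{anmds} yields that $\mathcal{C}_2$ is NMDS iff (iii) holds, completing the equivalence.
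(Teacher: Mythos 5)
Your proposal is correct and follows essentially the same route as the paper's proof: verify the three conditions of Lemma \ref{anmds}, with condition (i) forced by $h\in\{0,k-1\}$, condition (ii) forced by $\boldsymbol{\eta}\in\Omega_2$ via the determinant expansion of Lemma \ref{lemma2} applied to the columns indexed by the witnessing $k$-subset, and condition (iii) equivalent to the stated $(k+1)$-subset criterion. Your write-up is in fact more complete than the paper's, which merely asserts that (i) and (ii) hold; your Moore-matrix argument for (i) --- using the untwisted rows $\boldsymbol{\alpha},\ldots,\boldsymbol{\alpha}^{[k-2]}$ when $h=k-1$, and the rows $\boldsymbol{\alpha}^{[1]},\ldots,\boldsymbol{\alpha}^{[k-1]}$ together with the fact that Frobenius preserves $\mathbb{F}_q$-linear independence when $h=0$ --- supplies exactly the justification the paper leaves implicit.
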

\begin{proof}
Since $h\in \{0,k-1\}$, the condition (i) of Lemma \ref{anmds} holds.
Since $\boldsymbol{\eta}=[\eta_{1}, \eta_{2}]  \in \Omega_2$, the condition (ii) of Lemma \ref{anmds} holds. Therefore, $\mathcal{C}_2$ is NMDS if and only if the condition (iii) of Lemma \ref{anmds} holds, if and only if for each $(k+1)$-subset  $\mathcal{J} \subseteq\textit{\textbf{[}}n\textit{\textbf{]}}$, there exists a $k$-subset  $\mathcal{I} \subseteq \mathcal{J}$ such that $1+\eta_{1} g_{h}^{(t_{1})}(\mathcal{I})+\eta_{2} g_{h}^{(t_{2})}(\mathcal{I})\neq 0$.
\end{proof}

\section{Twisted Gabidulin codes with $\ell$ twists}
By adding $\ell$ monomials to the same position of each $f(x)\in \mathbb{F}_{q^m}[x;\sigma]$ in the definition of Gabidulin codes, we can obtain twisted Gabidulin codes with $\ell$ twists which are defined as follows.
\begin{definition}\label{stl3}
Let $n, k, \ell$ be positive integers with $k<n\leq m$ and $2<\ell\leq n-k$. Let  $\boldsymbol{\eta}=[\eta_{1},\ldots,\eta_{\ell}]\in (\mathbb{F}_{q^m}^{\ast})^{\ell}$. Denote $\mathcal{P}_{3}$ as the set of twisted linearized polynomials over $\mathbb{F}_{q^{m}}$ given by
\begin{equation}\label{p3}
\mathcal{P}_{3}:=\left\{f(x)=\sum_{i=0}^{k-1} f_i x^{[i]}+f_{h}\sum_{j=1}^{\ell}\eta_{j}x^{[k+t_{j}]}: f_i \in \mathbb{F}_{q^m}, 0\leq i\leq k-1\right\},
\end{equation}
where $0\leq h\leq k-1$ and $0\leq t_{1}<\ldots<t_{\ell}\leq n-k-1$.  Let $\boldsymbol{\alpha}=[\alpha_{1},\ldots,\alpha_{n}]\in \mathbb{F}_{q^{m}}^{n}$ with $\mathrm{rk}_{\mathbb{F}_{q}}(\boldsymbol{\alpha})=n$. The twisted Gabidulin code with $\ell$ twists is defined by
\begin{equation}\label{c3}
\mathcal{C}_{3}:=\mathrm{ev}_{\boldsymbol{\alpha}}\left(\mathcal{P}_{3}\right)\subseteq \mathbb{F}_{q^m}^n.
\end{equation}
\end{definition}
Unless otherwise stated, let $\mathcal{C}_{3}$ be the twisted Gabidulin code defined as in (\ref{c3}). We first give two useful propositions about the code $\mathcal{C}_{3}$.

\begin{proposition}
$\mathcal{C}_{3}$ is an $[n, k]$ linear code over $\mathbb{F}_{q^{m}}$ with the generator matrix
\begin{equation}\label{eqgt3}
G_{3}=\left[\begin{array}{ccc}
\alpha_{1} & \cdots & \alpha_{n} \\
\vdots &  & \vdots \\
\alpha_{1}^{[h-1]} & \cdots & \alpha_{n}^{[h-1]} \\
\alpha_{1}^{[h]}+\sum_{j=1}^{\ell}\eta_{j}\alpha_{1}^{[k+t_{j}]} & \cdots & \alpha_{n}^{[h]}+\sum_{j=1}^{\ell}\eta_{j}\alpha_{n}^{[k+t_{j}]} \\
\alpha_{1}^{[h+1]} & \cdots & \alpha_{n}^{[h+1]} \\
\vdots  & & \vdots \\
\alpha_{1}^{[k-1]} & \cdots & \alpha_{n}^{[k-1]}
\end{array}\right].\end{equation}
\end{proposition}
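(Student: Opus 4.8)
The plan is to verify the three assertions packaged in the statement: that $\mathcal{C}_3$ is $\mathbb{F}_{q^m}$-linear of length $n$, that the rows of $G_3$ span it, and that its dimension equals $k$, so that $G_3$ is genuinely a generator matrix.

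First I would note that $\mathcal{P}_3$ is an $\mathbb{F}_{q^m}$-vector space of dimension $k$. By (\ref{p3}) every $f\in\mathcal{P}_3$ is determined by its coefficient tuple $(f_0,\ldots,f_{k-1})\in\mathbb{F}_{q^m}^{k}$, and the correspondence $(f_0,\ldots,f_{k-1})\mapsto f$ is $\mathbb{F}_{q^m}$-linear, the twist block $f_h\sum_{j=1}^{\ell}\eta_j x^{[k+t_j]}$ depending linearly on the single coordinate $f_h$. Since $\mathrm{ev}_{\boldsymbol{\alpha}}$ is an $\mathbb{F}_{q^m}$-linear map into $\mathbb{F}_{q^m}^{n}$, its image $\mathcal{C}_3$ is an $\mathbb{F}_{q^m}$-linear code of length $n$.

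Next, to read off $G_3$, I would evaluate the natural basis of $\mathcal{P}_3$ obtained by setting exactly one coordinate $f_i$ equal to $1$ and the rest to $0$. For $i\neq h$ this gives the monomial $x^{[i]}$, whose evaluation is the row $\boldsymbol{\alpha}^{[i]}$; for $i=h$ it gives $x^{[h]}+\sum_{j=1}^{\ell}\eta_j x^{[k+t_j]}$, whose evaluation is the twisted row $\boldsymbol{\alpha}^{[h]}+\sum_{j=1}^{\ell}\eta_j\boldsymbol{\alpha}^{[k+t_j]}$. Assembling these $k$ images in order reproduces exactly the matrix $G_3$ of (\ref{eqgt3}), so its rows span $\mathcal{C}_3$.

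The step that needs real care is the dimension count, i.e. showing that $\mathrm{ev}_{\boldsymbol{\alpha}}$ is injective on $\mathcal{P}_3$; this is where I expect the main obstacle. Suppose some $f\in\mathcal{P}_3$ vanishes at every $\alpha_i$. As $f$ is a linearized polynomial, the map $x\mapsto f(x)$ is $\mathbb{F}_q$-linear, so its kernel is an $\mathbb{F}_q$-subspace of $\mathbb{F}_{q^m}$ (as in \cite{moore}); since this kernel contains the $\mathbb{F}_q$-independent elements $\alpha_1,\ldots,\alpha_n$, it has $\mathbb{F}_q$-dimension at least $n$. On the other hand, by (\ref{p3}) any nonzero $f\in\mathcal{P}_3$ has $q$-degree at most $k+t_\ell\le k+(n-k-1)=n-1$---the highest possible term is $f_h\eta_\ell x^{[k+t_\ell]}$ when $f_h\neq0$, and at most $x^{[k-1]}$ when $f_h=0$, and these exponents never collide or cancel---so, viewed as an ordinary polynomial of degree at most $q^{\,n-1}$, it has at most $q^{\,n-1}$ roots and hence a kernel of $\mathbb{F}_q$-dimension at most $n-1$. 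These two conclusions contradict each other unless $f=0$. Therefore $\mathrm{ev}_{\boldsymbol{\alpha}}$ is injective, $\dim_{\mathbb{F}_{q^m}}\mathcal{C}_3=\dim_{\mathbb{F}_{q^m}}\mathcal{P}_3=k$, and the $k$ spanning rows of $G_3$ form a basis; thus $\mathcal{C}_3$ is an $[n,k]$ code with generator matrix $G_3$. The decisive point is the inequality $k+t_\ell\le n-1$, which relies on the hypothesis $t_\ell\le n-k-1$ and guarantees that the twist terms cannot push the $q$-degree up to $n$, where the above contradiction would fail.
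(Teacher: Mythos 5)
Your proof is correct. The paper actually states this proposition without proof (treating it as routine), and your argument supplies exactly the standard verification it takes for granted: evaluating the coordinate basis of $\mathcal{P}_3$ yields the rows of $G_3$, and injectivity of $\mathrm{ev}_{\boldsymbol{\alpha}}$ follows because a nonzero $f\in\mathcal{P}_3$ has $q$-degree at most $k+t_\ell\le n-1$, hence at most $q^{n-1}$ roots, which cannot contain the $q^n$-element space $\left\langle\alpha_1,\ldots,\alpha_n\right\rangle_{\mathbb{F}_q}$ --- precisely the point of the hypothesis $t_\ell\le n-k-1$ in Definition \ref{stl3}.
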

\begin{proposition}
For the twisted Gabidulin code $\mathcal{C}_{3}$, we have $$\mathcal{C}_{3}=\left\langle\boldsymbol{\alpha}^{[h]}+\sum_{i=1}^{\ell}\eta_{i} \boldsymbol{\alpha}^{[k+t_{i}]}, \boldsymbol{\alpha}^{[s]}(s\in \{0,\ldots,k-1\}\backslash\{h\})  \right\rangle_{\mathbb{F}_{q^m}}.$$
\end{proposition}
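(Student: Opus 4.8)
The plan is to exploit the $\mathbb{F}_{q^m}$-linearity of the evaluation map $\mathrm{ev}_{\boldsymbol{\alpha}}$ on the linearized polynomial ring, which reduces the claim to a direct regrouping of terms. First I would record that for any monomial $x^{[i]}$ one has $\mathrm{ev}_{\boldsymbol{\alpha}}(x^{[i]}) = [\alpha_1^{[i]}, \ldots, \alpha_n^{[i]}] = \boldsymbol{\alpha}^{[i]}$; since $\mathrm{ev}_{\boldsymbol{\alpha}}$ is additive and $\mathbb{F}_{q^m}$-homogeneous in the coefficients, this gives $\mathrm{ev}_{\boldsymbol{\alpha}}\!\left(\sum_i a_i x^{[i]}\right) = \sum_i a_i \boldsymbol{\alpha}^{[i]}$ for arbitrary $a_i \in \mathbb{F}_{q^m}$.

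Next I would apply this to an arbitrary $f \in \mathcal{P}_3$. Writing $f(x) = \sum_{i=0}^{k-1} f_i x^{[i]} + f_h \sum_{j=1}^{\ell} \eta_j x^{[k+t_j]}$ as in (\ref{p3}) and evaluating, I would split the $i=h$ term off the first sum and merge it with the common factor $f_h$ multiplying the twisted monomials, obtaining
$$\mathrm{ev}_{\boldsymbol{\alpha}}(f) = \sum_{s \in \{0,\ldots,k-1\}\backslash\{h\}} f_s \boldsymbol{\alpha}^{[s]} + f_h\left(\boldsymbol{\alpha}^{[h]} + \sum_{j=1}^{\ell} \eta_j \boldsymbol{\alpha}^{[k+t_j]}\right).$$
This exhibits every codeword of $\mathcal{C}_3 = \mathrm{ev}_{\boldsymbol{\alpha}}(\mathcal{P}_3)$ as an $\mathbb{F}_{q^m}$-linear combination of the $k$ vectors $\boldsymbol{\alpha}^{[s]}$ for $s \in \{0,\ldots,k-1\}\backslash\{h\}$ together with $\boldsymbol{\alpha}^{[h]} + \sum_{j=1}^{\ell} \eta_j \boldsymbol{\alpha}^{[k+t_j]}$, which proves the inclusion $\subseteq$.

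For the reverse inclusion I would note that the scalars $f_0, \ldots, f_{k-1}$ range freely and independently over $\mathbb{F}_{q^m}$, so any prescribed $\mathbb{F}_{q^m}$-combination of the listed generators is realized by a suitable $f \in \mathcal{P}_3$ (take $f_s$ to be the coefficient of $\boldsymbol{\alpha}^{[s]}$ for $s \neq h$ and $f_h$ the coefficient of the twisted generator); hence the span lies in $\mathcal{C}_3$, and combining the two inclusions gives the claimed equality. I do not anticipate any genuine obstacle: the only point needing care is the bookkeeping that isolates the single coefficient $f_h$ shared by the untwisted monomial $x^{[h]}$ and all twisted monomials $x^{[k+t_j]}$, which is precisely what collapses the span to $k$ generators, consistent with $\mathcal{C}_3$ being an $[n,k]$ code; the same regrouping simultaneously recovers the generator matrix $G_3$ of (\ref{eqgt3}), whose rows are exactly these $k$ generators.
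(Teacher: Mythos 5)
Your proof is correct, and it is exactly the routine verification the paper has in mind: the paper states this proposition without proof, treating it as immediate from the definition of $\mathcal{C}_{3}=\mathrm{ev}_{\boldsymbol{\alpha}}(\mathcal{P}_{3})$ and the linearity of the evaluation map. Your regrouping of the shared coefficient $f_h$ and the observation that the coefficients $f_0,\ldots,f_{k-1}$ range freely give both inclusions cleanly, so nothing is missing.
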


Now we give the necessary and sufficient condition for the twisted Gabidulin code $\mathcal{C}_{3}$ to be MRD.

\begin{theorem}\label{mm3}
 $\mathcal{C}_3$ is MRD if and only if $\boldsymbol{\eta}=[\eta_{1}, \ldots,\eta_{\ell}] \in \mathcal{N}^{\prime}$, where
$$
\mathcal{N}^{\prime}=\left\{\boldsymbol{\eta} \in (\mathbb{F}_{q^{m}}^{*})^{\ell}: \forall V\in \mathcal{V}_q(k,n), \left|V M_{k}(\boldsymbol{\alpha})^{\top}\right|+\sum_{i=1}^{\ell}\eta_i\left|VM_{k}^{(h,k+t_i)}(\boldsymbol{\alpha})^{\top}\right| \neq 0\right\}.
$$
\end{theorem}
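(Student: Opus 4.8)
The plan is to follow exactly the pattern already established in the proofs of Theorem \ref{mm1} and Theorem \ref{mm2}, generalizing the two-twist computation to $\ell$ twists. The starting point is Proposition \ref{gmrd}, which tells us that $\mathcal{C}_3$ is MRD if and only if $|VG_3^{\top}|\neq 0$ for every $V\in\mathcal{V}_q(k,n)$, where $G_3$ is the generator matrix given in (\ref{eqgt3}). So the entire argument reduces to computing this determinant and extracting the stated non-vanishing condition.

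First I would fix an arbitrary $V\in\mathcal{V}_q(k,n)$ and expand $|VG_3^{\top}|$ using multilinearity of the determinant in the single row of $G_3$ that differs from the Moore matrix, namely the $(h+1)$-th row $\boldsymbol{\alpha}^{[h]}+\sum_{j=1}^{\ell}\eta_j\boldsymbol{\alpha}^{[k+t_j]}$. Since the determinant is linear in this row and all other rows are the plain Frobenius powers $\boldsymbol{\alpha}^{[s]}$, the determinant splits as a sum of $\ell+1$ terms: one with the row $\boldsymbol{\alpha}^{[h]}$, giving $|VM_k(\boldsymbol{\alpha})^{\top}|$, and one for each $j$ with the row $\eta_j\boldsymbol{\alpha}^{[k+t_j]}$, giving $\eta_j|VM_k^{(h,k+t_j)}(\boldsymbol{\alpha})^{\top}|$ after pulling out the scalar $\eta_j$. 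This yields
$$
\left|VG_3^{\top}\right|=\left|VM_k(\boldsymbol{\alpha})^{\top}\right|+\sum_{i=1}^{\ell}\eta_i\left|VM_k^{(h,k+t_i)}(\boldsymbol{\alpha})^{\top}\right|,
$$
which is the direct $\ell$-fold analogue of the two-term expansion in Theorem \ref{mm2}.

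Finally I would conclude by observing that $|VG_3^{\top}|\neq 0$ for all $V\in\mathcal{V}_q(k,n)$ is, term by term, precisely the defining condition of the set $\mathcal{N}^{\prime}$; hence $\mathcal{C}_3$ is MRD if and only if $\boldsymbol{\eta}\in\mathcal{N}^{\prime}$. I do not anticipate a genuine obstacle here: the multilinear splitting is routine once one recognizes that only one row carries the twists, and the scalars $\eta_j$ factor out cleanly because each twist sits in the same row. The only point requiring a little care is confirming that each of the $\ell$ summands matches the matrix notation $M_k^{(h,k+t_i)}(\boldsymbol{\alpha})$ defined in (\ref{mkht}), i.e.\ that replacing the $(h+1)$-th row of the Moore matrix by $\boldsymbol{\alpha}^{[k+t_i]}$ reproduces exactly that matrix; this is immediate from the definition. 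Thus the proof is essentially a transcription of the two-twist case with the summation index ranging over all $\ell$ twists.
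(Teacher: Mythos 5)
Your proposal is correct and follows essentially the same route as the paper's proof: apply Proposition \ref{gmrd}, expand $\left|VG_3^{\top}\right|$ by multilinearity in the single twisted row to obtain $\left|VM_k(\boldsymbol{\alpha})^{\top}\right|+\sum_{i=1}^{\ell}\eta_i\left|VM_k^{(h,k+t_i)}(\boldsymbol{\alpha})^{\top}\right|$, and read off the condition defining $\mathcal{N}^{\prime}$. No gaps.
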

\begin{proof}
Let the matrix $G_3$ given in (\ref{eqgt3}) be a generator matrix of the twisted Gabidulin code  $\mathcal{C}_3$.

For any  $V\in \mathcal{V}_{q}(k, n)$, we have
$$\begin{aligned}
\left|V G_3^{\top}\right|&=\left|V\left[\begin{array}{cccc}
\boldsymbol{\alpha} \\
\vdots  \\
\boldsymbol{\alpha}^{[h-1]} \\
\boldsymbol{\alpha}^{[h]} \\
\boldsymbol{\alpha}^{[h+1]}  \\
\vdots \\
\boldsymbol{\alpha}^{[k-1]}
\end{array}\right]^{\top}\right|+\left|V\left[\begin{array}{cccc}
\boldsymbol{\alpha} \\
\vdots  \\
\boldsymbol{\alpha}^{[h-1]} \\
\eta_1 \boldsymbol{\alpha}^{[k+t_1]} \\
\boldsymbol{\alpha}^{[h+1]}  \\
\vdots \\
\boldsymbol{\alpha}^{[k-1]}
\end{array}\right]^{\top}\right|+\ldots+\left|V\left[\begin{array}{cccc}
\boldsymbol{\alpha} \\
\vdots  \\
\boldsymbol{\alpha}^{[h-1]} \\
\eta_{\ell} \boldsymbol{\alpha}^{[k+t_{\ell}]} \\
\boldsymbol{\alpha}^{[h+1]}  \\
\vdots \\
\boldsymbol{\alpha}^{[k-1]}
\end{array}\right]^{\top}\right|\\
&=\left|V M_{k}(\boldsymbol{\alpha})^{\top}\right|+\sum_{i=1}^{\ell}\eta_i\left|VM_{k}^{(h,k+t_i)}(\boldsymbol{\alpha})^{\top}\right|
\end{aligned}$$
Thus, by Proposition \ref{gmrd}, $\mathcal{C}_{3}$ is MRD if and only if $$\left|V M_{k}(\boldsymbol{\alpha})^{\top}\right|+\sum_{i=1}^{\ell}\eta_i\left|VM_{k}^{(h,k+t_i)}(\boldsymbol{\alpha})^{\top}\right| \neq 0,$$
that is,
$[\eta_{1}, \ldots,\eta_{\ell}] \in \mathcal{N}^{\prime}$.
\end{proof}
The following proposition shows a sufficient condition for the code $\mathcal{C}_{3}$ to be MRD, which is a special case of \cite{furt}. We also give a more direct proof using Theorem \ref{mm3}.
\begin{proposition}\label{spf1}
Let $s_1, \ldots,s_{\ell}$ be  positive integers with $n\leq s_1$ such that $\mathbb{F}_{q} \subsetneq \mathbb{F}_{q^{s_1}}\subsetneq \ldots \subsetneq \mathbb{F}_{q^{s_{\ell}}} \subsetneq \mathbb{F}_{q^{s_{\ell+1}}}=\mathbb{F}_{q^m}$ is a chain of subfields. Let $\alpha_{1}, \ldots, \alpha_{n}\in \mathbb{F}_{q^{s_1}}$ be linearly independent over $\mathbb{F}_{q}$ and  $\eta_{i}\in \mathbb{F}_{q^{s_{i+1}}} \backslash \mathbb{F}_{q^{s_{i}}}$ for $i=1,\ldots,\ell$. Then  $\mathcal{C}_{3}$  is MRD.
\end{proposition}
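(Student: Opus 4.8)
The plan is to reduce the claim, via Theorem \ref{mm3}, to a purely field-theoretic non-vanishing statement and then exploit the tower of subfields. By Theorem \ref{mm3}, it suffices to show that for every $V \in \mathcal{V}_q(k,n)$,
$$D_0 + \sum_{i=1}^{\ell} \eta_i D_i \neq 0, \qquad \text{where } D_0 := \left|V M_{k}(\boldsymbol{\alpha})^{\top}\right|,\ D_i := \left|V M_{k}^{(h,k+t_i)}(\boldsymbol{\alpha})^{\top}\right|.$$
First I would record where each quantity lives. Since $\alpha_1,\ldots,\alpha_n \in \mathbb{F}_{q^{s_1}}$ and $\mathbb{F}_{q^{s_1}}$ is closed under the Frobenius $x \mapsto x^{q}$, every entry $\alpha_j^{[k+t_i]}$ of the matrices above lies in $\mathbb{F}_{q^{s_1}}$; as $V$ has entries in $\mathbb{F}_{q} \subseteq \mathbb{F}_{q^{s_1}}$, it follows that $D_0, D_1, \ldots, D_\ell \in \mathbb{F}_{q^{s_1}}$. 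Moreover $D_0 \neq 0$ by Remark \ref{gneq0} (the underlying Gabidulin code is MRD), so in fact $D_0 \in \mathbb{F}_{q^{s_1}}^{*}$.

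The key step is to isolate the twist sitting in the largest subfield. Fix $V$ and let $j$ be the largest index in $\{1,\ldots,\ell\}$ with $D_j \neq 0$; if no such $j$ exists then the sum equals $D_0 \neq 0$ and we are done. Otherwise I would split
$$D_0 + \sum_{i=1}^{\ell} \eta_i D_i = \underbrace{\left(D_0 + \sum_{i=1}^{j-1} \eta_i D_i\right)}_{=:B} + \eta_j D_j.$$
For each $i \leq j-1$ we have $\eta_i \in \mathbb{F}_{q^{s_{i+1}}} \subseteq \mathbb{F}_{q^{s_j}}$ and $D_i \in \mathbb{F}_{q^{s_1}} \subseteq \mathbb{F}_{q^{s_j}}$, so $B \in \mathbb{F}_{q^{s_j}}$. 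On the other hand $D_j \in \mathbb{F}_{q^{s_1}}^{*} \subseteq \mathbb{F}_{q^{s_j}}^{*}$ while $\eta_j \notin \mathbb{F}_{q^{s_j}}$, whence $\eta_j D_j \notin \mathbb{F}_{q^{s_j}}$ (otherwise $\eta_j = (\eta_j D_j) D_j^{-1} \in \mathbb{F}_{q^{s_j}}$, a contradiction). Thus the sum is an element of $\mathbb{F}_{q^{s_j}}$ plus an element outside $\mathbb{F}_{q^{s_j}}$, so it lies outside $\mathbb{F}_{q^{s_j}}$ and is in particular nonzero. Theorem \ref{mm3} then yields that $\mathcal{C}_3$ is MRD. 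This is the natural generalization of the two-twist argument of Proposition \ref{777}, with the role of $\mathbb{F}_{q^{s_2}}$ played by $\mathbb{F}_{q^{s_{j+1}}}$; one could equivalently run an induction on $\ell$ that peels off $\eta_\ell$, using the one- and two-twist cases as base cases.

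The main point requiring care is the field bookkeeping rather than any hard computation. One must verify that all the determinants $D_i$ genuinely lie in $\mathbb{F}_{q^{s_1}}$, which rests on the Frobenius-closure of $\mathbb{F}_{q^{s_1}}$ together with $V$ being defined over $\mathbb{F}_{q}$, and one must handle the possibility that several of the $D_i$ vanish by selecting the \emph{largest} index $j$ with $D_j \neq 0$ so that all higher twists drop out of the sum. Once the top surviving twist $\eta_j D_j$ is pinned to $\mathbb{F}_{q^{s_{j+1}}} \setminus \mathbb{F}_{q^{s_j}}$ and the remaining terms are confined to $\mathbb{F}_{q^{s_j}}$, the non-vanishing is immediate.
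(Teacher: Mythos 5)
Your proof is correct and is essentially the paper's argument: both reduce the claim to Theorem \ref{mm3}, record that $\left|V M_{k}(\boldsymbol{\alpha})^{\top}\right| \in \mathbb{F}_{q^{s_1}}^{*}$ and all twist determinants lie in $\mathbb{F}_{q^{s_1}}$, and then show the twisted sum escapes a suitable subfield of the chain. The paper packages this as an induction on $\ell$ whose inductive step invokes the hypothesis only to discard vanishing coefficients, which unrolls to precisely your direct selection of the largest index $j$ with $D_j \neq 0$.
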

\begin{proof}
By Theorem \ref{mm3}, $\mathcal{C}_{3}$ is MRD if and only if $$\sum_{i=1}^{\ell}\eta_i\left|VM_{k}^{(h,k+t_i)}(\boldsymbol{\alpha})^{\top}\right| \neq -\left|V M_{k}(\boldsymbol{\alpha})^{\top}\right|$$ for any $V\in \mathcal{V}_{q}(k, n)$. Since $\alpha_{1}, \ldots, \alpha_{n}\in \mathbb{F}_{q^{s_1}}$, we have $\left|V M_{k}(\boldsymbol{\alpha})^{\top}\right|\in \mathbb{F}_{q^{s_{1}}}^*$. Denote $$A_{\ell}:=\sum_{i=1}^{\ell}\eta_i\left|VM_{k}^{(h,k+t_i)}(\boldsymbol{\alpha})^{\top}\right|=a_1\eta_1+\ldots+a_{\ell-1}\eta_{\ell-1}+a_{\ell}\eta_{\ell},$$
where $a_i=\left|VM_{k}^{(h,k+t_i)}(\boldsymbol{\alpha})^{\top}\right|\in\mathbb{F}_{q^{s_{1}}}$ for $1\leq i\leq \ell$.
Next we will prove $A_{\ell} \notin \mathbb{F}_{q^{s_{1}}}^*$ by induction on $\ell$.

 If $\ell=1$, then $A_1=a_1\eta_1 \notin \mathbb{F}_{q^{s_{1}}}^*$ since $\eta_{1}\in \mathbb{F}_{q^{s_{2}}} \backslash \mathbb{F}_{q^{s_{1}}}$. Suppose $A_{\ell-1}\notin \mathbb{F}_{q^{s_{1}}}^*$. 
 If $a_{\ell}=0$, then $A_{\ell}=A_{\ell-1}\notin \mathbb{F}_{q^{s_{1}}}^*$. Otherwise,  we have
$$\eta_{\ell}= \frac{A_{\ell}-A_{\ell-1}}{a_{\ell}}\notin \mathbb{F}_{q^{s_{\ell}}}.$$
Thus, $A_{\ell}\notin \mathbb{F}_{q^{s_{\ell}}}$. In particular, $A_{\ell}\notin \mathbb{F}_{q^{s_{1}}}^*$.

From $\left|V M_{k}(\boldsymbol{\alpha})^{\top}\right|\in \mathbb{F}_{q^{s_{1}}}^*$ and $A_{\ell}\notin \mathbb{F}_{q^{s_{1}}}^*$, we obtain $A_{\ell}\neq -\left|V M_{k}(\boldsymbol{\alpha})^{\top}\right|$. Therefore, $\mathcal{C}_{3}$ is MRD. \end{proof}

Next, we construct two new classes of MRD twisted Gabidulin codes.

\begin{proposition}\label{spf0}
Let $s$ be a positive integer with $n\leq s$ such that $\mathbb{F}_{q} \subsetneq \mathbb{F}_{q^{s}} \subsetneq \mathbb{F}_{q^m}$ is a chain of subfields. Let $\alpha_{1}, \ldots, \alpha_{n}\in \mathbb{F}_{q^s}$ be linearly independent over $\mathbb{F}_{q}$ and  $\boldsymbol{\eta}=[\eta_1,\ldots,\eta_{\ell}]\in\mathbb{F}_{q^m}^{\ell}$ with $\eta_1\in \mathbb{F}_{q^m} \backslash \mathbb{F}_{q^s}$, where $\eta_i=b_i\eta_1$ and $b_i\in \mathbb{F}_{q^s}^*$ for $1<i\leq \ell$. Then  $\mathcal{C}_{3}$  is MRD.
\end{proposition}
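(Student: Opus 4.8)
The plan is to verify, directly from Theorem \ref{mm3}, that the given $\boldsymbol{\eta}$ lies in $\mathcal{N}^{\prime}$; that is, I would show that for every $V\in\mathcal{V}_q(k,n)$ one has
$$\left|V M_{k}(\boldsymbol{\alpha})^{\top}\right|+\sum_{i=1}^{\ell}\eta_i\left|VM_{k}^{(h,k+t_i)}(\boldsymbol{\alpha})^{\top}\right| \neq 0,$$
whence $\mathcal{C}_3$ is MRD. Since the hypothesis forces all $\eta_i$ to share the common factor $\eta_1\in\mathbb{F}_{q^m}\setminus\mathbb{F}_{q^s}$ while the $\alpha_j$ live in the intermediate field $\mathbb{F}_{q^s}$, this is the $\ell$-twist analogue of the two-twist construction established earlier, and I expect a short field-theoretic argument rather than a delicate computation.

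First I would record that every determinant appearing above lies in $\mathbb{F}_{q^s}$. Indeed, each entry of $M_{k}(\boldsymbol{\alpha})$ and of $M_{k}^{(h,k+t_i)}(\boldsymbol{\alpha})$ has the form $\alpha_j^{[r]}$ with $\alpha_j\in\mathbb{F}_{q^s}$, and since $\mathbb{F}_{q^s}$ is closed under the Frobenius $\sigma$ these entries remain in $\mathbb{F}_{q^s}$; as $V$ has entries in $\mathbb{F}_q\subseteq\mathbb{F}_{q^s}$, each product $VM_{k}^{(\cdot)}(\boldsymbol{\alpha})^{\top}$, and hence its determinant, lies in $\mathbb{F}_{q^s}$. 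Moreover, by Remark \ref{gneq0} the Gabidulin determinant satisfies $\left|V M_{k}(\boldsymbol{\alpha})^{\top}\right|\in\mathbb{F}_{q^s}^{*}$.

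Next I would exploit the common factor. Setting $b_1=1$ and using $\eta_i=b_i\eta_1$ with $b_i\in\mathbb{F}_{q^s}^{*}$, I factor the twist sum as
$$\sum_{i=1}^{\ell}\eta_i\left|VM_{k}^{(h,k+t_i)}(\boldsymbol{\alpha})^{\top}\right| =\eta_1\sum_{i=1}^{\ell}b_i\left|VM_{k}^{(h,k+t_i)}(\boldsymbol{\alpha})^{\top}\right|.$$
By the previous step the inner sum lies in $\mathbb{F}_{q^s}$. If it is zero, the total determinant reduces to $\left|V M_{k}(\boldsymbol{\alpha})^{\top}\right|\neq 0$. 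If instead it is a nonzero element of $\mathbb{F}_{q^s}^{*}$, then multiplying it by $\eta_1\in\mathbb{F}_{q^m}\setminus\mathbb{F}_{q^s}$ yields an element of $\mathbb{F}_{q^m}\setminus\mathbb{F}_{q^s}$, which therefore cannot equal $-\left|V M_{k}(\boldsymbol{\alpha})^{\top}\right|\in\mathbb{F}_{q^s}^{*}$. In either case the total determinant is nonzero for every $V$, so $\boldsymbol{\eta}\in\mathcal{N}^{\prime}$ and $\mathcal{C}_3$ is MRD.

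The only point requiring care, and the closest thing to an obstacle, is the degenerate case in which the inner sum vanishes: one must not attempt to conclude non-vanishing from the twist term alone, but rather observe that it is exactly then that the surviving Gabidulin term $\left|V M_{k}(\boldsymbol{\alpha})^{\top}\right|$ rescues the argument. Keeping this case split clean, together with the elementary observation that the coset $\eta_1\cdot\mathbb{F}_{q^s}^{*}$ is disjoint from $\mathbb{F}_{q^s}^{*}$, is the crux of the proof.
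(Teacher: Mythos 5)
Your proof is correct and follows essentially the same route as the paper's: invoke Theorem \ref{mm3}, factor the twist sum as $\eta_1\bigl(\sum_i b_i\left|VM_{k}^{(h,k+t_i)}(\boldsymbol{\alpha})^{\top}\right|\bigr)$ with the inner sum in $\mathbb{F}_{q^s}$, and conclude it cannot equal $-\left|V M_{k}(\boldsymbol{\alpha})^{\top}\right|\in\mathbb{F}_{q^s}^{*}$. Your explicit case split (inner sum zero versus nonzero) is merely an unpacking of the paper's compressed assertion that the factored product lies outside $\mathbb{F}_{q^s}^{*}$, and it is a welcome clarification rather than a deviation.
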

\begin{proof}
For any $V\in \mathcal{V}_{q}(k, n)$, since $\alpha_{1}, \ldots, \alpha_{n}\in \mathbb{F}_{q^{s}}$, we have $\left|V M_{k}(\boldsymbol{\alpha})^{\top}\right|\in \mathbb{F}_{q^{s}}^*$. For $\eta_1\in \mathbb{F}_{q^m} \backslash \mathbb{F}_{q^s}$, we obtain
$$\sum_{i=1}^{\ell}\eta_i\left|VM_{k}^{(h,k+t_i)}(\boldsymbol{\alpha})^{\top}\right|= \eta_1\left(\left|VM_{k}^{(h,k+t_1)}(\boldsymbol{\alpha})^{\top}\right|+\sum_{i=2}^{\ell}b_i\left|VM_{k}^{(h,k+t_i)}(\boldsymbol{\alpha})^{\top}\right|\right) \notin \mathbb{F}_{q^{s}}^*$$
since $\eta_i=b_i\eta_1$ and $b_i\in \mathbb{F}_{q^s}^*$ for $1<i\leq \ell$.
Hence, $\sum_{i=1}^{\ell}\eta_i\left|VM_{k}^{(h,k+t_i)}(\boldsymbol{\alpha})^{\top}\right| \neq -\left|V M_{k}(\boldsymbol{\alpha})^{\top}\right|.$
By Theorem \ref{mm3}, $\mathcal{C}_{3}$ is MRD.
\end{proof}

\begin{definition}[\cite{tspf}]
Let  $\mathbb{F}_{q^m} / \mathbb{F}_{q^s}$  be a field extension. A vector  $\boldsymbol{\eta} \in   \mathbb{F}_{q^m}^{\ell}$  is called  $t$-$\mathbb{F}_{q^s}$-sum-product free if
$$\sum_{\substack{\mathcal{S} \subseteq\{1, \ldots, \ell\} \\ 0<|\mathcal{S}|\leq t}} a_{\mathcal{S}} \prod_{i \in \mathcal{S}} \eta_{i} \notin \mathbb{F}_{q^s}^{*}, \quad \forall a_{\mathcal{S}} \in \mathbb{F}_{q^s}.$$
\end{definition}

\begin{theorem}
Let $s$ be a positive integer with $n\leq s$ such that $\mathbb{F}_{q} \subsetneq \mathbb{F}_{q^{s}} \subsetneq \mathbb{F}_{q^m}$ is a chain of subfields. Let  $\alpha_{1},\ldots,\alpha_{n}\in \mathbb{F}_{q^{s}}$ be linearly independent over $\mathbb{F}_{q}$. Let $\boldsymbol{\eta}=[\eta_1,\ldots,\eta_\ell]\in \mathbb{F}_{q^{m}}^{\ell}$ be $1$-$\mathbb{F}_{q^{s}}$-sum-product free, then the twisted Gabidulin code $\mathcal{C}_{3}$ is MRD.
\end{theorem}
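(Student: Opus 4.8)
The plan is to apply Theorem \ref{mm3}, which reduces the claim that $\mathcal{C}_{3}$ is MRD to verifying that for every $V\in\mathcal{V}_q(k,n)$ one has
$$\left|V M_{k}(\boldsymbol{\alpha})^{\top}\right|+\sum_{i=1}^{\ell}\eta_i\left|VM_{k}^{(h,k+t_i)}(\boldsymbol{\alpha})^{\top}\right|\neq 0.$$
First I would observe that, since $\alpha_{1},\ldots,\alpha_{n}\in\mathbb{F}_{q^{s}}$ and the entries of $V$ lie in $\mathbb{F}_{q}\subseteq\mathbb{F}_{q^{s}}$, every Frobenius power $\alpha_j^{[r]}$ stays in $\mathbb{F}_{q^{s}}$, so all the minors appearing above belong to $\mathbb{F}_{q^{s}}$. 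Write $a_{0}=\left|V M_{k}(\boldsymbol{\alpha})^{\top}\right|$ and $a_{i}=\left|VM_{k}^{(h,k+t_i)}(\boldsymbol{\alpha})^{\top}\right|$ for $1\leq i\leq\ell$; each of these lies in $\mathbb{F}_{q^{s}}$. By Remark \ref{gneq0}, the underlying Gabidulin code is MRD, so $a_{0}\neq 0$, i.e.\ $a_{0}\in\mathbb{F}_{q^{s}}^{*}$.

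Next I would unpack the $1$-$\mathbb{F}_{q^{s}}$-sum-product free hypothesis. For $t=1$ the only nonempty index sets $\mathcal{S}$ with $|\mathcal{S}|\leq 1$ are the singletons, so the defining condition reduces to $\sum_{i=1}^{\ell}b_i\eta_i\notin\mathbb{F}_{q^{s}}^{*}$ for all $b_i\in\mathbb{F}_{q^{s}}$. Applying this with the particular coefficients $b_i=a_i$ constructed above shows that $\sum_{i=1}^{\ell}a_i\eta_i\notin\mathbb{F}_{q^{s}}^{*}$, meaning this sum is either equal to $0$ or lies strictly outside $\mathbb{F}_{q^{s}}$.

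The final step is a short case analysis, which is also the only place where any care is needed. If $\sum_{i=1}^{\ell}a_i\eta_i=0$, then the full expression equals $a_{0}\neq 0$. If instead $\sum_{i=1}^{\ell}a_i\eta_i\notin\mathbb{F}_{q^{s}}$, then adding the element $a_{0}\in\mathbb{F}_{q^{s}}$ cannot bring it back into $\mathbb{F}_{q^{s}}$, so the sum remains outside $\mathbb{F}_{q^{s}}$ and in particular is nonzero. In either case the displayed expression is nonzero for every $V\in\mathcal{V}_q(k,n)$, hence $\boldsymbol{\eta}\in\mathcal{N}^{\prime}$ and $\mathcal{C}_{3}$ is MRD by Theorem \ref{mm3}. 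I do not expect a genuine obstacle here; the one subtlety worth flagging is that the notation $\notin\mathbb{F}_{q^{s}}^{*}$ permits the value $0$, so the degenerate case must be ruled out using $a_{0}\neq 0$ rather than relying on the sum-product free condition alone.
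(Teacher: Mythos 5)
Your proposal is correct and follows essentially the same route as the paper: reduce to the criterion of Theorem \ref{mm3}, note that all the minors lie in $\mathbb{F}_{q^s}$ with $\left|V M_{k}(\boldsymbol{\alpha})^{\top}\right|\in\mathbb{F}_{q^s}^{*}$, and apply the $1$-$\mathbb{F}_{q^s}$-sum-product free condition with those minors as coefficients to conclude $\sum_{i=1}^{\ell}\eta_i\left|VM_{k}^{(h,k+t_i)}(\boldsymbol{\alpha})^{\top}\right|\neq-\left|V M_{k}(\boldsymbol{\alpha})^{\top}\right|$. Your explicit two-case analysis (sum equals $0$ versus sum outside $\mathbb{F}_{q^s}$) just spells out what the paper compresses into the single observation that an element not in $\mathbb{F}_{q^s}^{*}$ cannot equal an element of $\mathbb{F}_{q^s}^{*}$.
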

\begin{proof} 
For any $V\in \mathcal{V}_{q}(k, n)$, since each $\alpha_i\in \mathbb{F}_{q^s}$, we obtain
 $$\left|V M_{k}(\boldsymbol{\alpha})^{\top}\right|\in \mathbb{F}_{q^s}^{*} \ \text{  and }  \left|VM_{k}^{(h,k+t_i)}(\boldsymbol{\alpha})^{\top}\right|\in\mathbb{F}_{q^s}, 1\leq i\leq \ell.$$ Since $\boldsymbol{\eta}\in \mathbb{F}_{q^{m}}^{\ell}$ is $1$-$\mathbb{F}_{q^{s}}$-sum-product free, we have $\sum_{i=1}^{\ell}\eta_i\left|VM_{k}^{(h,k+t_i)}(\boldsymbol{\alpha})^{\top}\right|\notin \mathbb{F}_{q^s}^{*}$, and so $\sum_{i=1}^{\ell}\eta_i\left|VM_{k}^{(h,k+t_i)}(\boldsymbol{\alpha})^{\top}\right| \neq -\left|V M_{k}(\boldsymbol{\alpha})^{\top}\right|$. Thus, by Theorem \ref{mm3}, the code $\mathcal{C}_{3}$ is MRD.
 \end{proof}

\begin{remark}
The above Proposition \ref{spf1} and Proposition \ref{spf0} can be seen as two constructions of $1$-$\mathbb{F}_{q^{s}}$-sum-product free sets.
\end{remark}

Below, we present a useful lemma that plays an important role.
\begin{lemma}\label{lemma3}
Suppose that $\alpha_{1},\ldots,\alpha_{k}\in \mathbb{F}_{q^m}$ are linearly independent over $\mathbb{F}_{q}$. Let $\prod_{\alpha \in \left\langle\alpha_{1},\ldots,\alpha_{k}  \right\rangle_{\mathbb{F}_{q}}}(x-\alpha)=\sum_{j=0}^{k} c_{j} x^{[k-j]}$ and $c_j=0$ for $j>k$. Let $G_{3k}$ be the submatrix formed by the first $k$ columns of $G_{3}$ given in (\ref{eqgt3}), then $|G_{3k}|$=0 if and only if 
$1+\sum_{j=1}^{\ell}\eta_{j} g_{h}^{(t_{j})}=0$, where $g_{h}^{(t_{j})}=-\sum_{i=0}^{\min \{t_{j}, h\}} e_{t_{j},i} c_{k-h+i}^{[i]}$ and $e_{t_{j},0},\ldots,e_{t_{j},t_{j}}$ are as defined in Lemma \ref{lem1} for $1\leq j\leq \ell$.
\end{lemma}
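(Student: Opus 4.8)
The plan is to mimic the proofs of Lemma \ref{lemma1} and Lemma \ref{lemma2}, exploiting the multilinearity of the determinant in the single row that carries all the twists. Writing out $G_{3k}$, only its $(h+1)$-th row differs from the corresponding row of the Moore matrix $M_k(\boldsymbol{\alpha})$ restricted to the first $k$ columns: its $j$-th entry is $\alpha_j^{[h]} + \sum_{i=1}^{\ell} \eta_i \alpha_j^{[k+t_i]}$. Since the determinant is additive in this row, I would split $|G_{3k}|$ into $\ell+1$ determinants, one whose $(h+1)$-th row is $\boldsymbol{\alpha}^{[h]}$ and $\ell$ others whose $(h+1)$-th row is $\eta_i \boldsymbol{\alpha}^{[k+t_i]}$ for $1 \le i \le \ell$. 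Pulling each scalar $\eta_i$ out of its row yields
$$
|G_{3k}| = |M_k(\boldsymbol{\alpha})| + \sum_{i=1}^{\ell} \eta_i \left| M_k^{(h,k+t_i)}(\boldsymbol{\alpha}) \right|,
$$
where $\boldsymbol{\alpha} = [\alpha_1, \ldots, \alpha_k]$.

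The next step is to invoke Lemma \ref{lem3} once for each index $i$: for every $1 \le i \le \ell$ it gives $\left| M_k^{(h,k+t_i)}(\boldsymbol{\alpha}) \right| = g_h^{(t_i)} |M_k(\boldsymbol{\alpha})|$, where $g_h^{(t_i)} = -\sum_{s=0}^{\min\{t_i, h\}} e_{t_i, s} c_{k-h+s}^{[s]}$ with the $e_{t_i, s}$ built from $c_0, \ldots, c_{t_i}$ exactly as in Lemma \ref{lem1}. Substituting these back and factoring out the common Moore determinant gives
$$
|G_{3k}| = \left( 1 + \sum_{j=1}^{\ell} \eta_j g_h^{(t_j)} \right) |M_k(\boldsymbol{\alpha})|.
$$
Finally, since $\alpha_1, \ldots, \alpha_k$ are linearly independent over $\mathbb{F}_q$, Lemma \ref{dk} guarantees $|M_k(\boldsymbol{\alpha})| \neq 0$, so $|G_{3k}| = 0$ if and only if $1 + \sum_{j=1}^{\ell} \eta_j g_h^{(t_j)} = 0$, which is the claim.

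I expect no genuine obstacle here: the essential content has already been isolated in Lemma \ref{lem3}, and the present statement is the natural $\ell$-twist analogue of Lemma \ref{lemma1} (the one-twist case) and Lemma \ref{lemma2} (the two-twist case), obtained simply by replacing one additive term, or two, with a sum over $\ell$ terms. The only point deserving care is that Lemma \ref{lem3} must be applied separately for each $t_i$, since the auxiliary coefficients $e_{t_i, s}$ depend on $t_i$; they are nonetheless all generated from the same sequence $c_0, c_1, \ldots$ determined by $\prod_{\alpha \in \langle \alpha_1, \ldots, \alpha_k \rangle_{\mathbb{F}_q}}(x-\alpha)$, so no computation beyond Lemma \ref{lem3} is required. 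The bookkeeping of the split determinants is routine multilinearity.
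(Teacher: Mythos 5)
Your proposal is correct and follows exactly the paper's own argument: expand $|G_{3k}|$ by multilinearity in the $(h+1)$-th row into $\ell+1$ determinants, apply Lemma \ref{lem3} to each twisted term to obtain $|G_{3k}|=\bigl(1+\sum_{j=1}^{\ell}\eta_{j}g_{h}^{(t_{j})}\bigr)\,|M_{k}(\boldsymbol{\alpha})|$, and conclude via Lemma \ref{dk} that $|M_{k}(\boldsymbol{\alpha})|\neq 0$. No gaps; this is precisely the paper's proof.
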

\begin{proof}
By computing the determinant of $G_{3k}$, and applying Lemma \ref{dk} and Lemma \ref{lem3}, we obtain
$$\begin{aligned}
& \ |G_{3k}|\\
=&\left|\begin{array}{ccc}
\alpha_{1} & \cdots & \alpha_{k} \\
\vdots &  & \vdots \\
\alpha_{1}^{[h-1]} & \cdots & \alpha_{k}^{[h-1]} \\
\alpha_{1}^{[h]}+\sum_{j=1}^{\ell}\eta_{j}\alpha_{1}^{[k+t_{j}]} & \cdots & \alpha_{k}^{[h]}+\sum_{j=1}^{\ell}\eta_{j}\alpha_{k}^{[k+t_{j}]} \\
\alpha_{1}^{[h+1]}  & \cdots & \alpha_{k}^{[h+1]} \\
\vdots  & & \vdots \\
\alpha_{1}^{[k-1]}  & \cdots & \alpha_{k}^{[k-1]}
\end{array}\right|\\
=&\left|\begin{array}{ccc}
\alpha_{1}  & \cdots & \alpha_{k} \\
\vdots  &  & \vdots \\
\alpha_{1}^{[h-1]}  & \cdots & \alpha_{k}^{[h-1]} \\
\alpha_{1}^{[h]}  & \cdots & \alpha_{k}^{[h]} \\
\alpha_{1}^{[h+1]}  & \cdots & \alpha_{k}^{[h+1]} \\
\vdots  & &\vdots \\
\alpha_{1}^{[k-1]} & \cdots & \alpha_{k}^{[k-1]}
\end{array}\right|+\eta_{1} \left|\begin{array}{ccc}
\alpha_{1}  & \cdots & \alpha_{k} \\
\vdots  &  & \vdots \\
\alpha_{1}^{[h-1]}  & \cdots & \alpha_{k}^{[h-1]} \\
\alpha_{1}^{[k+t_{1}]}  & \cdots & \alpha_{k}^{[k+t_{1}]} \\
\alpha_{1}^{[h+1]}  & \cdots & \alpha_{k}^{[h+1]} \\
\vdots  & & \vdots \\
\alpha_{1}^{[k-1]} & \cdots & \alpha_{k}^{[k-1]}
\end{array}\right|+\ldots+\eta_{\ell} \left|\begin{array}{ccc}
\alpha_{1}  & \cdots & \alpha_{k} \\
\vdots  &  & \vdots \\
\alpha_{1}^{[h-1]}  & \cdots & \alpha_{k}^{[h-1]} \\
\alpha_{1}^{[k+t_{\ell}]}  & \cdots & \alpha_{k}^{[k+t_{\ell}]} \\
\alpha_{1}^{[h+1]}  & \cdots & \alpha_{k}^{[h+1]} \\
\vdots  & & \vdots \\
\alpha_{1}^{[k-1]} & \cdots & \alpha_{k}^{[k-1]}
\end{array}\right|\\
=&(1+\sum_{j=1}^{\ell}\eta_{j} g_{h}^{(t_{j})})\cdot \left|M_{k}(\boldsymbol{\alpha})\right|.\end{aligned}$$
Then $|G_{3k}|$=0 if and only if $1+\sum_{j=1}^{\ell}\eta_{j} g_{h}^{(t_{j})}=0$. It completes the proof.
\end{proof}
Now we give the condition that $\mathcal{C}_{3}$ is not an MRD code.
\begin{theorem}\label{tno6}
Let $g_{h}^{(t_{j})}=-\sum_{i=0}^{\min \{t_{j}, h\}} e_{t_{j},i} c_{k-h+i}^{[i]}$ for $1\leq j\leq \ell$, where $e_{t_{j},0},\ldots,e_{t_{j},t_{j}}$ are as defined in Lemma \ref{lem1}, $c_{0},\ldots,c_{k}$ satisfy $\prod_{\alpha \in \left\langle\alpha_{1},\ldots,\alpha_{k}  \right\rangle_{\mathbb{F}_{q}}}(x-\alpha)=\sum_{j=0}^{k} c_{j} x^{[k-j]}$, and $c_j=0$ for $j>k$. If $1+\sum_{j=1}^{\ell}\eta_{j} g_{h}^{(t_{j})}=0$, then $\mathcal{C}_{3}$  is not MRD.
\end{theorem}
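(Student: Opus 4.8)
The plan is to follow verbatim the template already established in the proofs of Theorem \ref{tno2} and Theorem \ref{tno4}, specializing the contrapositive of Lemma \ref{nmrd} to a single, conveniently chosen maximal minor of $G_3$. First I would take $G_3$ from (\ref{eqgt3}) as a generator matrix of $\mathcal{C}_3$ and single out the submatrix $G_{3k}$ formed by its first $k$ columns. Here I would observe that since $\mathrm{rk}_{\mathbb{F}_q}(\boldsymbol{\alpha})=n$, the entries $\alpha_1,\ldots,\alpha_k$ are in particular linearly independent over $\mathbb{F}_q$, which is exactly the hypothesis needed to invoke Lemma \ref{lemma3}.

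Next I would apply Lemma \ref{lemma3} directly. It yields the factorization $|G_{3k}| = \bigl(1+\sum_{j=1}^{\ell}\eta_j g_h^{(t_j)}\bigr)\cdot |M_k(\boldsymbol{\alpha})|$, where $\boldsymbol{\alpha}=[\alpha_1,\ldots,\alpha_k]$, and by Lemma \ref{dk} the Moore determinant $|M_k(\boldsymbol{\alpha})|$ is nonzero precisely because $\alpha_1,\ldots,\alpha_k$ are $\mathbb{F}_q$-linearly independent. Consequently, under the standing assumption $1+\sum_{j=1}^{\ell}\eta_j g_h^{(t_j)}=0$, the $k\times k$ maximal minor $|G_{3k}|$ vanishes.

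Finally I would conclude by the contrapositive of Lemma \ref{nmrd}: any generator matrix of an MRD code has only nonzero maximal minors, so exhibiting the single zero maximal minor $|G_{3k}|$ forces $\mathcal{C}_3$ to fail to be MRD, which is the claim. I do not expect any genuine obstacle, since the entire computational weight of the statement has already been absorbed into Lemma \ref{lemma3}, whose own proof rests on the determinant identity of Lemma \ref{lem3}, namely $|M_k^{(h,k+t)}(\boldsymbol{\alpha})| = g_h^{(t)}|M_k(\boldsymbol{\alpha})|$, applied termwise to each of the $\ell$ twisted rows after splitting the determinant by multilinearity in the $(h+1)$-th row. The only subtlety worth flagging is that this argument produces just one vanishing minor; that suffices for the \emph{negative} conclusion ``not MRD'' but asserts nothing about the remaining minors, which is consistent with the weaker statement being proved and mirrors the one-twist and two-twist cases exactly.
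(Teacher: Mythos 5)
Your proposal is correct and follows exactly the paper's own argument: invoke Lemma \ref{lemma3} to show the minor $|G_{3k}|$ formed by the first $k$ columns of $G_{3}$ vanishes under the hypothesis $1+\sum_{j=1}^{\ell}\eta_{j} g_{h}^{(t_{j})}=0$, then apply the contrapositive of Lemma \ref{nmrd}. The additional remarks you flag (the $\mathbb{F}_q$-linear independence of $\alpha_1,\ldots,\alpha_k$ justifying the use of Lemma \ref{lemma3}, and that one vanishing minor suffices for the negative conclusion) are accurate and consistent with the paper.
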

\begin{proof}
Let $G_{3}$ given in (\ref{eqgt3}) be a generator matrix of $\mathcal{C}_{3}$.
By Lemma \ref{lemma3}, if $1+\sum_{j=1}^{\ell}\eta_{j} g_{h}^{(t_{j})}=0$, then the $k\times k$ minor formed by the first $k$ columns of $G_{3}$ is zero. From Lemma \ref{nmrd},  $\mathcal{C}_{3}$  is not MRD.
\end{proof}

\begin{definition}
Suppose that $\alpha_{1},\ldots,\alpha_{n}\in \mathbb{F}_{q^m}$ are linearly independent over $\mathbb{F}_{q}$. For any $k$-subset $\mathcal{I}\subseteq\textit{\textbf{[}}n\textit{\textbf{]}}$, let $U_{\mathcal{I}}=\left\langle\alpha_{i} \ (i\in \mathcal{I}) \right\rangle_{\mathbb{F}_{q}}$, $\prod_{\alpha \in U_{\mathcal{I}}}(x-\alpha)=\sum_{j=0}^{k} c_{j} x^{[k-j]}$, and $c_j=0$ for $j>k$. Let $e_{t_{j},0},\ldots,e_{t_{j},t_{j}}$ be as defined in Lemma \ref{lem1} and $g_{h}^{(t_{j})}(\mathcal{I})=-\sum_{i=0}^{\min \{t_{j}, h\}} e_{t_{j},i} c_{k-h+i}^{[i]}$ for $1\leq j\leq \ell$. We denote
$$
\Omega_3:=\left\{[\eta_{1}, \ldots, \eta_{\ell}] \in (\mathbb{F}_{q^{m}}^{*})^{\ell}:\    1+\sum_{j=1}^{\ell}\eta_{j} g_{h}^{(t_{j})}(\mathcal{I})= 0\ \text{for some}\ k\text{-subset}\ \mathcal{I} \text{ of } \textit{\textbf{[}}n\textit{\textbf{]}}\right\}.
$$
\end{definition}
Using the same argument in the proofs of Lemma \ref{lemma3} and Theorem \ref{tno6}, we obtain the following theorem.
\begin{theorem}
If $\boldsymbol{\eta}=[\eta_{1}, \ldots,\eta_{\ell}] \in\Omega_3$, then $\mathcal{C}_{3}$  is not MRD.
\end{theorem}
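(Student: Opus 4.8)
The plan is to mimic exactly the two-step template already used to prove the analogous ``not MRD'' statements for $\mathcal{C}_{1}$ and $\mathcal{C}_{2}$ (the theorems following the definitions of $\Omega_1$ and $\Omega_2$), namely: first exhibit a vanishing maximal minor of the generator matrix $G_{3}$, and then invoke Lemma \ref{nmrd}, which guarantees that any MRD code has only non-zero maximal minors. Since the hypothesis $\boldsymbol{\eta}\in\Omega_3$ unpacks directly into the existence of a $k$-subset $\mathcal{I}\subseteq\textit{\textbf{[}}n\textit{\textbf{]}}$ with $1+\sum_{j=1}^{\ell}\eta_{j}g_{h}^{(t_{j})}(\mathcal{I})=0$, the whole argument reduces to identifying the determinant of a single $k\times k$ submatrix of $G_{3}$ with the quantity appearing in this equation.

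First I would fix $G_{3}$ from (\ref{eqgt3}) as a generator matrix of $\mathcal{C}_{3}$ and let $\mathcal{I}=\{i_1,\ldots,i_k\}$ be the $k$-subset furnished by the membership $\boldsymbol{\eta}\in\Omega_3$. I would then consider the $k\times k$ submatrix of $G_{3}$ obtained by keeping only the columns indexed by $\mathcal{I}$. This submatrix has precisely the shape of $G_{3k}$ in Lemma \ref{lemma3}, except that the entries are built from the subfamily $\alpha_{i_1},\ldots,\alpha_{i_k}$ rather than from $\alpha_1,\ldots,\alpha_k$; crucially, these $\alpha_{i}\ (i\in\mathcal{I})$ are still linearly independent over $\mathbb{F}_{q}$, so the hypotheses of Lemma \ref{lemma3} are met after relabeling.

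The key step is to observe that the definition of $g_{h}^{(t_{j})}(\mathcal{I})$ is tailored so that the determinant computation of Lemma \ref{lemma3} transfers verbatim: writing $\prod_{\alpha\in U_{\mathcal{I}}}(x-\alpha)=\sum_{j=0}^{k}c_{j}x^{[k-j]}$ and using the associated $e_{t_{j},i}$, the same column operations and the factorization through $\prod_{\alpha\in U_{\mathcal{I}}}(x-\alpha)$ yield that this minor equals $\bigl(1+\sum_{j=1}^{\ell}\eta_{j}g_{h}^{(t_{j})}(\mathcal{I})\bigr)\cdot\left|M_{k}(\boldsymbol{\alpha}_{\mathcal{I}})\right|$, where $\boldsymbol{\alpha}_{\mathcal{I}}=[\alpha_{i_1},\ldots,\alpha_{i_k}]$. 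By Lemma \ref{dk} the Moore determinant $\left|M_{k}(\boldsymbol{\alpha}_{\mathcal{I}})\right|$ is non-zero, so the chosen $k\times k$ minor of $G_{3}$ vanishes precisely because $1+\sum_{j=1}^{\ell}\eta_{j}g_{h}^{(t_{j})}(\mathcal{I})=0$.

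Finally I would conclude with Lemma \ref{nmrd}: $G_{3}$ now has a zero maximal minor, so $\mathcal{C}_{3}$ cannot be MRD. I do not expect any genuine obstacle here; the entire content sits in Lemma \ref{lemma3}, and the only point requiring care is the bookkeeping that replacing the first $k$ columns by an arbitrary $k$-subset $\mathcal{I}$ does not change the computation, which is already implicit in the column-indexed formulation of $g_{h}^{(t_{j})}(\mathcal{I})$. Thus the proof is essentially a one-line application of Lemma \ref{lemma3} and Lemma \ref{nmrd}, parallel to the $\mathcal{C}_{2}$ case.
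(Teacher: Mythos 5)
Your proposal is correct and follows essentially the same route as the paper: both arguments extract from $\boldsymbol{\eta}\in\Omega_3$ a $k$-subset $\mathcal{I}$ with $1+\sum_{j=1}^{\ell}\eta_{j}g_{h}^{(t_{j})}(\mathcal{I})=0$, apply the computation of Lemma \ref{lemma3} (relabeled to the columns indexed by $\mathcal{I}$) to conclude that the corresponding $k\times k$ minor of $G_{3}$ vanishes, and then invoke Lemma \ref{nmrd} to rule out MRD. The only difference is that you spell out the relabeling and the factorization through $\left|M_{k}(\boldsymbol{\alpha}_{\mathcal{I}})\right|$ explicitly, which the paper leaves implicit in its phrase ``using the same argument.''
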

 \begin{proof}
Let $G_{3}$ given in (\ref{eqgt3}) be a generator matrix of $\mathcal{C}_{3}$.
If $\boldsymbol{\eta}\in\Omega_3$, then there exists a $k$-subset $\mathcal{I}\subseteq\textit{\textbf{[}}n\textit{\textbf{]}}$ such that $1+\sum_{j=1}^{\ell}\eta_{j} g_{h}^{(t_{j})}(\mathcal{I})=0$, and so the determinant of the $k\times k$ submatrix formed by the columns of $G_{3}$ indexed by $\mathcal{I}$ is zero. Thus, $\mathcal{C}_{3}$ is not MRD.
\end{proof}

Considering the code $\mathcal{C}_{3}$ in the Hamming metric, we obtain the following results.
\begin{theorem}
 $\mathcal{C}_{3}$ is MDS if and only if $\boldsymbol{\eta}=[\eta_{1}, \ldots\eta_{\ell}] \notin\Omega_3$.
\end{theorem}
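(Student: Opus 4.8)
The plan is to characterise the MDS property of $\mathcal{C}_3$ through the maximal minors of its generator matrix via the criterion of Proposition \ref{gmds}, and then to reduce the vanishing of each such minor to the scalar equation defining $\Omega_3$, exactly along the lines already carried out for $\mathcal{C}_1$ and $\mathcal{C}_2$.

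First I would apply Proposition \ref{gmds}: the code $\mathcal{C}_3$ is MDS if and only if every $k\times k$ minor of $G_3$ in (\ref{eqgt3}) is non-zero, that is, if and only if for every $k$-subset $\mathcal{I}\subseteq\textit{\textbf{[}}n\textit{\textbf{]}}$ the determinant of the submatrix of $G_3$ formed by the columns indexed by $\mathcal{I}$ is non-zero.

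Next I would evaluate this determinant by repeating the argument of Lemma \ref{lemma3} with the $k$-subset $\mathcal{I}$ in place of the first $k$ coordinates. Writing $\boldsymbol{\alpha}_{\mathcal{I}}=[\alpha_i:i\in\mathcal{I}]$ and expanding the twisted $(h+1)$-st row by multilinearity of the determinant into the untwisted term plus the $\ell$ twist terms, Lemma \ref{lem3} applied to each twist term yields
$$\bigl|G_3[\mathcal{I}]\bigr|=\Bigl(1+\sum_{j=1}^{\ell}\eta_j\, g_h^{(t_j)}(\mathcal{I})\Bigr)\cdot\bigl|M_k(\boldsymbol{\alpha}_{\mathcal{I}})\bigr|,$$
where $g_h^{(t_j)}(\mathcal{I})$ is precisely the quantity occurring in the definition of $\Omega_3$. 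Since $\alpha_1,\ldots,\alpha_n$ are $\mathbb{F}_q$-linearly independent, so are the $\alpha_i$ with $i\in\mathcal{I}$, and hence $|M_k(\boldsymbol{\alpha}_{\mathcal{I}})|\neq 0$ by Lemma \ref{dk}. Consequently the minor indexed by $\mathcal{I}$ vanishes if and only if $1+\sum_{j=1}^{\ell}\eta_j g_h^{(t_j)}(\mathcal{I})=0$.

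Combining these two steps, all maximal minors of $G_3$ are non-zero exactly when no $k$-subset $\mathcal{I}$ satisfies $1+\sum_{j=1}^{\ell}\eta_j g_h^{(t_j)}(\mathcal{I})=0$, which by definition means $\boldsymbol{\eta}\notin\Omega_3$; this is the desired equivalence. I do not anticipate a substantive obstacle: the only delicate point is to confirm that the determinant identity of Lemma \ref{lemma3}, originally stated for the first $k$ columns, carries over to an arbitrary $k$-subset, and this is immediate because that identity relies solely on the $\mathbb{F}_q$-independence of the chosen $\alpha_i$'s and on the coefficients of $\prod_{\alpha\in U_{\mathcal{I}}}(x-\alpha)$.
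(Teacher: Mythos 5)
Your proposal is correct and follows essentially the same route as the paper: Proposition \ref{gmds} reduces the MDS property to the non-vanishing of all maximal minors of $G_3$, and the determinant identity of Lemma \ref{lemma3}, applied to an arbitrary $k$-subset $\mathcal{I}$, shows each minor vanishes exactly when $1+\sum_{j=1}^{\ell}\eta_j g_h^{(t_j)}(\mathcal{I})=0$, i.e.\ exactly when $\boldsymbol{\eta}\in\Omega_3$. Your explicit verification that the identity transfers from the first $k$ columns to any $k$-subset (via the $\mathbb{F}_q$-independence of the selected $\alpha_i$ and Lemma \ref{dk}) is a detail the paper leaves implicit, but the argument is identical in substance.
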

\begin{proof}
Let $G_{3}$, as defined in (\ref{eqgt3}), be a generator matrix of $\mathcal{C}_{3}$.
 By Proposition \ref{gmds},
$\mathcal{C}_{3}$  is  MDS  if and only if for each  $k$-subset $\mathcal{I}\subseteq\textit{\textbf{[}}n\textit{\textbf{]}}$, the $k\times k$ minor formed by the columns of $G_{3}$ indexed by $\mathcal{I}$ is non-zero ($1+\sum_{j=1}^{\ell}\eta_{j} g_{h}^{(t_{j})}(\mathcal{I})\neq 0$), and if and only if $\boldsymbol{\eta}=[\eta_{1}, \ldots\eta_{\ell}]  \notin\Omega_3$. It completes the proof.
\end{proof}

\begin{theorem}
Let $\boldsymbol{\eta}=[\eta_{1}, \ldots\eta_{\ell}] \in \Omega_3$. Then $\mathcal{C}_3$ is AMDS if and only if for each $(k+1)$-subset $\mathcal{J} \subseteq\textit{\textbf{[}}n\textit{\textbf{]}}$, there exists a $k$-subset  $\mathcal{I} \subseteq \mathcal{J}$ such that $1+\sum_{j=1}^{\ell}\eta_{j} g_{h}^{(t_{j})}(\mathcal{I})\neq 0$.
\end{theorem}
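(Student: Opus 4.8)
The plan is to follow exactly the template already used for the AMDS characterizations of $\mathcal{C}_1$ and $\mathcal{C}_2$, replacing the one- and two-twist determinant expansions by the $\ell$-twist formula of Lemma \ref{lemma3}. The key ingredient, already exploited in the MDS characterization of $\mathcal{C}_3$, is that for every $k$-subset $\mathcal{I}\subseteq\textit{\textbf{[}}n\textit{\textbf{]}}$ the $k\times k$ minor of $G_3$ formed by the columns indexed by $\mathcal{I}$ equals $(1+\sum_{j=1}^{\ell}\eta_j g_h^{(t_j)}(\mathcal{I}))$ times a Moore determinant, which is nonzero by Lemma \ref{dk} since $\alpha_1,\ldots,\alpha_n$ are $\mathbb{F}_q$-linearly independent. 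Hence this minor vanishes precisely when $1+\sum_{j=1}^{\ell}\eta_j g_h^{(t_j)}(\mathcal{I})=0$.

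First I would dispose of condition (ii) of Lemma \ref{anmds}: since $\boldsymbol{\eta}\in\Omega_3$, by definition there is a $k$-subset $\mathcal{I}$ with $1+\sum_{j=1}^{\ell}\eta_j g_h^{(t_j)}(\mathcal{I})=0$, so the corresponding $k$ columns of $G_3$ are linearly dependent and condition (ii) holds automatically. By Lemma \ref{anmds}, $\mathcal{C}_3$ is therefore AMDS if and only if condition (iii) holds, namely that every $k+1$ columns of $G_3$ have rank $k$. I would then translate this condition: as $G_3$ has exactly $k$ rows, any set of columns has rank at most $k$, and a $(k+1)$-subset $\mathcal{J}$ attains rank $k$ if and only if some $k$ of its columns are $\mathbb{F}_{q^m}$-linearly independent, equivalently if and only if there is a $k$-subset $\mathcal{I}\subseteq\mathcal{J}$ with nonvanishing maximal minor, that is $1+\sum_{j=1}^{\ell}\eta_j g_h^{(t_j)}(\mathcal{I})\neq 0$. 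Quantifying over all $(k+1)$-subsets $\mathcal{J}$ yields the claimed equivalence.

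I do not anticipate a genuine obstacle: the argument is a verbatim generalization of the AMDS proofs for $\mathcal{C}_1$ and $\mathcal{C}_2$, and the only point requiring care is the passage from ``rank $k$'' in condition (iii) to the existence of a nonvanishing $k\times k$ minor, which relies on $G_3$ having exactly $k$ rows.
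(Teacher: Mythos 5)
Your proposal is correct and follows essentially the same route as the paper: verify condition (ii) of Lemma \ref{anmds} from $\boldsymbol{\eta}\in\Omega_3$, reduce AMDS to condition (iii), and translate that condition into the nonvanishing of some $k\times k$ minor via the factorization $(1+\sum_{j=1}^{\ell}\eta_j g_h^{(t_j)}(\mathcal{I}))\cdot|M_k(\boldsymbol{\alpha}_{\mathcal{I}})|$. The paper's proof is simply terser, leaving the minor-to-rank translation and the appeal to the subset version of Lemma \ref{lemma3} implicit, whereas you spell them out.
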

\begin{proof}
Since $\boldsymbol{\eta}=[\eta_{1}, \ldots\eta_{\ell}] \in \Omega_3$, the condition (ii) of Lemma \ref{anmds} holds. Thus, the code $\mathcal{C}_3$ is AMDS if and only if the condition (iii) of Lemma \ref{anmds} holds,  if and only if for each $(k+1)$-subset  $\mathcal{J} \subseteq\textit{\textbf{[}}n\textit{\textbf{]}}$, there exists a $k$-subset  $\mathcal{I} \subseteq \mathcal{J}$ such that $1+\sum_{j=1}^{\ell}\eta_{j} g_{h}^{(t_{j})}(\mathcal{I})\neq 0$.
\end{proof}
\begin{theorem}
Let $\boldsymbol{\eta}=[\eta_{1}, \ldots\eta_{\ell}] \in \Omega_3$ and $h\in \{0,k-1\}$. Then $\mathcal{C}_3$ is NMDS if and only if for each $(k+1)$-subset  $\mathcal{J} \subseteq\textit{\textbf{[}}n\textit{\textbf{]}}$, there exists a $k$-subset  $\mathcal{I} \subseteq \mathcal{J}$ such that $1+\sum_{j=1}^{\ell}\eta_{j} g_{h}^{(t_{j})}(\mathcal{I})\neq 0$.
\end{theorem}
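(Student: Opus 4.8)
The plan is to apply Lemma~\ref{anmds}, which characterizes an $[n,k]$ NMDS code through three conditions on a generator matrix; I take $G_3$ from (\ref{eqgt3}). The strategy is to show that the two standing hypotheses, $\boldsymbol{\eta}\in\Omega_3$ and $h\in\{0,k-1\}$, force conditions (i) and (ii) of Lemma~\ref{anmds} to hold automatically, so that being NMDS becomes equivalent to condition (iii); the final step is then to restate (iii) in terms of the quantities $g_h^{(t_j)}(\mathcal{I})$ via Lemma~\ref{lemma3}. This mirrors the proofs of Theorems~\ref{amds1} and \ref{nmds1} in the one-twist case.

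First I would dispose of condition (ii) (existence of $k$ linearly dependent columns). Since $\boldsymbol{\eta}\in\Omega_3$, by definition there is a $k$-subset $\mathcal{I}\subseteq\textit{\textbf{[}}n\textit{\textbf{]}}$ with $1+\sum_{j=1}^{\ell}\eta_j g_h^{(t_j)}(\mathcal{I})=0$. Applying Lemma~\ref{lemma3} to the submatrix of $G_3$ on the columns indexed by $\mathcal{I}$ shows that the corresponding $k\times k$ minor vanishes, so those $k$ columns are linearly dependent and (ii) holds.

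Next I would verify condition (i) (any $k-1$ columns are linearly independent), which is where the hypothesis $h\in\{0,k-1\}$ is essential. Deleting the twisted $(h+1)$-th row of $G_3$ leaves the rows $\boldsymbol{\alpha}^{[s]}$ for $s\in\{0,\ldots,k-1\}\setminus\{h\}$. When $h=k-1$ these are $\boldsymbol{\alpha}^{[0]},\ldots,\boldsymbol{\alpha}^{[k-2]}$, i.e.\ the Moore matrix $M_{k-1}(\boldsymbol{\alpha})$; when $h=0$ they are $\boldsymbol{\alpha}^{[1]},\ldots,\boldsymbol{\alpha}^{[k-1]}$, a Moore matrix generated by $\boldsymbol{\alpha}^{[1]}$. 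In either case any $k-1$ of these columns form a square Moore matrix whose determinant is nonzero by Lemma~\ref{dk}, since $\alpha_1,\ldots,\alpha_n$ are $\mathbb{F}_q$-linearly independent (for $h=0$ the determinant is a Frobenius power of a Moore determinant, hence still nonzero). Thus every choice of $k-1$ columns of $G_3$ already contains a nonzero $(k-1)\times(k-1)$ minor, so those columns have rank $k-1$ and (i) holds.

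With (i) and (ii) secured, Lemma~\ref{anmds} reduces NMDS to condition (iii): every $k+1$ columns of $G_3$ have rank $k$. A $k\times(k+1)$ submatrix has rank $k$ precisely when at least one of its $k\times k$ minors is nonzero, and by Lemma~\ref{lemma3} the minor on a $k$-subset $\mathcal{I}\subseteq\mathcal{J}$ is nonzero exactly when $1+\sum_{j=1}^{\ell}\eta_j g_h^{(t_j)}(\mathcal{I})\neq 0$. Hence (iii) is equivalent to the assertion that for each $(k+1)$-subset $\mathcal{J}$ there exists a $k$-subset $\mathcal{I}\subseteq\mathcal{J}$ with $1+\sum_{j=1}^{\ell}\eta_j g_h^{(t_j)}(\mathcal{I})\neq 0$, which is the stated criterion. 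The only genuinely delicate point is condition (i): one must use that $h\in\{0,k-1\}$ makes the reduced matrix a \emph{contiguous} Moore matrix so that Lemma~\ref{dk} applies, whereas for intermediate $h$ a Frobenius power is skipped and this step fails.
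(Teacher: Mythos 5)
Your proposal is correct and follows essentially the same route as the paper: invoke Lemma~\ref{anmds}, note that $h\in\{0,k-1\}$ forces condition (i) and $\boldsymbol{\eta}\in\Omega_3$ forces condition (ii), then identify condition (iii) with the stated criterion via Lemma~\ref{lemma3} applied to each $k$-subset $\mathcal{I}\subseteq\mathcal{J}$. The only difference is that the paper asserts (i) and (ii) without justification, whereas you supply the details (the contiguous Moore-matrix argument for (i), with the Frobenius-power observation for $h=0$, and the vanishing minor from $\Omega_3$ for (ii)), which are exactly the intended ones.
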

\begin{proof}
Since $h\in \{0,k-1\}$, the condition (i) of Lemma \ref{anmds} holds.
Since $\boldsymbol{\eta}=[\eta_{1}, \ldots\eta_{\ell}] \in \Omega_3$, the condition (ii) of Lemma \ref{anmds} holds. Therefore, $\mathcal{C}_3$ is NMDS if and only if the condition (iii) of Lemma \ref{anmds} holds, if and only if for each $(k+1)$-subset  $\mathcal{J} \subseteq\textit{\textbf{[}}n\textit{\textbf{]}}$, there exists a $k$-subset  $\mathcal{I} \subseteq \mathcal{J}$ such that $1+\sum_{j=1}^{\ell}\eta_{j} g_{h}^{(t_{j})}(\mathcal{I})\neq 0$.
\end{proof}

\section{Covering radii and deep holes of  twisted Gabidulin codes}
In this section, we study the covering radii of three classes of twisted Gabidulin codes, and then we characterize the deep holes for certain twisted Gabidulin codes.  We first recall some basic concepts and well-known results.

Let $\mathcal{C}$ be an $[n,k]$ rank metric code over $\mathbb{F}_{q^m}$. For any vector $\boldsymbol{u}\in\mathbb{F}_{q^m}^n$, the rank distance from $\boldsymbol{u}$ to the code $\mathcal{C}$ is defined as $d_R(\boldsymbol{u},\mathcal{C}):=\mathrm{min}\{d_R(\boldsymbol{u},\boldsymbol{c}):\boldsymbol{c}\in\mathcal{C}\}$.
According to this, the covering radius and deep hole of a rank metric code are defined as follows.
\begin{definition}[\cite{st}]\label{crdh}
Let $\mathcal{C}\subseteq \mathbb{F}_{q^m}^{n}$ be an $[n,k]$ rank metric code. The covering radius of $\mathcal{C}$ is defined as
$$\rho_{R}(\mathcal{C}):=\mathrm{max}\{d_R(\boldsymbol{u},\mathcal{C}):\boldsymbol{u}\in\mathbb{F}_{q^m}^n\}.$$
A vector $\boldsymbol{x}\in \mathbb{F}_{q^m}^{n}$ is called a deep hole of $\mathcal{C}$ if it satisfies
$$d_R(\boldsymbol{x},\mathcal{C})=\rho_{R}(\mathcal{C}).$$
\end{definition}
The following states an important fact.
\begin{lemma}[\cite{st}]\label{cr}
Let $\mathcal{C},\mathcal{D}\subseteq \mathbb{F}_{q^m}^n$ be a pair of linear rank metric codes. The following hold.
\begin{enumerate}[(i)]
\item If $\mathcal{C}\subseteq\mathcal{D}$, then $\rho_{R}(\mathcal{C})\geq\rho_{R}(\mathcal{D})$.
\item If $\mathcal{C}\subsetneq\mathcal{D}$, then $\rho_{R}(\mathcal{C})\geq d_R(\mathcal{D})$.
\item $\rho_{R}(\mathcal{C})=0$ if and only if $\mathcal{C}=\mathbb{F}_{q^m}^n$.
\end{enumerate}
\end{lemma}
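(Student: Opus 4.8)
The plan is to prove each of the three claims directly from the definitions of $d_R(\boldsymbol{u},\mathcal{C})$ and $\rho_R(\mathcal{C})$ recorded in Definition \ref{crdh}, using no machinery beyond the elementary fact that a difference of two codewords of a linear code is again a codeword. All three parts are formal consequences of the definitions, so the work is organizational rather than technical.

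For part (i), the first step is to fix an arbitrary $\boldsymbol{u}\in\mathbb{F}_{q^m}^n$ and compare the two minimizations that define the distances to $\mathcal{C}$ and to $\mathcal{D}$. Since $\mathcal{C}\subseteq\mathcal{D}$, the minimum defining $d_R(\boldsymbol{u},\mathcal{D})$ ranges over a superset of the one defining $d_R(\boldsymbol{u},\mathcal{C})$, and hence $d_R(\boldsymbol{u},\mathcal{C})\geq d_R(\boldsymbol{u},\mathcal{D})$ for every $\boldsymbol{u}$. Taking the maximum over $\boldsymbol{u}$ on both sides, and using that a pointwise inequality between two functions is inherited by their maxima, yields $\rho_R(\mathcal{C})\geq\rho_R(\mathcal{D})$.

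For part (ii), I would exhibit a single vector that already witnesses the bound. Since $\mathcal{C}\subsetneq\mathcal{D}$, I would choose some $\boldsymbol{d}\in\mathcal{D}\setminus\mathcal{C}$ and bound $d_R(\boldsymbol{d},\mathcal{C})$ from below. The key step, and the only place where any genuine idea enters, is to observe that for every $\boldsymbol{c}\in\mathcal{C}\subseteq\mathcal{D}$ the difference $\boldsymbol{d}-\boldsymbol{c}$ lies in $\mathcal{D}$ and is nonzero (because $\boldsymbol{d}\notin\mathcal{C}$ forces $\boldsymbol{d}\neq\boldsymbol{c}$), so it is a nonzero codeword of $\mathcal{D}$; hence $\mathrm{rk}_{\mathbb{F}_q}(\boldsymbol{d}-\boldsymbol{c})\geq d_R(\mathcal{D})$. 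Minimizing over $\boldsymbol{c}\in\mathcal{C}$ gives $d_R(\boldsymbol{d},\mathcal{C})\geq d_R(\mathcal{D})$, and since $\rho_R(\mathcal{C})$ is a maximum over all vectors it dominates $d_R(\boldsymbol{d},\mathcal{C})$, completing the claim.

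For part (iii), I would establish both implications from the description of $\rho_R(\mathcal{C})$ as a maximum of nonnegative integers. If $\mathcal{C}=\mathbb{F}_{q^m}^n$, then every $\boldsymbol{u}$ is itself a codeword, so the admissible choice $\boldsymbol{c}=\boldsymbol{u}$ forces $d_R(\boldsymbol{u},\mathcal{C})=0$ and therefore $\rho_R(\mathcal{C})=0$. Conversely, $\rho_R(\mathcal{C})=0$ forces $d_R(\boldsymbol{u},\mathcal{C})=0$ for every $\boldsymbol{u}$; since the rank distance vanishes only between equal vectors, each $\boldsymbol{u}$ coincides with some codeword and thus lies in $\mathcal{C}$, giving $\mathbb{F}_{q^m}^n\subseteq\mathcal{C}$ and hence equality. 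I do not anticipate any real obstacle: the substantive content is confined to the closure-under-differences observation used in part (ii), and the remaining arguments are direct manipulations of the defining minima and maxima.
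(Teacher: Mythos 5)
Your proof is correct. Note, however, that the paper itself gives no proof of this lemma at all: it is imported as a known result from the cited reference on saturating systems and the rank-metric covering radius, so there is no in-paper argument to compare yours against. Your three arguments are exactly the standard ones and are complete: part (i) is the superset-minimization observation followed by taking maxima; part (ii) correctly isolates the one substantive point, namely that for $\boldsymbol{d}\in\mathcal{D}\setminus\mathcal{C}$ and any $\boldsymbol{c}\in\mathcal{C}$ the difference $\boldsymbol{d}-\boldsymbol{c}$ is a \emph{nonzero} element of the linear code $\mathcal{D}$, so its rank weight is at least $d_R(\mathcal{D})$ (and, implicitly, $\mathcal{D}\neq\{\boldsymbol{0}\}$ so $d_R(\mathcal{D})$ is well defined, since $\boldsymbol{0}\in\mathcal{C}\subsetneq\mathcal{D}$); part (iii) uses that the rank weight vanishes only on the zero vector together with the fact that the minimum over the finite set $\mathcal{C}$ is attained. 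In effect you have supplied a self-contained elementary proof of a statement the paper leaves as a citation, which is a reasonable thing to do given the lemma's role in Section 6.
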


Note that a linear code with Hamming metric covering radius $\rho_H$ has rank metric covering radius $\rho_{R}\leq \rho_H$ \cite{packpro} since for two vectors $\boldsymbol{u},\boldsymbol{c}\in\mathbb{F}_{q^m}^n$, the rank distance between them is less than or equal to the Hamming distance between them, i.e., $ d_R(\boldsymbol{u},\boldsymbol{c})\leq d_H(\boldsymbol{u},\boldsymbol{c})$.
From the redundancy bound \cite[Corollary 11.1.3]{dp2},  we can obtain the following useful lemma.
\begin{lemma}\label{cra}
Let $\mathcal{C}$ be an $[n,k]$ linear rank metric code over $\mathbb{F}_{q^m}$ with $k< n\leq m$. It holds that $\rho_{R}(\mathcal{C})\leq n-k$.
\end{lemma}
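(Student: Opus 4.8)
The plan is to reduce this rank-metric statement to the corresponding Hamming-metric fact, which is precisely the redundancy bound recalled as \cite[Corollary 11.1.3]{dp2}. The essential link is the inequality noted immediately before the lemma: for any $\boldsymbol{u},\boldsymbol{c}\in\mathbb{F}_{q^m}^n$ we have $d_R(\boldsymbol{u},\boldsymbol{c})\leq d_H(\boldsymbol{u},\boldsymbol{c})$, which is just $\mathrm{rk}_{\mathbb{F}_q}(\boldsymbol{x})\leq\mathrm{wt}_H(\boldsymbol{x})$ from Remark \ref{rkwt} applied to $\boldsymbol{x}=\boldsymbol{u}-\boldsymbol{c}$. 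Taking the minimum over $\boldsymbol{c}\in\mathcal{C}$ gives $d_R(\boldsymbol{u},\mathcal{C})\leq d_H(\boldsymbol{u},\mathcal{C})$, and then the maximum over $\boldsymbol{u}$ yields $\rho_R(\mathcal{C})\leq\rho_H(\mathcal{C})$. Hence it suffices to bound the Hamming covering radius $\rho_H(\mathcal{C})$ by $n-k$.

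First I would invoke the redundancy bound, which asserts that every $[n,k]$ linear code satisfies $\rho_H(\mathcal{C})\leq n-k$; combining this with the previous reduction yields $\rho_R(\mathcal{C})\leq n-k$ at once.

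Alternatively, and perhaps more transparently, I would give a self-contained argument that runs directly in the rank metric. Since the generator matrix $G$ has $\mathbb{F}_{q^m}$-rank $k$, after a suitable permutation of the coordinates we may assume $G$ is in systematic form $G=[\,I_k\mid A\,]$ for some $A\in\mathbb{F}_{q^m}^{k\times(n-k)}$. Given an arbitrary $\boldsymbol{u}=(\boldsymbol{u}_1,\boldsymbol{u}_2)$ with $\boldsymbol{u}_1\in\mathbb{F}_{q^m}^{k}$ and $\boldsymbol{u}_2\in\mathbb{F}_{q^m}^{n-k}$, set $\boldsymbol{c}:=\boldsymbol{u}_1G\in\mathcal{C}$; then $\boldsymbol{u}-\boldsymbol{c}$ vanishes in its first $k$ coordinates, so $\mathrm{wt}_H(\boldsymbol{u}-\boldsymbol{c})\leq n-k$. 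By Remark \ref{rkwt}, $d_R(\boldsymbol{u},\boldsymbol{c})=\mathrm{rk}_{\mathbb{F}_q}(\boldsymbol{u}-\boldsymbol{c})\leq n-k$, whence $d_R(\boldsymbol{u},\mathcal{C})\leq n-k$. As $\boldsymbol{u}$ was arbitrary, $\rho_R(\mathcal{C})\leq n-k$.

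There is essentially no obstacle here: the lemma is a direct consequence of the two facts already in hand, namely the rank-versus-Hamming inequality and the redundancy bound. The only point that deserves a word of care is that the systematic form is obtained after a coordinate permutation, which alters neither the rank weights (the $\mathbb{F}_q$-dimension of the span of the coordinates is permutation-invariant) nor the covering radius; so the reduction is harmless. The hypothesis $k<n\leq m$ is used only to guarantee that $n-k$ is a meaningful positive bound and that the rank metric is well defined.
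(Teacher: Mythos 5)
Your proposal is correct, and your first argument is precisely the paper's: the lemma is stated there as an immediate consequence of the preceding remark that $\rho_{R}\leq\rho_{H}$ (via $d_R(\boldsymbol{u},\boldsymbol{c})\leq d_H(\boldsymbol{u},\boldsymbol{c})$, i.e.\ Remark \ref{rkwt}) combined with the redundancy bound \cite[Corollary 11.1.3]{dp2}. Your second, self-contained argument is also sound: putting $G$ in systematic form after a coordinate permutation and taking $\boldsymbol{c}=\boldsymbol{u}_1G$ gives $\mathrm{wt}_H(\boldsymbol{u}-\boldsymbol{c})\leq n-k$, hence $\mathrm{rk}_{\mathbb{F}_q}(\boldsymbol{u}-\boldsymbol{c})\leq n-k$; this simply inlines the standard proof of the redundancy bound, and you correctly note that the permutation changes neither rank weights nor the covering radius. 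The only thing the second route buys is independence from the external citation.
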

In the following, we consider the twisted Gabidulin codes $\mathcal{C}_{1}=\mathrm{ev}_{\boldsymbol{\alpha}}(\mathcal{P}_{1})$ with $t=0$, where $\mathcal{P}_{1}$ is given in (\ref{stp1}).
Firstly, we determine the covering radius of $\mathcal{C}_{1}$.
\begin{theorem}\label{crho1} Let $\mathcal{C}_{1}=\mathrm{ev}_{\boldsymbol{\alpha}}(\mathcal{P}_{1})$ be an $[n,k]$ twisted Gabidulin code, where $\mathcal{P}_{1}$ is given in (\ref{stp1}) and $\boldsymbol{\alpha}=[\alpha_{1},\ldots,\alpha_{n}]\in \mathbb{F}_{q^{m}}^{n}$ with $\mathrm{rk}_{\mathbb{F}_{q}}(\boldsymbol{\alpha})=n$. It holds that
$$\rho_{R}(\mathcal{C}_1)=n-k.$$
\end{theorem}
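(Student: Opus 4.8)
The plan is to establish matching upper and lower bounds for $\rho_{R}(\mathcal{C}_{1})$. The upper bound $\rho_{R}(\mathcal{C}_{1})\le n-k$ is immediate from Lemma \ref{cra}, since $\mathcal{C}_{1}$ is an $[n,k]$ rank metric code with $k<n\le m$. All of the work therefore goes into proving the reverse inequality $\rho_{R}(\mathcal{C}_{1})\ge n-k$.

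For the lower bound I would invoke Lemma \ref{cr}(ii): if $\mathcal{C}_{1}\subsetneq\mathcal{D}$ for some linear code $\mathcal{D}$, then $\rho_{R}(\mathcal{C}_{1})\ge d_{R}(\mathcal{D})$. The idea is to embed $\mathcal{C}_{1}$ into a Gabidulin code of one higher dimension. Using the spanning description of $\mathcal{C}_{1}$ with $t=0$, we have $\mathcal{C}_{1}=\left\langle\boldsymbol{\alpha}^{[h]}+\eta\boldsymbol{\alpha}^{[k]},\ \boldsymbol{\alpha}^{[s]}\ (s\in\{0,\ldots,k-1\}\setminus\{h\})\right\rangle_{\mathbb{F}_{q^{m}}}$. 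Every generator lies in $\left\langle\boldsymbol{\alpha}^{[0]},\ldots,\boldsymbol{\alpha}^{[k]}\right\rangle_{\mathbb{F}_{q^{m}}}=\mathcal{G}_{n,k+1}(\boldsymbol{\alpha})$, so $\mathcal{C}_{1}\subseteq\mathcal{G}_{n,k+1}(\boldsymbol{\alpha})$; and since $\dim\mathcal{C}_{1}=k<k+1=\dim\mathcal{G}_{n,k+1}(\boldsymbol{\alpha})$, the containment is strict. The essential content of the proof is exactly this clean observation: augmenting $\mathcal{C}_{1}$ by the single direction $\boldsymbol{\alpha}^{[k]}$ recovers a full Gabidulin code.

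Now take $\mathcal{D}=\mathcal{G}_{n,k+1}(\boldsymbol{\alpha})$. Because $\boldsymbol{\alpha}$ has full rank $n$ over $\mathbb{F}_{q}$, this Gabidulin code is MRD, so $d_{R}(\mathcal{D})=n-(k+1)+1=n-k$. Applying Lemma \ref{cr}(ii) gives $\rho_{R}(\mathcal{C}_{1})\ge d_{R}(\mathcal{D})=n-k$, which together with the upper bound yields $\rho_{R}(\mathcal{C}_{1})=n-k$.

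The only delicate point I anticipate is the boundary case $k+1=n$, where $\mathcal{G}_{n,k+1}(\boldsymbol{\alpha})$ is not literally covered by Definition \ref{gab} (which assumes $k<n$). In that case the rows $\boldsymbol{\alpha}^{[0]},\ldots,\boldsymbol{\alpha}^{[n-1]}$ span all of $\mathbb{F}_{q^{m}}^{n}$ by Lemma \ref{dk}, so I would instead set $\mathcal{D}=\mathbb{F}_{q^{m}}^{n}$, which still strictly contains $\mathcal{C}_{1}$ and satisfies $d_{R}(\mathcal{D})=1=n-k$; the same conclusion follows. I expect the dimension count and the strict containment to be entirely routine, so no genuine obstacle arises beyond correctly handling this edge case.
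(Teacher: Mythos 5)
Your proof is correct and takes essentially the same approach as the paper's: both establish the strict containment $\mathcal{C}_{1}\subsetneq\mathcal{G}_{n,k+1}(\boldsymbol{\alpha})$ (the paper via the generator matrix, you via the spanning set, which is the same observation) and then combine Lemma \ref{cr}(ii) for the lower bound $\rho_{R}(\mathcal{C}_{1})\ge n-k$ with Lemma \ref{cra} for the upper bound. Your handling of the boundary case $k+1=n$, where Definition \ref{gab} does not literally define $\mathcal{G}_{n,k+1}(\boldsymbol{\alpha})$ and one should instead take $\mathcal{D}=\mathbb{F}_{q^m}^{n}$ with $d_{R}(\mathcal{D})=1=n-k$, is a small gap the paper glosses over, and your fix is valid.
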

\begin{proof}
Since $\mathcal{C}_1$ has generator matrix
\begin{equation}\label{cdc1}
\left[\begin{array}{cccc}
\alpha_{1} & \alpha_{2} & \ldots & \alpha_{n} \\
\vdots & \vdots & & \vdots \\
\alpha_{1}^{[h]}+\eta \alpha_{1}^{[k]} & \alpha_{2}^{[h]}+\eta\alpha_{2}^{[k]} & \ldots & \alpha_{n}^{[h]}+\eta\alpha_{n}^{[k]} \\
\vdots & \vdots &  & \vdots \\
\alpha_{1}^{[k-1]} & \alpha_{2}^{[k-1]} & \ldots & \alpha_{n}^{[k-1]}
\end{array}\right],
\end{equation}
we can easily obtain $\mathcal{C}_1\subsetneq \mathcal{C}$, where $\mathcal{C}$ is an $[n,k+1]$ linear code with a generator matrix of the form
$$
\left[\begin{array}{cccc}
\alpha_{1} & \alpha_{2} & \ldots & \alpha_{n} \\
\vdots & \vdots & & \vdots \\
\alpha_{1}^{[k-1]} & \alpha_{2}^{[k-1]} & \ldots & \alpha_{n}^{[k-1]}\\
\alpha_{1}^{[k]} & \alpha_{2}^{[k]} & \ldots & \alpha_{n}^{[k]}
\end{array}\right].
$$
Obviously, $\mathcal{C}$ is the Gabidulin code $\mathcal{G}_{n,k+1}(\boldsymbol{\alpha})$, then we have $d_R(\mathcal{C})=n-k$. By Lemma \ref{cr}, we obtain $\rho_{R}(\mathcal{C}_1)\geq d_R(\mathcal{C})=n-k$. It follows from Lemma \ref{cra} that $\rho_{R}(\mathcal{C}_1)\leq n-k$. Therefore, $\rho_{R}(\mathcal{C}_1)=n-k$ holds.
\end{proof}
Next, we provide an equivalent condition under which a vector is a deep hole of the code $\mathcal{C}_1$.

\begin{theorem}\label{dh1}
Let $G_1$ given in (\ref{cdc1}) be a generator matrix of the twisted Gabidulin code $\mathcal{C}_1$. For any vector $\boldsymbol{u}\in\mathbb{F}_{q^m}^n\backslash \mathcal{C}_1$, let $G=\begin{bmatrix}
G_1 \\
\boldsymbol{u}
\end{bmatrix}$. Then $\boldsymbol{u}$ is a deep hole of $\mathcal{C}_1$ if and only if $G$ generates an $[n,k+1]$ MRD code $\mathcal{C}$.
\end{theorem}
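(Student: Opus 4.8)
The plan is to translate both sides of the claimed equivalence into statements about a single minimum rank distance and then match them. First I would record the two endpoints. By Theorem \ref{crho1} we have $\rho_{R}(\mathcal{C}_1)=n-k$, so by Definition \ref{crdh} a vector $\boldsymbol{u}\in\mathbb{F}_{q^m}^n\setminus\mathcal{C}_1$ is a deep hole if and only if $d_R(\boldsymbol{u},\mathcal{C}_1)=n-k$; since $d_R(\boldsymbol{u},\mathcal{C}_1)\le\rho_{R}(\mathcal{C}_1)=n-k$ always holds, this is the same as $d_R(\boldsymbol{u},\mathcal{C}_1)\ge n-k$. On the other side, since $\boldsymbol{u}\notin\mathcal{C}_1$ the $k+1$ rows of $G$ are $\mathbb{F}_{q^m}$-linearly independent, so $\mathcal{C}$ is an $[n,k+1]$ code and is MRD exactly when $d_R(\mathcal{C})=n-(k+1)+1=n-k$. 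Thus it suffices to show that $d_R(\boldsymbol{u},\mathcal{C}_1)=n-k$ if and only if $d_R(\mathcal{C})=n-k$.

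The core step is the identity
$$d_R(\mathcal{C})=\min\{d_R(\mathcal{C}_1),\,d_R(\boldsymbol{u},\mathcal{C}_1)\}.$$
To prove it I would write a general nonzero codeword of $\mathcal{C}=\mathcal{C}_1\oplus\mathbb{F}_{q^m}\boldsymbol{u}$ as $\boldsymbol{c}+a\boldsymbol{u}$ with $\boldsymbol{c}\in\mathcal{C}_1$ and $a\in\mathbb{F}_{q^m}$, and split on $a$. For $a=0$ the relevant minimum over $\boldsymbol{c}\neq\boldsymbol{0}$ is exactly $d_R(\mathcal{C}_1)$. For $a\neq 0$ I use that multiplying every coordinate by the nonzero scalar $a^{-1}$ is an $\mathbb{F}_q$-linear bijection of $\mathbb{F}_{q^m}$, hence preserves rank weight, so $\mathrm{rk}_{\mathbb{F}_{q}}(\boldsymbol{c}+a\boldsymbol{u})=\mathrm{rk}_{\mathbb{F}_{q}}(\boldsymbol{u}+a^{-1}\boldsymbol{c})$; as $\boldsymbol{c}$ ranges over $\mathcal{C}_1$ and $a$ over $\mathbb{F}_{q^m}^{\ast}$, the vector $-a^{-1}\boldsymbol{c}$ ranges over all of $\mathcal{C}_1$, so the minimum of these rank weights equals $\min_{\boldsymbol{c}'\in\mathcal{C}_1}\mathrm{rk}_{\mathbb{F}_{q}}(\boldsymbol{u}-\boldsymbol{c}')=d_R(\boldsymbol{u},\mathcal{C}_1)$. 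Taking the smaller of the two cases yields the identity.

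Finally I would combine the bounds. From the proof of Theorem \ref{crho1}, $\mathcal{C}_1\subsetneq\mathcal{G}_{n,k+1}(\boldsymbol{\alpha})$, and since the Gabidulin code $\mathcal{G}_{n,k+1}(\boldsymbol{\alpha})$ is MRD with $d_R=n-k$, every nonzero codeword of the subcode $\mathcal{C}_1$ has rank weight at least $n-k$; hence $d_R(\mathcal{C}_1)\ge n-k$. Together with $d_R(\boldsymbol{u},\mathcal{C}_1)\le n-k$ and the identity above, $d_R(\mathcal{C})=\min\{d_R(\mathcal{C}_1),\,d_R(\boldsymbol{u},\mathcal{C}_1)\}$ equals $n-k$ precisely when $d_R(\boldsymbol{u},\mathcal{C}_1)=n-k$, which is the desired equivalence. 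The only genuinely delicate point is the $a\neq 0$ reduction inside the identity, namely verifying that the rank-preserving rescaling combined with the reparametrization $\boldsymbol{c}\mapsto-a^{-1}\boldsymbol{c}$ really converts the coset minima into $d_R(\boldsymbol{u},\mathcal{C}_1)$; once that is in place, the rest is bookkeeping with the Singleton-type bounds.
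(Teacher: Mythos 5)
Your proposal is correct and follows essentially the same route as the paper's proof: both rest on the identity $d_R(\mathcal{C})=\min\{d_R(\mathcal{C}_1),\,d_R(\boldsymbol{u},\mathcal{C}_1)\}$, the bound $d_R(\mathcal{C}_1)\ge n-k$ coming from $\mathcal{C}_1\subsetneq\mathcal{G}_{n,k+1}(\boldsymbol{\alpha})$, and the bound $d_R(\boldsymbol{u},\mathcal{C}_1)\le\rho_R(\mathcal{C}_1)=n-k$ from Theorem \ref{crho1}. The only difference is that you supply the scalar-rescaling argument proving the min-identity (which the paper merely asserts with ``Note that''), a worthwhile but inessential addition.
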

\begin{proof}
Note that $$d_R(\mathcal{C})=\mathrm{min}\{d_R(\mathcal{C}_1),d_R(\boldsymbol{u},\mathcal{C}_1)\}.$$
From the proof of Theorem \ref{crho1}, since $\mathcal{C}_1\subsetneq \mathcal{G}_{n,k+1}(\boldsymbol{\alpha})$, we have
\begin{equation}\label{en1}
d_R(\mathcal{C}_1)\geq d_R(\mathcal{G}_{n,k+1}(\boldsymbol{\alpha}))=n-k.
\end{equation}
By Theorem \ref{crho1} and Definition \ref{crdh}, we have
\begin{equation}\label{en2}
\rho_{R}(\mathcal{C}_1)=n-k\geq d_R(\boldsymbol{u},\mathcal{C}_1).
\end{equation}
It follows from (\ref{en1}) and \eqref{en2} that $d_R(\mathcal{C}_1)\geq d_R(\boldsymbol{u},\mathcal{C}_1)$.
Hence, we immediately obtain  $$d_R(\mathcal{C})=d_R(\boldsymbol{u},\mathcal{C}_1).$$
 Therefore, $\boldsymbol{u}$ is a deep hole of $\mathcal{C}_1$ if and only if $d_R(\boldsymbol{u},\mathcal{C}_1)=\rho_{R}(\mathcal{C}_1)=n-k$, if and only if $\mathcal{C}$ generated by $G$ is an $[n,k+1]$ MRD code. It completes the proof.
\end{proof}

According to Theorem \ref{dh1}, we can obtain two classes of deep holes of the twisted Gabidulin code $\mathcal{C}_1$ with $t=0$.
\begin{theorem}\label{dh2}
Let $\mathcal{C}_{1}=\mathrm{ev}_{\boldsymbol{\alpha}}(\mathcal{P}_{1})$ be an $[n,k]$ twisted Gabidulin code, where $\mathcal{P}_{1}$ is given in (\ref{stp1}) and $\boldsymbol{\alpha}=[\alpha_{1},\ldots,\alpha_{n}]\in \mathbb{F}_{q^{m}}^{n}$ with $\mathrm{rk}_{\mathbb{F}_{q}}(\boldsymbol{\alpha})=n$. Let $g(x)=gx^{[k]}+f(x)$, where $g\in\mathbb{F}^{*}_{q^m}$ and $f(x)\in P_{1}$. Let $\boldsymbol{u}_{g}=(g(\alpha_1),\ldots,g(\alpha_n))\in \mathbb{F}^{n}_{q^m}$. Then $\boldsymbol{u}_g$ is a deep hole of $\mathcal{C}_1$.
\end{theorem}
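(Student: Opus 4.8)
The plan is to reduce everything to Theorem \ref{dh1}. That result says $\boldsymbol{u}_g$ is a deep hole of $\mathcal{C}_1$ precisely when the augmented matrix $G=\begin{bmatrix} G_1 \\ \boldsymbol{u}_g \end{bmatrix}$ generates an $[n,k+1]$ MRD code (the hypothesis $\boldsymbol{u}_g\in\mathbb{F}_{q^m}^n\setminus\mathcal{C}_1$ will be checked at the end). So the whole task becomes: identify the row space of $G$ explicitly and recognize it as a Gabidulin code, which is automatically MRD.

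First I would compute $\boldsymbol{u}_g$ modulo $\mathcal{C}_1$. Since $f(x)\in\mathcal{P}_1$, its evaluation satisfies $\mathrm{ev}_{\boldsymbol{\alpha}}(f(x))\in\mathcal{C}_1$, and hence $\boldsymbol{u}_g=g\boldsymbol{\alpha}^{[k]}+\mathrm{ev}_{\boldsymbol{\alpha}}(f(x))\equiv g\boldsymbol{\alpha}^{[k]}\pmod{\mathcal{C}_1}$. Because $g\neq 0$, adjoining $\boldsymbol{u}_g$ to $\mathcal{C}_1$ is the same as adjoining $\boldsymbol{\alpha}^{[k]}$, so the row space of $G$ equals $\mathcal{C}_1+\langle\boldsymbol{\alpha}^{[k]}\rangle_{\mathbb{F}_{q^m}}$. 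I would then simplify this span using the description $\mathcal{C}_1=\langle\boldsymbol{\alpha}^{[h]}+\eta\boldsymbol{\alpha}^{[k]},\ \boldsymbol{\alpha}^{[s]}\,(s\in\{0,\ldots,k-1\}\setminus\{h\})\rangle_{\mathbb{F}_{q^m}}$: once $\boldsymbol{\alpha}^{[k]}$ is available, the twisted generator $\boldsymbol{\alpha}^{[h]}+\eta\boldsymbol{\alpha}^{[k]}$ yields $\boldsymbol{\alpha}^{[h]}$, so the row space of $G$ collapses to $\langle\boldsymbol{\alpha}^{[0]},\boldsymbol{\alpha}^{[1]},\ldots,\boldsymbol{\alpha}^{[k]}\rangle_{\mathbb{F}_{q^m}}$, which is exactly the Gabidulin code $\mathcal{G}_{n,k+1}(\boldsymbol{\alpha})$. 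As a Gabidulin code it is MRD, and Theorem \ref{dh1} then delivers the claim.

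It remains to justify $\boldsymbol{u}_g\notin\mathcal{C}_1$, which I would get from a dimension count: $\mathcal{C}_1$ has dimension $k$, whereas the span $\mathcal{G}_{n,k+1}(\boldsymbol{\alpha})$ has dimension $k+1$ since $\boldsymbol{\alpha}^{[0]},\ldots,\boldsymbol{\alpha}^{[k]}$ are $\mathbb{F}_{q^m}$-linearly independent (the associated square Moore matrix is invertible by Lemma \ref{dk}, using $\mathrm{rk}_{\mathbb{F}_q}(\boldsymbol{\alpha})=n\geq k+1$). Hence $\boldsymbol{\alpha}^{[k]}\notin\mathcal{C}_1$, and therefore $\boldsymbol{u}_g\notin\mathcal{C}_1$. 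This argument is essentially bookkeeping rather than a genuine difficulty; the only point that needs care is the clean reduction $\boldsymbol{u}_g\equiv g\boldsymbol{\alpha}^{[k]}\pmod{\mathcal{C}_1}$, which relies crucially on $f(x)$ being drawn from $\mathcal{P}_1$ so that its evaluation already lies inside $\mathcal{C}_1$.
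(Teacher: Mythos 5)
Your proposal is correct and follows essentially the same route as the paper: both arguments show that adjoining $\boldsymbol{u}_g$ to $\mathcal{C}_1$ produces exactly the Gabidulin code $\mathcal{G}_{n,k+1}(\boldsymbol{\alpha})$ (the paper by explicit row reduction of $\begin{bmatrix} G_1 \\ \boldsymbol{u}_g \end{bmatrix}$, you by the equivalent span computation $\boldsymbol{u}_g \equiv g\boldsymbol{\alpha}^{[k]} \pmod{\mathcal{C}_1}$), and then invoke Theorem \ref{dh1}. Your additional dimension-count verification that $\boldsymbol{u}_g \notin \mathcal{C}_1$ is a point the paper leaves implicit, and is a welcome extra bit of rigor rather than a different approach.
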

\begin{proof}
Let the matrix $G_1$ given in (\ref{cdc1}) be a generator matrix of $\mathcal{C}_1$. Then we have
$$\begin{bmatrix}
G_1 \\
\boldsymbol{u}_{g}
\end{bmatrix}=\left[\begin{array}{cccc}
\alpha_{1} & \alpha_{2} & \cdots & \alpha_{n} \\
\vdots &\vdots &  & \vdots \\
\alpha_{1}^{[h-1]} & \alpha_{2}^{[h-1]} & \cdots & \alpha_{n}^{[h-1]} \\
\alpha_{1}^{[h]}+\eta \alpha_{1}^{[k]} &\alpha_{2}^{[h]}+\eta \alpha_{2}^{[k]} & \cdots & \alpha_{n}^{[h]}+\eta \alpha_{n}^{[k]} \\
\alpha_{1}^{[h+1]} & \alpha_{2}^{[h+1]} & \cdots & \alpha_{n}^{[h+1]} \\
\vdots & \vdots & & \vdots \\
\alpha_{1}^{[k-1]} & \alpha_{2}^{[k-1]} & \cdots & \alpha_{n}^{[k-1]}\\
g\alpha_{1}^{[k]}+f(\alpha_{1}) & g\alpha_{2}^{[k]}+f(\alpha_{2}) & \cdots &g\alpha_{n}^{[k]}+f(\alpha_{n})
\end{array}\right],$$
where $f(\alpha_{j})=\sum_{i=0}^{k-1} f_i \alpha_{j}^{[i]}+\eta f_{h} \alpha_{j}^{[k]}$ for $1\leq j\leq n$. It is clear that $\begin{bmatrix}
G_1 \\
\boldsymbol{u}_{g}
\end{bmatrix}$ is row equivalent to
$$
\left[\begin{array}{cccc}
\alpha_{1} & \alpha_{2} & \cdots & \alpha_{n} \\
\vdots &\vdots &  & \vdots \\
\alpha_{1}^{[k-1]} & \alpha_{2}^{[k-1]} & \cdots & \alpha_{n}^{[k-1]}\\
\alpha_{1}^{[k]} & \alpha_{2}^{[k]} & \cdots & \alpha_{n}^{[k]}\\
\end{array}\right],$$
which generates the Gabidulin code $\mathcal{G}_{n,k+1}(\boldsymbol{\alpha})$. By Theorem \ref{dh1}, $\boldsymbol{u}_{g}$ is a deep hole of $\mathcal{C}_1$.
\end{proof}
\begin{theorem}
 Let $\mathcal{C}_{1}=\mathrm{ev}_{\boldsymbol{\alpha}}(\mathcal{P}_{1})$ be an $[n,k]$  twisted Gabidulin code,  where $\mathcal{P}_{1}$ is given in (\ref{stp1}) and $\boldsymbol{\alpha}=[\alpha_{1},\ldots,\alpha_{n}]\in \mathbb{F}_{q^{m}}^{n}$ with $\mathrm{rk}_{\mathbb{F}_{q}}(\boldsymbol{\alpha})=n$.  Let $g(x)=gx^{[h]}+f(x)$, where $g\in\mathbb{F}^{*}_{q^m}$ and $f(x)\in P_{1}$. Let $\boldsymbol{u}_{g}=(g(\alpha_1),\ldots,g(\alpha_n))\in \mathbb{F}^{n}_{q^m}$. Then $\boldsymbol{u}_g$ is a deep hole of $\mathcal{C}_1$.
\end{theorem}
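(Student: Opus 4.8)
The plan is to invoke the deep-hole criterion established in Theorem \ref{dh1}: the vector $\boldsymbol{u}_g$ is a deep hole of $\mathcal{C}_1$ if and only if the stacked matrix $G=\begin{bmatrix} G_1 \\ \boldsymbol{u}_g \end{bmatrix}$ generates an $[n,k+1]$ MRD code, where $G_1$ is the generator matrix in (\ref{cdc1}). Thus it suffices to show that $G$ is row equivalent to the Moore matrix $M_{k+1}(\boldsymbol{\alpha})$, since $M_{k+1}(\boldsymbol{\alpha})$ generates the Gabidulin code $\mathcal{G}_{n,k+1}(\boldsymbol{\alpha})$, which is MRD. This mirrors the strategy of the preceding Theorem \ref{dh2}, the only change being the choice of the extra monomial $x^{[h]}$ in place of $x^{[k]}$.

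First I would write the last row explicitly. Since $f(x)=\sum_{i=0}^{k-1} f_i x^{[i]}+\eta f_h x^{[k]}\in\mathcal{P}_1$ and $g(x)=gx^{[h]}+f(x)$, evaluation at $\boldsymbol{\alpha}$ gives
$$\boldsymbol{u}_g=(g+f_h)\boldsymbol{\alpha}^{[h]}+\eta f_h\,\boldsymbol{\alpha}^{[k]}+\sum_{\substack{i=0\\ i\neq h}}^{k-1} f_i\,\boldsymbol{\alpha}^{[i]}.$$
Next I would reduce $\boldsymbol{u}_g$ using the rows of $G_1$, which are $\boldsymbol{\alpha}^{[s]}$ for $s\in\{0,\ldots,k-1\}\setminus\{h\}$ together with the twisted row $\boldsymbol{\alpha}^{[h]}+\eta\boldsymbol{\alpha}^{[k]}$. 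Subtracting $f_i\boldsymbol{\alpha}^{[i]}$ for each $i\neq h$ clears the last sum, and then subtracting $f_h(\boldsymbol{\alpha}^{[h]}+\eta\boldsymbol{\alpha}^{[k]})$ cancels both the residual $f_h\boldsymbol{\alpha}^{[h]}$ and the $\eta f_h\boldsymbol{\alpha}^{[k]}$ term, leaving exactly $g\boldsymbol{\alpha}^{[h]}$. As $g\in\mathbb{F}_{q^m}^{*}$, the reduced last row is a nonzero scalar multiple of $\boldsymbol{\alpha}^{[h]}$.

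Then I would conclude that the row space of $G$ contains $\boldsymbol{\alpha}^{[h]}$; combined with the twisted row $\boldsymbol{\alpha}^{[h]}+\eta\boldsymbol{\alpha}^{[k]}$ and $\eta\neq 0$, it therefore also contains $\boldsymbol{\alpha}^{[k]}$, and together with the remaining rows of $G_1$ it contains all of $\boldsymbol{\alpha}^{[0]},\ldots,\boldsymbol{\alpha}^{[k]}$. Conversely every row of $G$ is an $\mathbb{F}_{q^m}$-combination of these, so the row space of $G$ coincides with that of $M_{k+1}(\boldsymbol{\alpha})$. Hence $G$ generates $\mathcal{G}_{n,k+1}(\boldsymbol{\alpha})$, which is MRD, and Theorem \ref{dh1} yields that $\boldsymbol{u}_g$ is a deep hole of $\mathcal{C}_1$.

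The step demanding the most care is the row reduction of $\boldsymbol{u}_g$: because neither $\boldsymbol{\alpha}^{[h]}$ nor $\boldsymbol{\alpha}^{[k]}$ occurs individually among the rows of $G_1$ (only their twisted combination $\boldsymbol{\alpha}^{[h]}+\eta\boldsymbol{\alpha}^{[k]}$ does), one must track the coefficients $(g+f_h)$ of $\boldsymbol{\alpha}^{[h]}$ and $\eta f_h$ of $\boldsymbol{\alpha}^{[k]}$ simultaneously and verify that subtracting precisely $f_h$ times the twisted row annihilates the $\boldsymbol{\alpha}^{[k]}$ contribution while leaving the clean residue $g\boldsymbol{\alpha}^{[h]}$. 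Unlike Theorem \ref{dh2}, where the surviving row is directly a multiple of $\boldsymbol{\alpha}^{[k]}$, here one first recovers $\boldsymbol{\alpha}^{[h]}$ and only afterwards extracts $\boldsymbol{\alpha}^{[k]}$ through the twist; the linear independence of $\boldsymbol{\alpha}^{[0]},\ldots,\boldsymbol{\alpha}^{[k]}$ over $\mathbb{F}_{q^m}$, guaranteed by Lemma \ref{dk} since $\mathrm{rk}_{\mathbb{F}_q}(\boldsymbol{\alpha})=n\geq k+1$, ensures that the resulting $(k+1)$ rows indeed span the full row space of $M_{k+1}(\boldsymbol{\alpha})$.
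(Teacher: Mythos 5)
Your proposal is correct and is precisely the argument the paper intends: the paper omits this proof, stating only that it is ``similar to that of Theorem \ref{dh2},'' and your row reduction (clearing the $f_i$ terms, then subtracting $f_h$ times the twisted row to leave $g\boldsymbol{\alpha}^{[h]}$, recovering $\boldsymbol{\alpha}^{[k]}$ through the twist, and invoking Theorem \ref{dh1} via row equivalence with $M_{k+1}(\boldsymbol{\alpha})$) is exactly the adaptation of that template to the monomial $x^{[h]}$. No gaps; your verification that the stacked matrix has full rank $k+1$ also implicitly confirms the hypothesis $\boldsymbol{u}_g\notin\mathcal{C}_1$ required by Theorem \ref{dh1}.
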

\begin{proof}
The proof is similar to that of Theorem \ref{dh2}, thus we omit it.
\end{proof}
We can also characterize the covering radii of twisted Gabidulin codes with two twists and with $\ell$ twists.
\begin{theorem}\label{crt2} Let $\mathcal{C}_{2}=\mathrm{ev}_{\boldsymbol{\alpha}}(\mathcal{P}_{2})$ be an $[n,k]$ twisted Gabidulin code, where $\mathcal{P}_{2}$ is given in (\ref{stp2}) and $\boldsymbol{\alpha}=[\alpha_{1},\ldots,\alpha_{n}]\in \mathbb{F}_{q^{m}}^{n}$ with $\mathrm{rk}_{\mathbb{F}_{q}}(\boldsymbol{\alpha})=n$. It holds that
$$n-k-1\leq \rho_{R}(\mathcal{C}_2)\leq n-k.$$
\end{theorem}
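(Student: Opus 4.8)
The plan is to sandwich $\rho_R(\mathcal{C}_2)$ between the two bounds using only tools already established, namely Lemma \ref{cra} for the upper bound and Lemma \ref{cr}(ii) together with a suitably chosen containing code for the lower bound. The whole argument is a direct parallel of the one-twist case in Theorem \ref{crho1}; the only structural change is that the extra monomial forces the containing Gabidulin code to grow by one more dimension.

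For the upper bound I would simply invoke Lemma \ref{cra}. Since $\mathcal{C}_2$ is an $[n,k]$ linear rank metric code with $k<n\le m$, it yields $\rho_R(\mathcal{C}_2)\le n-k$ with no further work.

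For the lower bound, the key observation is that in the specialization $(t_1,t_2)=(0,1)$ of (\ref{stp2}) the twisted row $\boldsymbol{\alpha}^{[h]}+\eta_1\boldsymbol{\alpha}^{[k]}+\eta_2\boldsymbol{\alpha}^{[k+1]}$ of the generator matrix $G_2$ lies in the $\mathbb{F}_{q^m}$-span of $\boldsymbol{\alpha}^{[0]},\ldots,\boldsymbol{\alpha}^{[k+1]}$, that is, of the rows of the Moore matrix $M_{k+2}(\boldsymbol{\alpha})$. Every remaining row of $G_2$ is one of $\boldsymbol{\alpha}^{[0]},\ldots,\boldsymbol{\alpha}^{[k-1]}$, so all rows of $G_2$ belong to that span, giving $\mathcal{C}_2\subseteq\mathcal{G}_{n,k+2}(\boldsymbol{\alpha})$. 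The hypothesis $t_2=1\le n-k-1$ guarantees $k+2\le n\le m$, so $\mathcal{G}_{n,k+2}(\boldsymbol{\alpha})$ is a well-defined Gabidulin code of dimension $k+2$ (its defining Moore matrix has rank $k+2$ because $\alpha_1,\ldots,\alpha_n$ are $\mathbb{F}_q$-linearly independent, cf.\ Lemma \ref{dk}). Since $\dim_{\mathbb{F}_{q^m}}\mathcal{C}_2=k<k+2=\dim_{\mathbb{F}_{q^m}}\mathcal{G}_{n,k+2}(\boldsymbol{\alpha})$, the inclusion is strict.

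Finally I would apply Lemma \ref{cr}(ii) to the strict inclusion $\mathcal{C}_2\subsetneq\mathcal{G}_{n,k+2}(\boldsymbol{\alpha})$, obtaining $\rho_R(\mathcal{C}_2)\ge d_R(\mathcal{G}_{n,k+2}(\boldsymbol{\alpha}))$. Because Gabidulin codes are MRD, $d_R(\mathcal{G}_{n,k+2}(\boldsymbol{\alpha}))=n-(k+2)+1=n-k-1$, whence $\rho_R(\mathcal{C}_2)\ge n-k-1$; combining with the upper bound finishes the proof. The argument has no real obstacle: the only point needing care is the dimension bookkeeping that certifies both $k+2\le n$ and the strictness of the inclusion. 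This bookkeeping also explains why the conclusion is a range rather than an exact value—the second twist enlarges the guaranteed containing code from $\mathcal{G}_{n,k+1}(\boldsymbol{\alpha})$ (used for $\mathcal{C}_1$) to $\mathcal{G}_{n,k+2}(\boldsymbol{\alpha})$, which lowers the forced lower bound by one while the generic upper bound $n-k$ is unchanged.
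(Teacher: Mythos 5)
Your proof is correct and follows essentially the same route as the paper: embed $\mathcal{C}_2$ strictly into the Gabidulin code $\mathcal{G}_{n,k+2}(\boldsymbol{\alpha})$, apply Lemma \ref{cr}(ii) with $d_R(\mathcal{G}_{n,k+2}(\boldsymbol{\alpha}))=n-k-1$ for the lower bound, and use Lemma \ref{cra} for the upper bound. Your version merely spells out the bookkeeping (the span argument for the inclusion, the dimension count for strictness, and $t_2\le n-k-1$ giving $k+2\le n$) that the paper leaves implicit.
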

\begin{proof}
Since  $\mathcal{C}_2$ has generator matrix
$$
\left[\begin{array}{cccc}
\alpha_{1} & \alpha_{2} & \ldots & \alpha_{n} \\
\vdots & \vdots &  & \vdots \\
\alpha_{1}^{[h]}+\eta_1 \alpha_{1}^{[k]}+\eta_2\alpha_{1}^{[k+1]} & \alpha_{2}^{[h]}+\eta_1\alpha_{2}^{[k]}+\eta_2\alpha_{2}^{[k+1]} & \ldots & \alpha_{n}^{[h]}+\eta_1\alpha_{n}^{[k]}+\eta_2\alpha_{n}^{[k+1]} \\
\vdots & \vdots &  & \vdots \\
\alpha_{1}^{[k-1]} & \alpha_{2}^{[k-1]} & \ldots & \alpha_{n}^{[k-1]}
\end{array}\right],
$$
we can easily obtain $\mathcal{C}_2\subsetneq \mathcal{C}$, where $\mathcal{C}$ is an $[n,k+2]$ linear code with a generator matrix of the form
$$
\left[\begin{array}{cccc}
\alpha_{1} & \alpha_{2} & \ldots & \alpha_{n} \\
\vdots & \vdots &  & \vdots \\
\alpha_{1}^{[k-1]} & \alpha_{2}^{[k-1]} & \ldots & \alpha_{n}^{[k-1]}\\
\alpha_{1}^{[k]} & \alpha_{2}^{[k]} & \ldots & \alpha_{n}^{[k]}\\
\alpha_{1}^{[k+1]} & \alpha_{2}^{[k+1]} & \ldots & \alpha_{n}^{[k+1]}
\end{array}\right].
$$
Since $\mathcal{C}$ is the  Gabidulin code $\mathcal{G}_{n,k+2}(\boldsymbol{\alpha})$, we have $d_R(\mathcal{C})=n-k-1$. By Lemma \ref{cr}, we obtain $\rho_{R}(\mathcal{C}_2)\geq d_R(\mathcal{C})=n-k-1$. It follows from Lemma \ref{cra} that $\rho_{R}(\mathcal{C}_2)\leq n-k$. Therefore, $n-k-1\leq \rho_{R}(\mathcal{C}_2)\leq n-k$ holds.
\end{proof}
\begin{theorem}
Let $\mathcal{C}_{3}=\mathrm{ev}_{\boldsymbol{\alpha}}(\mathcal{P}_{3})$ be an $[n,k]$ twisted Gabidulin code, where
 $$
 \mathcal{P}_{3}=\left\{f(x)=\sum_{i=0}^{k-1} f_i x^{[i]}+f_{h}\sum_{j=0}^{\ell-1}\eta_{j+1}x^{[k+j]}: f_i \in \mathbb{F}_{q^m},0\leq i\leq k-1 \right\}
 $$
 and $\boldsymbol{\alpha}=[\alpha_{1},\ldots,\alpha_{n}]\in \mathbb{F}_{q^{m}}^{n}$ with $\mathrm{rk}_{\mathbb{F}_{q}}(\boldsymbol{\alpha})=n$.
 It holds that
$$n-k-\ell+1 \leq \rho_{R}(\mathcal{C}_3)\leq n-k.$$
\end{theorem}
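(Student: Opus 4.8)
The plan is to mimic the proofs of Theorem \ref{crho1} and Theorem \ref{crt2}, exploiting the fact that here the $\ell$ twists sit at the \emph{consecutive} degrees $k,k+1,\ldots,k+\ell-1$. First I would write down the generator matrix of this $\mathcal{C}_3$: its rows are $\boldsymbol{\alpha}^{[i]}$ for $i\in\{0,\ldots,k-1\}\setminus\{h\}$ together with the single twisted row $\boldsymbol{\alpha}^{[h]}+\sum_{j=0}^{\ell-1}\eta_{j+1}\boldsymbol{\alpha}^{[k+j]}$.

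The key observation is that every row lies in the row space of the $(k+\ell)\times n$ Moore matrix whose rows are $\boldsymbol{\alpha}^{[0]},\ldots,\boldsymbol{\alpha}^{[k+\ell-1]}$: the untwisted rows are literally rows of that matrix, and the twisted row is the $\mathbb{F}_{q^m}$-linear combination of the rows $\boldsymbol{\alpha}^{[h]},\boldsymbol{\alpha}^{[k]},\ldots,\boldsymbol{\alpha}^{[k+\ell-1]}$, all of whose exponents are at most $k+\ell-1$ since $h\le k-1$. Hence $\mathcal{C}_3\subseteq\mathcal{G}_{n,k+\ell}(\boldsymbol{\alpha})$. Because $\mathrm{rk}_{\mathbb{F}_q}(\boldsymbol{\alpha})=n$ and $k+\ell\le n$ (by the standing hypothesis $\ell\le n-k$), Lemma \ref{dk} guarantees that $\boldsymbol{\alpha}^{[0]},\ldots,\boldsymbol{\alpha}^{[k+\ell-1]}$ are $\mathbb{F}_{q^m}$-linearly independent, so $\mathcal{G}_{n,k+\ell}(\boldsymbol{\alpha})$ has dimension $k+\ell>k=\dim\mathcal{C}_3$; the inclusion is therefore proper, $\mathcal{C}_3\subsetneq\mathcal{G}_{n,k+\ell}(\boldsymbol{\alpha})$.

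From the proper inclusion and Lemma \ref{cr}(ii) I obtain $\rho_R(\mathcal{C}_3)\ge d_R(\mathcal{G}_{n,k+\ell}(\boldsymbol{\alpha}))$. Since Gabidulin codes are MRD, $d_R(\mathcal{G}_{n,k+\ell}(\boldsymbol{\alpha}))=n-(k+\ell)+1=n-k-\ell+1$, which yields the lower bound. (In the boundary case $k+\ell=n$ the containing code is all of $\mathbb{F}_{q^m}^n$, whose minimum rank distance is $1=n-k-\ell+1$, so the same bound holds.) The upper bound is immediate from the redundancy bound of Lemma \ref{cra}, namely $\rho_R(\mathcal{C}_3)\le n-k$. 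Combining the two inequalities gives $n-k-\ell+1\le\rho_R(\mathcal{C}_3)\le n-k$.

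The argument is essentially routine once the consecutive-degree structure is noticed; the only point requiring a little care will be checking that the twisted row does not force a dimension larger than $k+\ell$, i.e., that the highest Frobenius power appearing anywhere in the generator matrix is exactly $k+\ell-1$. This is precisely where the hypothesis that the twists occupy the consecutive degrees $k,\ldots,k+\ell-1$ (rather than arbitrary $t_1<\cdots<t_\ell$) is used: for non-consecutive twists the smallest containing Gabidulin code would have dimension $k+t_\ell+1$, giving only the weaker lower bound $n-k-t_\ell$.
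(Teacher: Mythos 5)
Your proof is correct and is exactly the argument the paper intends: the paper omits this proof as ``similar to Theorem \ref{crt2}'', and your embedding $\mathcal{C}_3\subsetneq\mathcal{G}_{n,k+\ell}(\boldsymbol{\alpha})$ (using the consecutive twist degrees), combined with Lemma \ref{cr}(ii) for the lower bound and Lemma \ref{cra} for the upper bound, is precisely that generalization. Your extra care about the boundary case $k+\ell=n$ and your remark on why consecutiveness matters are correct refinements of the same approach.
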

\begin{proof}
The proof is similar to that of Theorem \ref{crt2}, so we omit it.
\end{proof}

\section{Conclusion}
In this paper, we studied three classes of twisted Gabidulin codes in both the rank metric and Hamming metric. We established necessary and sufficient conditions for them to be MRD, MDS, AMDS, or NMDS, determined the conditions under which they are not MRD codes, and constructed several classes of MRD twisted Gabidulin codes. Moreover, we analyzed the covering radii of twisted Gabidulin codes and obtained two classes of deep holes for certain twisted Gabidulin codes. In summary, our work focused only on the case of twists at the same position. For the twisted Gabidulin codes with twists at different positions, their covering radii and deep holes remain to be further investigated. In addition, we expect more classes of MRD codes to be constructed.



\bibliographystyle{model1a-num-names}

\end{document}